\let\originalleft\left
\let\originalright\right
\renewcommand{\left}{\mathopen{}\mathclose\bgroup\originalleft}
\renewcommand{\right}{\aftergroup\egroup\originalright}
\newcommand{\doroverline}[2]{\overline{#1#2}}
\newcommand{\roverline}[1]{\mathpalette\doroverline{#1}}
\newlist{romanlist}{enumerate}{3}
\setlist[romanlist]{label=\roman*),ref=(\roman*)}
\begin{document}

\newcommand{\cF}{\mathcal{F}}
\newcommand{\cR}{\mathcal{R}}
\newcommand{\cS}{\mathcal{S}}
\newcommand{\cT}{\mathcal{T}}
\newcommand{\cW}{\mathcal{W}}
\newcommand{\ee}{\varepsilon}
\newcommand{\rD}{{\rm D}}
\newcommand{\re}{{\rm e}}

\newtheorem{theorem}{Theorem}[section]
\newtheorem{lemma}[theorem]{Lemma}
\newtheorem{proposition}[theorem]{Proposition}

\theoremstyle{definition}
\newtheorem{definition}{Definition}[section]


\title{
Renormalisation of the two-dimensional border-collision normal form.
}
\author{
I.~Ghosh and
D.J.W.~Simpson\\\\
School of Fundamental Sciences\\
Massey University\\
Palmerston North\\
New Zealand
}
\maketitle


\begin{abstract}

We study the two-dimensional border-collision normal form
(a four-parameter family of continuous, piecewise-linear maps on $\mathbb{R}^2$)
in the robust chaos parameter region of
[S.~Banerjee, J.A.~Yorke, C.~Grebogi, Robust Chaos, {\em Phys. Rev. Lett.}~80(14):3049--3052, 1998].
We use renormalisation to partition this region
by the number of connected components of a chaotic Milnor attractor.
This reveals previously undescribed bifurcation structure in a succinct way.

\end{abstract}

\section{Introduction}
\label{sec:intro}
\setcounter{equation}{0}

Piecewise-linear maps can exhibit complicated dynamics
yet are relatively amenable to an exact analysis.
For this reason they provide a useful tool for us to explore complex aspects of dynamical systems, such as chaos.
They arise as approximations 
to certain types of grazing bifurcations of piecewise-smooth ODE systems \cite{DiBu08},
and are used as mathematical models, particularly in social sciences \cite{PuSu06}.

In this paper we study the family of maps
\begin{equation}
(x,y) \mapsto f_\xi(x,y) = \begin{cases}
\begin{bmatrix} \tau_L x + y + 1 \\ -\delta_L x \end{bmatrix}, & x \le 0, \\
\begin{bmatrix} \tau_R x + y + 1 \\ -\delta_R x \end{bmatrix}, & x \ge 0,
\end{cases}
\label{eq:f}
\end{equation}
where
\begin{equation}
\xi = \left( \tau_L, \delta_L, \tau_R, \delta_R \right).
\label{eq:xi}
\end{equation}
With $(x,y) \in \mathbb{R}^2$ and $\xi \in \mathbb{R}^4$,
this is the two-dimensional border-collision normal form \cite{NuYo92},
except the border-collision bifurcation parameter (often denoted $\mu$) has been scaled to $1$.
It is a normal form in the sense that any continuous, piecewise-linear map with two pieces
for which the image of the switching line intersects the switching line at a unique point that is not a fixed point,
can be transformed to \eqref{eq:f} under an affine change of coordinates, see for instance \cite{Si20e}.
With $\tau_R = -\tau_L$ and $\delta_L = \delta_R$, \eqref{eq:f} reduces to the well-studied Lozi map \cite{Lo78}.

While \eqref{eq:f} appears simple its dynamics can be remarkably rich \cite{BaGr99,Gl16e,Si14,SiMe08b,ZhMo06b}.
In \cite{BaYo98} Banerjee, Yorke, and Grebogi identified an
open parameter region $\Phi_{\rm BYG} \subset \mathbb{R}^4$ (defined below)
throughout which $f_\xi$ has a chaotic attractor, and this was shown formally in \cite{GlSi21}.
Their work popularised the notion that families of piecewise-linear maps typically exhibit chaos in a robust fashion.
This is distinct from families of one-dimensional unimodal maps --- often promoted as a paradigm for chaos
--- that have dense windows of periodicity \cite{GrSw97,Ly97}.
Robust chaos had already been demonstrated by Misiurewicz in the Lozi map \cite{Mi80},
but by studying the border-collision normal form, Banerjee, Yorke, and Grebogi
showed that robust chaos occurs for generic families of piecewise-linear maps.

However, while $f_\xi$ has a chaotic attractor for all $\xi \in \Phi_{\rm BYG}$,
the attractor undergoes bifurcations, or crises \cite{GrOt83}, as the value of $\xi$ is varied within $\Phi_{\rm BYG}$.
The purpose of this paper is to reveal bifurcation structure within $\Phi_{\rm BYG}$
and we achieve this via renormalisation.

Broadly speaking, renormalisation involves showing that, for some member of a family of maps,
a higher iterate or induced map is conjugate to a different member of this family \cite{Ma93}.
By employing this relationship recursively one can obtain far-reaching results.
Renormalisation is central for understanding generic families of one-dimensional maps \cite{CoEc80,DeVa93}.
For instance, Feigenbaum's constant ($4.6692\ldots$)
for the scaling of period-doubling cascades
is the eigenvalue with largest modulus of a fixed point of a renormalisation operator for unimodal maps.

For the one-dimensional analogue of \eqref{eq:f} (skew tent maps)
the bifurcation structure was determined by Ito {\em et.~al.}~\cite{ItTa79b} via renormalisation, see also \cite{VeGl90}.
More recently renormalisation was applied to a two-parameter family
of two-dimensional, piecewise-linear maps in \cite{PuRo18,PuRo19}.
Their results show that for any $n \ge 1$ there exists $\xi \in \mathbb{R}^4$
such that \eqref{eq:f} has $2^n$ coexisting chaotic attractors.

We apply renormalisation to \eqref{eq:f} in the following way.
On the preimage of the closed right half-plane, denoted $\Pi_\xi$,
the second iterate of $f_\xi$ is conjugate to an alternate member of \eqref{eq:f}.
That is, $f_\xi^2$ is conjugate to $f_{g(\xi)}$ for a certain function $g : \mathbb{R}^4 \to \mathbb{R}^4$.
By repeatedly iterating a boundary of $\Phi_{\rm BYG}$ backwards under $g$,
we are able to divide $\Phi_{\rm BYG}$ into regions $\cR_n$, for $n = 0,1,2,\ldots$,
where $f_\xi$ has a chaotic Milnor attractor with $2^n$ connected components.
The regions converge to a fixed point of $g$ as $n \to \infty$.	
The main difficulties we overcome are in analysing the global dynamics of the nonlinear map $g$
and showing that the relevant dynamics of $f_\xi$ occurs entirely within $\Pi_\xi$.


Our main results are presented in \S\ref{sec:results},
see Theorems \ref{th:Rn}--\ref{th:affinelyConjugate}.
Sections \ref{sec:XY}--\ref{sec:mainProof} work toward proofs of these results.
First \S\ref{sec:XY} describes the phase space of \eqref{eq:f},
primarily saddle fixed points and their stable and unstable manifolds.
Then in \S\ref{sec:f2} we consider the second iterate $f_\xi^2$ on $\Pi_\xi$
and construct a conjugacy to $f_{g(\xi)}$.
In \S\ref{sec:phiPsi} we derive geometric properties of the boundaries of $\cR_0$
and in \S\ref{sec:renormalisation} study the dynamics of $g$.

Chaos is proved in the sense of a positive Lyapunov exponent.
This positivity is achieved for all points in the attractor,
including points whose forward orbits intersect the switching line where $f_\xi$ is not differentiable.
This is achieved by using one-sided directional derivatives which are always well-defined in our setting, \S\ref{sec:lyap}.
A recursive application of the renormalisation is performed in \S\ref{sec:mainProof}.
Finally \S\ref{sec:conc} provides a discussion and outlook for future studies. 

\section{Main results}
\label{sec:results}
\setcounter{equation}{0}

In this section we motivate and define the parameter region $\Phi_{\rm BYG}$
and the renormalisation operator $f_\xi \mapsto f_{g(\xi)}$,
then state the main results.
First Theorem \ref{th:Rn} clarifies the geometry of the regions $\cR_n \subset \mathbb{R}^4$.
Next Theorem \ref{th:R0} informs us of the dynamics of $f_\xi$ in $\cR_0$.
Finally Theorem \ref{th:affinelyConjugate} describes the dynamics with $\xi \in \cR_n$ and any value $n \ge 0$
and follows from a recursive application of the renormalisation to Theorem \ref{th:R0}.
Throughout the paper we write
\begin{align}
f_{L,\xi}(x,y) &= \begin{bmatrix} \tau_L x + y + 1 \\ -\delta_L x \end{bmatrix}, &
f_{R,\xi}(x,y) &= \begin{bmatrix} \tau_R x + y + 1 \\ -\delta_R x \end{bmatrix},
\label{eq:fLfR}
\end{align}
for the left and right pieces of \eqref{eq:f}.

\subsection{Two saddle fixed points}
\label{sub:fps}

Consider the parameter region
\begin{equation}
\Phi = \left\{ \xi \in \mathbb{R}^4 \,\big|\,
\tau_L > \delta_L + 1, \,\delta_L > 0, \,\tau_R < -(\delta_R + 1), \,\delta_R > 0 \right\}.
\label{eq:saddleSaddleRegion}
\end{equation}
For any $\xi \in \Phi$, $f_\xi$ has exactly two fixed points.
Specifically
\begin{equation}
Y = \left( \frac{-1}{\tau_L - \delta_L - 1}, \frac{\delta_L}{\tau_L - \delta_L - 1} \right)
\label{eq:Y}
\end{equation}
is a fixed point of $f_{L,\xi}$ and lies in the left half-plane, while
\begin{equation}
X = \left( \frac{-1}{\tau_R - \delta_R - 1}, \frac{\delta_R}{\tau_R - \delta_R - 1} \right)
\label{eq:X}
\end{equation}
is a fixed point of $f_{R,\xi}$ and lies in the right half-plane.

The eigenvalues associated with these points are those of the Jacobian matrices of
$f_{L,\xi}$ and $f_{R,\xi}$:
\begin{align}
A_L(\xi) &= \begin{bmatrix} \tau_L & 1 \\ -\delta_L & 0 \end{bmatrix}, &
A_R(\xi) &= \begin{bmatrix} \tau_R & 1 \\ -\delta_R & 0 \end{bmatrix}.
\label{eq:ALAR}
\end{align}
Notice $\tau_L$ and $\delta_L$ are the trace and determinant of $A_L$;
similarly $\tau_R$ and $\delta_R$ are the trace and determinant of $A_R$.
It follows that $\Phi$ is the set of all parameter combinations for which
$Y$ is a saddle with positive eigenvalues
and $X$ is a saddle with negative eigenvalues.

\subsection{The parameter region $\Phi_{\rm BYG}$}
\label{sub:phiBYG}

For any $\xi \in \Phi$, $X$ and $Y$ have one-dimensional stable and unstable manifolds.
Fig.~\ref{fig:igRN_phasePortrait} illustrates the stable (blue) and unstable (red) manifolds of $Y$.
These intersect if and only if $\phi(\xi) \le 0$, where 
\begin{equation}
\phi(\xi) = \delta_R - (1+\tau_R) \delta_L
+ \frac{1}{2} \big( (1+\tau_R) \tau_L - \tau_R - \delta_L - \delta_R \big)
\left( \tau_L + \sqrt{\tau_L^2 - 4 \delta_L} \right).
\label{eq:phi}
\end{equation}
Equation \eqref{eq:phi} can be derived by directly calculating the first few linear segments
of the stable and unstable manifolds of $Y$
as they emanate from $Y$, see \cite{GlSi21}.
As a bifurcation, $\phi(\xi) = 0$ is a homoclinic corner \cite{Si16b}
and is analogous to a `first' homoclinic tangency for smooth maps \cite{PaTa93}.
Banerjee, Yorke, and Grebogi \cite{BaYo98} observed that an attractor is often destroyed here,
so focussed their attention on the parameter region
\begin{equation}
\Phi_{\rm BYG} = \left\{ \xi \in \Phi \,\big|\, \phi(\xi) > 0 \right\},
\label{eq:BYGRegion}
\end{equation}
where the stable and unstable manifolds of $Y$ do not intersect. 
Indeed for all $\xi \in \Phi_{\rm BYG}$, $f_\xi$ has a trapping region 
and therefore a topological attractor \cite{Gl17}.

\begin{figure}[b!]
\begin{center}
\includegraphics[height=8cm]{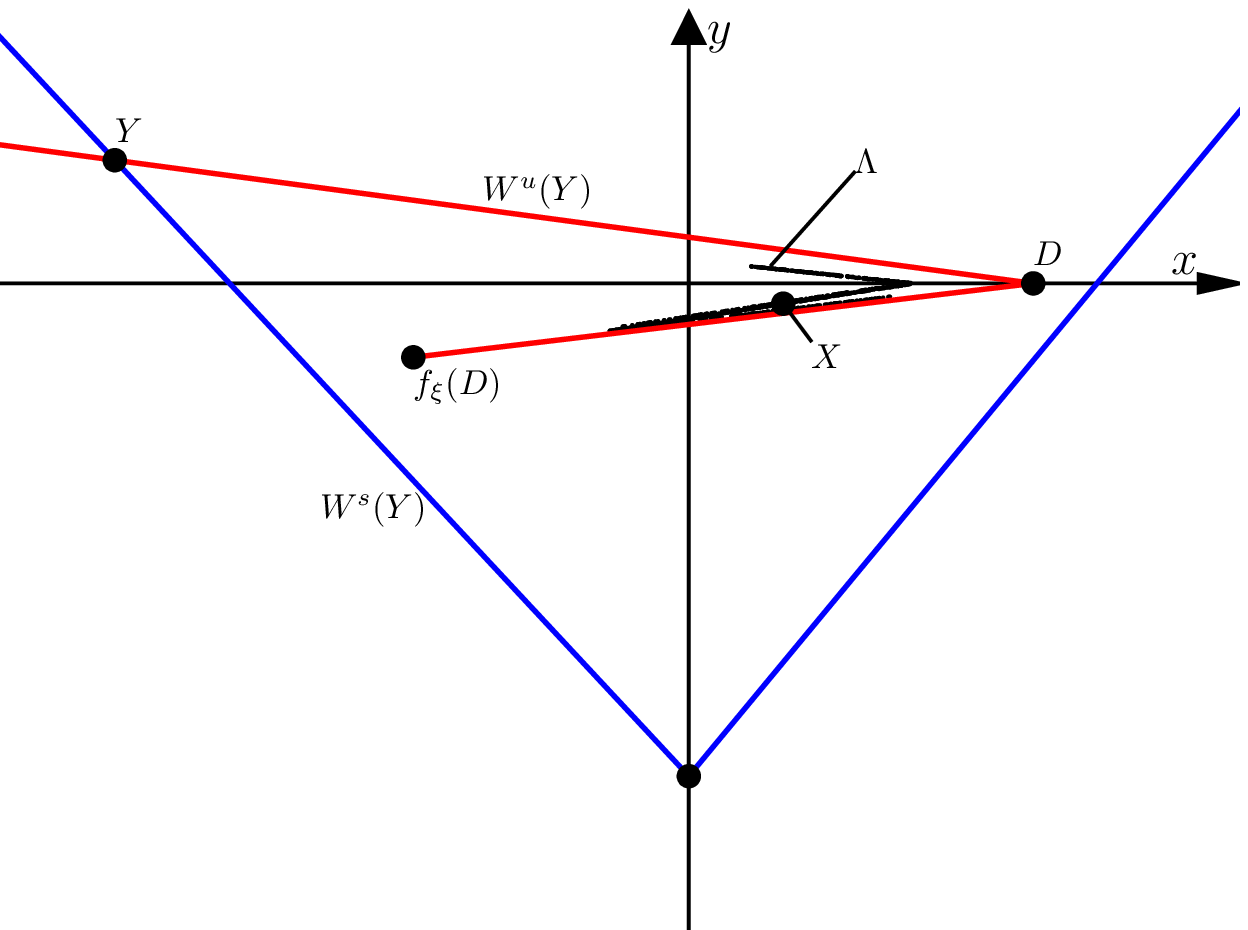}
\caption{
A sketch of the phase space of $f_\xi$ \eqref{eq:f} with $\xi \in \Phi_{\rm BYG}$.
We have shown the fixed points $X$ and $Y$ and the initial parts of $W^s(Y)$ (blue) and $W^u(Y)$ (red)
as they emanate from $Y$ (these manifolds do not intersect when $\phi(\xi) > 0$).
The small black dots show $1000$ iterates of the forward orbit of the origin
after transient dynamics has decayed.
\label{fig:igRN_phasePortrait}
} 
\end{center}
\end{figure}

\subsection{The renormalisation operator}
\label{sub:renormOp}

\begin{figure}[b!]
\begin{center}
\includegraphics[height=5cm]{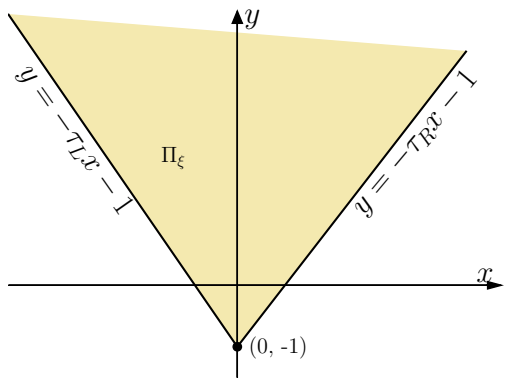}
\caption{
The preimage of the closed right half-plane \eqref{eq:Pi}.
\label{fig:igRN_Pi}
} 
\end{center}
\end{figure}

On $\mathbb{R}^2$ the second iterate $f_\xi^2$ is a continuous, piecewise-linear map with four pieces.
But if we restrict our attention to the set 
\begin{equation}
\Pi_\xi = \left\{ f_\xi^{-1}(x,y) \,\middle|\, x \ge 0 \right\},
\label{eq:Pi}
\end{equation}
then $f_\xi^2$ has only two pieces:
\begin{equation}
f_\xi^2(x,y) =
\begin{cases}
\left( f_{R,\xi} \circ f_{L,\xi} \right)(x,y), & x \le 0, \\
f_{R,\xi}^2(x,y), & x \ge 0.
\end{cases}
\label{eq:f2}
\end{equation}
As shown in Fig.~\ref{fig:igRN_Pi},
the boundary of $\Pi_\xi$ intersects the switching line at $(x,y) = (0,-1)$
and has slope $-\tau_L < 0$ in $x < 0$ and slope $-\tau_R > 0$ in $x > 0$.
For any $\xi \in \Phi$, the map \eqref{eq:f2}
is affinely conjugate to the normal form \eqref{eq:f} (see Proposition \ref{pr:conjugacy}).
This is because the switching line of \eqref{eq:f2} satisfies the non-degeneracy conditions mentioned in \S\ref{sec:intro}.

When the affine transformation to the normal form is applied,
the matrix parts of the pieces of \eqref{eq:f2} undergo a similarity transform,
thus their traces and determinants are not changed.
The matrix part of the $x \le 0$ piece of \eqref{eq:f2}
is $A_R(\xi) A_L(\xi)$, which has trace
$\tau_L \tau_R - \delta_L - \delta_R$ and determinant $\delta_L \delta_R$.
The matrix part of the $x \ge 0$ piece of \eqref{eq:f2}
is $A_R(\xi)^2$, which has trace
$\tau_R^2 - 2 \delta_R$ and determinant $\delta_R^2$.
Hence \eqref{eq:f2} can be transformed to $f_{g(\xi)}$ where
\begin{equation}
g(\xi) = \big( \tau_R^2 - 2 \delta_R, \delta_R^2, \tau_L \tau_R - \delta_L - \delta_R, \delta_L \delta_R \big).
\label{eq:g}
\end{equation}
Notice we are transforming the left piece of \eqref{eq:f2} to the right piece of $f_{g(\xi)}$
and the right piece of \eqref{eq:f2} to the left piece of $f_{g(\xi)}$.
This ensures $g(\xi) \in \Phi$ (see Proposition \ref{pr:PhiForwardInvariant})
so our renormalisation operator $f_\xi \mapsto f_{g(\xi)}$
produces another member of the family \eqref{eq:f} in $\Phi$.
Also observe
\begin{equation}
\xi^* = (1,0,-1,0)
\label{eq:xiStar}
\end{equation}
is a fixed point of $g$ and lies on the boundary of $\Phi$.

\subsection{Division of parameter space}
\label{sub:division}

For all $n \ge 0$ let
\begin{equation}
\zeta_n(\xi) = \phi \big( g^n(\xi) \big).
\label{eq:zetan}
\end{equation}
The surface $\zeta_n(\xi) = 0$ is an $n^{\rm th}$ preimage of $\phi(\xi) = 0$ under $g$.
We now use these surfaces to form the regions
\begin{equation}
\cR_n = \left\{ \xi \in \Phi \,\big|\, \zeta_n(\xi) > 0, \zeta_{n+1}(\xi) \le 0 \right\},
\label{eq:Rn}
\end{equation}
for all $n \ge 0$.
The following result (proved in \S\ref{sub:RnProof}) gives properties of these regions.

\begin{theorem}
The $\cR_n$ are non-empty, mutually disjoint, and converge to $\{ \xi^* \}$ as $n \to \infty$.
Moreover,
\begin{equation}
\Phi_{\rm BYG} \subset \bigcup_{n=0}^\infty \cR_n \,.
\label{eq:RnUnion}
\end{equation}
\label{th:Rn}
\end{theorem}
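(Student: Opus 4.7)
The plan is to reduce Theorem \ref{th:Rn} to three properties of the dynamics of $g$ on $\Phi$ that I anticipate being established in \S\ref{sec:renormalisation}: (A) the set $\{\xi \in \Phi : \phi(\xi) \le 0\}$ is forward invariant under $g$; (B) every $\xi \in \Phi_{\rm BYG}$ has some iterate with $\phi(g^{n+1}(\xi)) \le 0$; and (C) for each open neighbourhood $U \subset \mathbb{R}^4$ of $\xi^*$ there exists $N$ such that every $\xi \in \Phi$ with $\phi(g^k(\xi)) > 0$ for $k = 0, 1, \ldots, N$ lies in $U$. Forward invariance of $\Phi$ itself under $g$ is already asserted as Proposition \ref{pr:PhiForwardInvariant}, so $\zeta_k(\xi)$ is always well-defined.

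Granting (A)--(C), three of the four conclusions drop out quickly. \textbf{Disjointness:} if $\xi \in \cR_m \cap \cR_n$ with $m < n$, then $\phi(g^{m+1}(\xi)) \le 0$, and applying (A) a further $n - m - 1$ times gives $\phi(g^n(\xi)) \le 0$, contradicting $\zeta_n(\xi) > 0$. \textbf{Union property \eqref{eq:RnUnion}:} for $\xi \in \Phi_{\rm BYG}$ let $N$ be the smallest $n$ supplied by (B); minimality together with $\phi(\xi) > 0$ forces $\phi(g^k(\xi)) > 0$ for $k = 0, \ldots, N$, so $\zeta_N(\xi) > 0$ and $\zeta_{N+1}(\xi) \le 0$, i.e.\ $\xi \in \cR_N$. \textbf{Convergence:} the contrapositive of (A) shows that every $\xi \in \cR_n$ satisfies $\phi(g^k(\xi)) > 0$ for all $k = 0, \ldots, n$, for otherwise $\phi(g^n(\xi))$ would also be nonpositive; hence $\cR_n$ is contained in the set appearing in (C), and in particular $\cR_n \subset U$ for all $n \ge N$.

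For \textbf{non-emptiness} I would first exhibit a point of $\cR_0$ explicitly. Since $\phi$ and $\phi \circ g$ are polynomial expressions in $\xi$ (via \eqref{eq:phi} and \eqref{eq:g}), one can check that their gradients are linearly independent at generic points of $\{\phi = 0\}$, so a small perturbation from a regular point of this hypersurface, chosen in the correct quadrant of the $(\phi, \phi \circ g)$ plane, lands inside $\cR_0$. For $n \ge 1$ I would then pull such a point back through a local inverse branch of $g$ anchored near $\xi^*$; such a branch should be available because $Dg(\xi^*)$ has a nonzero eigenvalue $2$, as computed directly from \eqref{eq:g}, giving an invariant expanding direction that can be locally inverted.

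The main obstacle is (C). A naive linearisation at $\xi^*$ is insufficient: a short calculation gives $Dg(\xi^*)$ with eigenvalues $0, 0, 2, -1$, so $\xi^*$ is semi-hyperbolic and is not a topological attractor of $g$ in any standard sense. The proof in \S\ref{sec:renormalisation} will therefore have to exploit the special algebraic form of \eqref{eq:g}, notably that the determinant coordinates evolve as $\delta_L \mapsto \delta_L \delta_R$ and $\delta_R \mapsto \delta_R^2$ so that both contract rapidly to $0$ provided $\delta_R$ stays bounded below $1$, together with monotonicity-type estimates controlling $\phi$ along orbits. The expected upshot is that orbits surviving many iterates in $\Phi_{\rm BYG}$ are squeezed toward $\xi^*$ in all four coordinates, which is exactly what is needed to close the argument.
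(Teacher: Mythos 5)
Your overall architecture (a forward-invariance property, an eventual-escape property, and a uniform escape time) matches the paper's, and your deductions of disjointness and of \eqref{eq:RnUnion} from these ingredients are essentially the ones used there. The central problem is that your property (A) is false as stated: the set $\left\{ \xi \in \Phi : \phi(\xi) \le 0 \right\}$ is \emph{not} forward invariant under $g$. Since $\phi(g(\xi)) = \zeta_1(\xi)$ has the same sign as $\hat{\psi}(\xi)$ (Lemma \ref{le:psizeta1Relationship}), your (A) would assert $\{\phi \le 0\} \subset \{\hat{\psi} \le 0\}$; but the curves $\phi = 0$ and $\hat{\psi} = 0$ cross transversally at a unique point of each slice $\Phi_{\rm slice}(\delta_L,\delta_R)$ (Proposition \ref{pr:phipsiIntersection}), so all four sign combinations occur --- indeed the paper notes that $\cR_1 \not\subset \Phi_{\rm BYG}$, i.e.\ there are points with $\zeta_0(\xi) \le 0$ yet $\zeta_1(\xi) > 0$. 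The correct invariant set sits one preimage up: Proposition \ref{pr:psiHat} shows $\hat{\psi}(\xi) \le 0$ implies $\hat{\psi}(g(\xi)) < 0$, equivalently $\zeta_1(\xi) \le 0 \Rightarrow \zeta_2(\xi) \le 0$. Your disjointness argument survives this index shift because it only invokes invariance starting from $g^{m+1}(\xi)$; your convergence argument needs the range of $k$ to start at $1$ rather than $0$ (the $k=0$ case genuinely fails on parts of $\cR_n$). But proving this shifted invariance is a large share of the actual work (Sections \ref{sec:phiPsi} and \ref{sub:renormalisation}, via the geometry of the two curves and Lemma \ref{le:tildeTauRBound}) and cannot be assumed.

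Second, your (C) is the crux and you leave it unproved, and the heuristic you offer (contraction of $\delta_L,\delta_R$ to zero) is not the operative mechanism. The paper's argument is an escape estimate in the trace coordinates: Lemma \ref{eq:g2Bound} gives $\tau_{L,2} > \tau_L^2 \tau_R^2$ and $\tau_{R,2} < \tau_L \tau_R$, so $|\tau_L \tau_R|$ grows doubly exponentially once it exceeds $1$ by a definite margin (which holds uniformly outside any ball around $\xi^*$, where $|\tau_L \tau_R| = 1$); and once $\tau_R \le -2$, Lemma \ref{le:betaBounds} forces $\hat{\psi} < 0$. This single estimate delivers both your (B) and the uniform bound in (C). Finally, your non-emptiness argument relies on a local inverse branch of $g$ at $\xi^*$, but by your own computation $\rD g(\xi^*)$ has eigenvalue $0$ with multiplicity two, so $g$ is not locally invertible there in the sense required; the paper instead argues forwards, choosing $\xi \in \Phi_{\rm BYG}$ near $\xi^*$ whose orbit keeps $\hat{\psi} > 0$ for at least $j$ steps before landing in $\cR_0$ at some step $n \ge j$, whence $g^{n-j}(\xi) \in \cR_j$ by Lemma \ref{le:gBackwards}.
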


Being four-dimensional the $\cR_n$ are inherently difficult to visualise.
Fig.~\ref{fig:igRN_parameterSpace} shows two-dimensional cross-sections
obtained by fixing the values of $\delta_L > 0$ and $\delta_R > 0$.
For any such cross-section only finitely many $\cR_n$ are visible because as $n \to \infty$ they converge to $\{ \xi^* \}$
for which $\delta_L = \delta_R = 0$.
Notice $\cR_1$ contains some points that do not belong to $\Phi_{\rm BYG}$.
For this reason the two sets in \eqref{eq:RnUnion} are not equal.

\begin{figure}[b!]
\begin{center}
\setlength{\unitlength}{1cm}
\begin{picture}(16.7,8.5)
\put(-.2,0){\includegraphics[height=8cm]{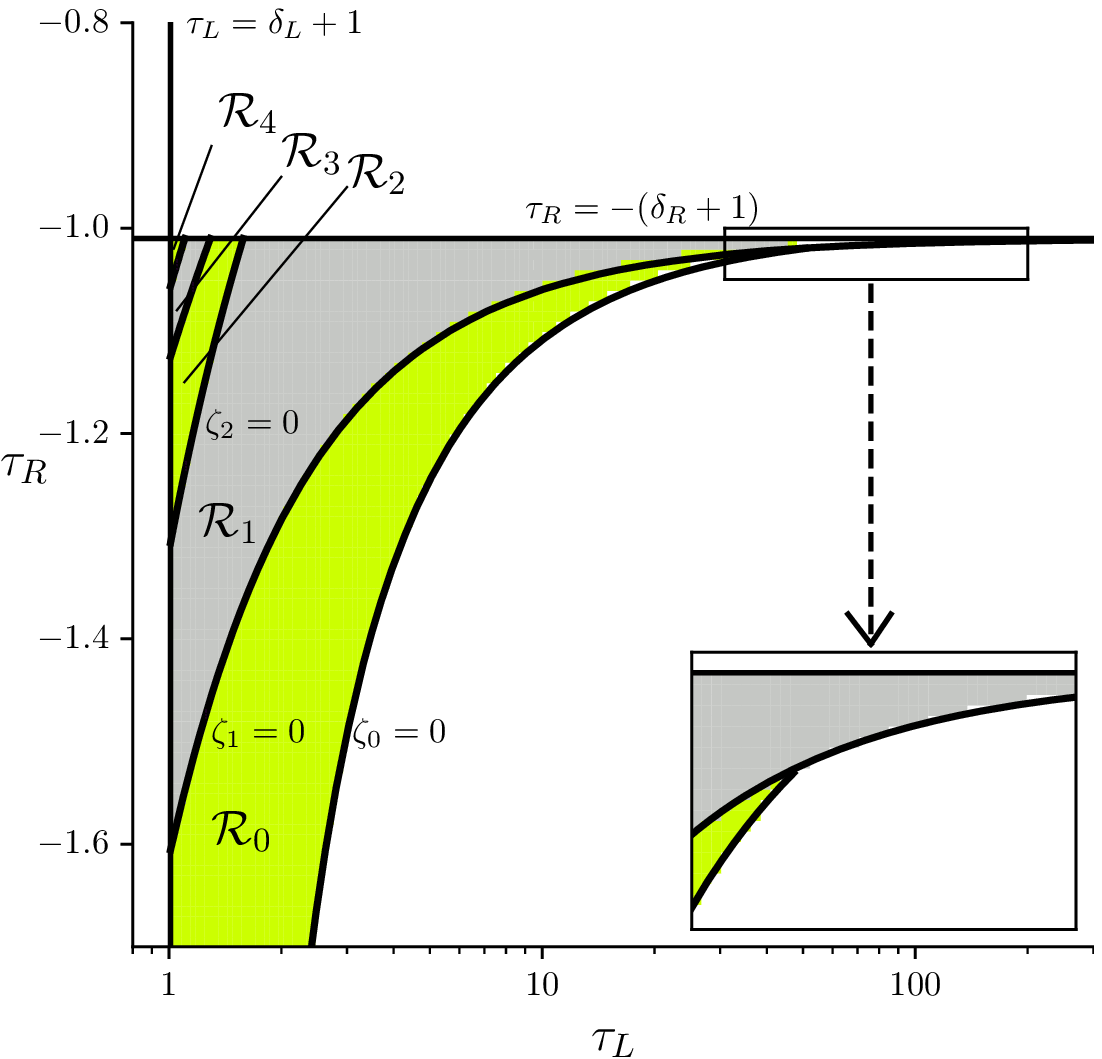}}
\put(8.7,0){\includegraphics[height=8cm]{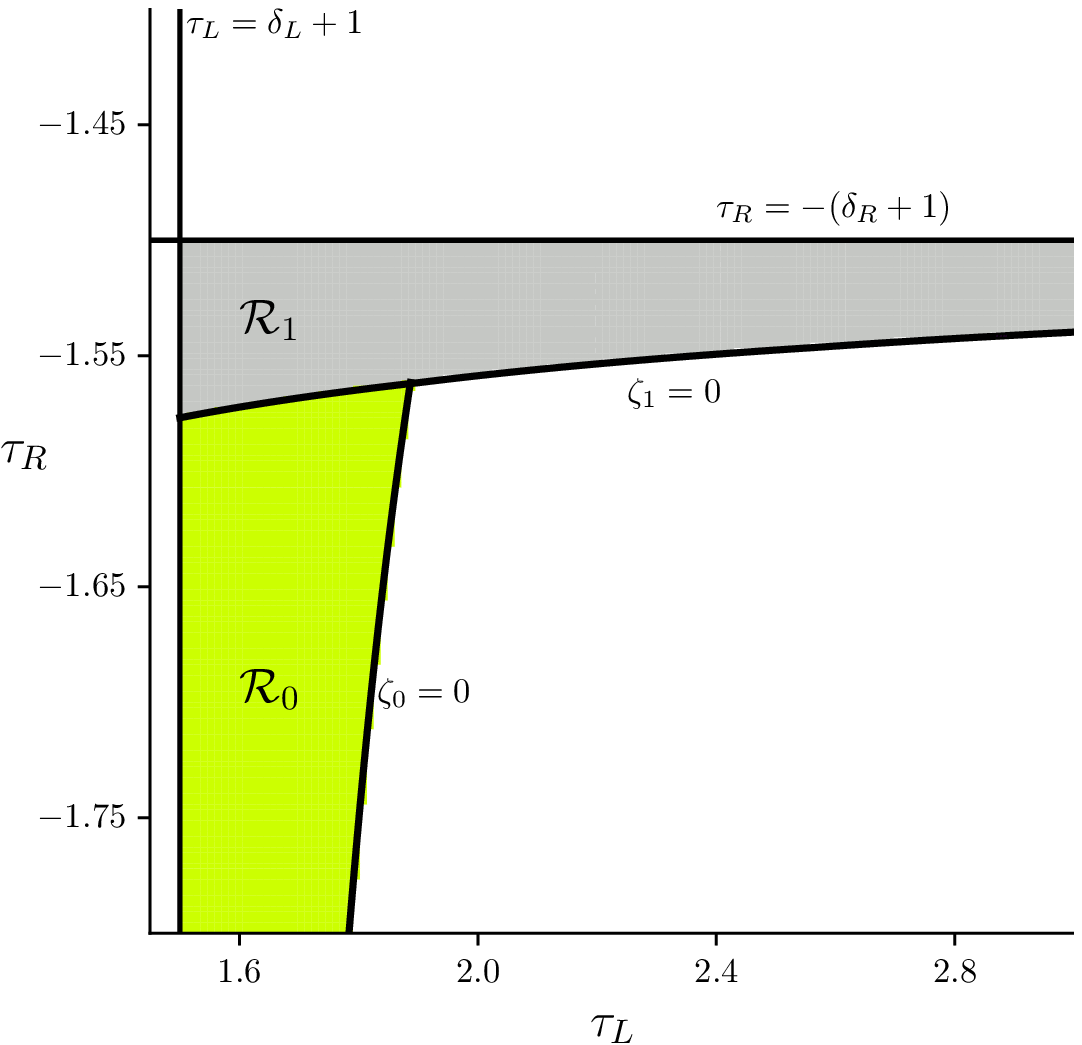}}
\put(3.2,8.2){\small {\bf a)}~~$\delta_L = \delta_R = 0.01$}
\put(12.1,8.2){\small {\bf b)}~~$\delta_L = \delta_R = 0.5$}
\end{picture}
\caption{
Two-dimensional cross-sections of the parameter regions $\cR_n$.
In panel (a) $\cR_n$ is visible for all $n = 0,1,\ldots,4$;
in panel (b) only $\cR_0$ and $\cR_1$ are visible.
In both panels $\Phi_{\rm BYG}$ is the bounded by the vertical line $\tau_L = \delta_L + 1$,
the horizontal line $\tau_R = -\delta_R - 1$,
and the curve $\zeta_0 = 0$.
\label{fig:igRN_parameterSpace}
} 
\end{center}
\end{figure}

\subsection{A chaotic attractor with one connected component}
\label{sub:R0}

The next result shows $f_\xi$ has a chaotic, connected Milnor attractor for all $\xi \in \cR_0$ when $\delta_R < 1$.
This is proved in \S\ref{sub:R0Proof} and based on the results of \cite{GlSi21}.
The attractor is the closure of the unstable manifold of $X$,
\begin{equation}
\Lambda(\xi) = {\rm cl}(W^u(X)).
\label{eq:Lambda}
\end{equation}


\begin{theorem}
For the map $f_\xi$ with any $\xi \in \cR_0$,
\begin{romanlist}
\item
$\Lambda(\xi)$ is bounded, connected, and invariant,
\item
every $z \in \Lambda(\xi)$ has a positive Lyapunov exponent, and
\item
if $\delta_R < 1$ there exists forward invariant $\Delta \subset \mathbb{R}^2$ with non-empty interior such that
\begin{equation}
\bigcap_{n=0}^\infty f_\xi^n(\Delta) = \Lambda(\xi).
\label{eq:LambdaAsInfiniteIntersection}
\end{equation}
\end{romanlist}
\label{th:R0}
\end{theorem}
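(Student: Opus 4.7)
My plan is to deduce the theorem from the chaotic-attractor machinery of \cite{GlSi21}, which applies throughout $\Phi_{\rm BYG}$, supplemented by ingredients tailored to $\cR_0$. The inclusion $\cR_0 \subset \Phi_{\rm BYG}$ is immediate because $\zeta_0 = \phi$, so the trapping-region result of \cite{Gl17, GlSi21} is available. For part (i) I would fix a bounded forward-invariant trapping region $T$ containing $X$. Choosing a local unstable manifold $W^u_{\rm loc}(X) \subset T$, forward invariance of $T$ gives $f_\xi^n(W^u_{\rm loc}(X)) \subset T$ for every $n$, hence $W^u(X) \subset T$ and $\Lambda(\xi)$ is bounded. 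Invariance follows from $f_\xi(W^u(X)) = W^u(X)$ together with continuity of $f_\xi$. Connectedness follows because the negative unstable eigenvalue of $A_R(\xi)$ at $X$ interchanges the two branches of $W^u(X)$, so $W^u(X)$ is a single connected $1$-manifold and the closure of a connected set is connected.

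For part (ii) I would set up an invariant expanding cone $C$ spanned by the unstable eigenvectors of $A_L(\xi)$ and $A_R(\xi)$, and combine this with the one-sided directional derivatives of \S\ref{sec:lyap}. The saddle conditions defining $\Phi$ place these eigenvectors in suitable half-planes, and a short calculation using $\tau_L > \delta_L + 1$ and $-\tau_R > \delta_R + 1$ shows both $A_L C \subsetneq C$ and $A_R C \subsetneq C$ with a common expansion factor $\lambda > 1$. At points whose forward orbit meets the switching line, each one-sided derivative coincides with $A_L$ or $A_R$, so the cone argument is unaffected; telescoping the expansion bounds along any orbit in $\Lambda(\xi)$ yields a Lyapunov exponent at least $\log \lambda > 0$.

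For part (iii), assuming $\delta_R < 1$, I would construct $\Delta$ as a topological disk bounded by initial arcs of $W^s(Y)$ and $W^u(Y)$, which do not intersect because $\phi(\xi) > 0$. Forward invariance of $\Delta$ is then a geometric check on these boundary arcs. The condition $\delta_R < 1$ gives $\det A_R < 1$, providing area contraction on the right half-plane; combined with the uniform expansion of part (ii), this forces transverse contraction along stable directions. A standard $\omega$-limit argument then identifies $\bigcap_n f_\xi^n(\Delta)$ with ${\rm cl}(W^u(X))$ via the accumulation of $W^u(X)$ on itself.

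The hardest step, I expect, will be establishing uniform cone expansion for orbits that come arbitrarily close to the switching line: the one-sided derivative framework of \S\ref{sec:lyap} is built for this, but careful bookkeeping is needed to rule out cancellations as $A_L$ and $A_R$ alternate. A secondary obstacle for part (iii) is showing that $W^u(X)$ is dense in $\bigcap_n f_\xi^n(\Delta)$; this is where the condition $\zeta_1(\xi) \le 0$, which excludes the splitting of the attractor characteristic of $\cR_n$ for $n \ge 1$, should enter the argument.
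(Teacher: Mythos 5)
Parts (i) and (ii) of your outline follow the paper's route: boundedness comes from the forward-invariant triangle $\Omega(\xi)$ of Proposition \ref{pr:Omega} together with $X \in \Omega(\xi)$ (Lemma \ref{le:LambdaInOmega}), and positivity of the Lyapunov exponent is exactly Proposition \ref{pr:lyap}, i.e.\ the invariant-expanding-cone theorem of \cite{GlSi21} applied through the one-sided directional derivatives of \S\ref{sec:lyap}. One caution: the cone in \cite{GlSi21} is not simply the sector spanned by the two unstable eigenvectors, and proving it is invariant and expanding under both $A_L$ and $A_R$ is the main technical content of that reference rather than a short calculation; since you are explicitly leaning on that machinery this is a presentational issue, not a gap.

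The genuine gap is in part (iii). You propose to bound $\Delta$ by initial arcs of $W^s(Y)$ and $W^u(Y)$. This targets the wrong fixed point. Since $\Lambda(\xi) = {\rm cl}(W^u(X))$, the inclusion $\bigcap_n f_\xi^n(\Delta) \subset \Lambda(\xi)$ is obtained by noting that the boundary of $f_\xi^n(\Delta)$ is contained in the union of $W^u(X)$ with a segment of $W^s(X)$ that collapses onto $X$, so that area contraction forces every point of the intersection to lie at distance zero from $W^u(X)$. If the boundary instead contains arcs of $W^u(Y)$, the same reasoning only places the intersection inside ${\rm cl}(W^u(Y)) \cup \{Y\}$, and there is no reason this coincides with ${\rm cl}(W^u(X))$ (the point $Y$ and the initial portion of $W^u(Y)$ need not lie anywhere near the attractor). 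The paper instead takes $\Delta_0$ to be the filled triangle with vertices $X$, $T$ and $Z$, where $Z$ is the intersection of $E^s(X)$ with the segment joining $T$ to $f_\xi^2(T)$, and sets $\Delta = \bigcup_{n \ge 0} f_\xi^n(\Delta_0)$, so that $\partial \Delta_0 \subset W^s(X) \cup W^u(X)$. This is also precisely where the hypothesis $\zeta_1(\xi) \le 0$ enters --- not, as you guess, in a density argument at the end, but at the outset: by \eqref{eq:psi2} it gives $\psi(\xi) \ge 0$, so by Proposition \ref{pr:psi} the point $f_\xi^2(T)$ lies on or to the left of $E^s(X)$ and the vertex $Z$ (hence the triangle) exists at all; for $\xi \in \cR_n$ with $n \ge 1$ the manifolds $W^s(X)$ and $W^u(X)$ do not intersect and no such region can be built. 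Finally, for the area of $f_\xi^n(\Delta)$ to tend to zero you need both determinants below $1$: $\delta_R < 1$ is assumed, but $\delta_L < 1$ must be supplied by Lemma \ref{le:deltaLge1} (it is a consequence of $\phi(\xi) > 0$), a step your sketch omits.
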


Lyapunov exponents for \eqref{eq:f} are clarified in \S\ref{sec:lyap}.
Stronger notions of chaos have been obtained on subsets of $\cR_0$, see \cite{Gl17,GlSi21}.
While we have not been able to prove that $\Lambda(\xi)$ is a topological attractor,
\eqref{eq:LambdaAsInfiniteIntersection} shows it contains the $\omega$-limit set of all points in $\Delta$.
The set $\Delta$ has positive Lebesgue measure, thus $\Lambda(\xi)$ is a Milnor attractor \cite{Mi85}.
If $\Delta$ is a trapping region (i.e.~it maps to its interior)
then $\Lambda(\xi)$ is an attracting set by definition \cite{Ro04}.
If $\Delta$ is the trapping region of \cite{GlSi21} (there denoted $\Omega_{\rm trap}$)
then \eqref{eq:LambdaAsInfiniteIntersection} appears to be true for some but not all $\xi \in \cR_0$.
We expect the extra condition $\delta_R < 1$ is unnecessary
but is included in Theorem \ref{th:R0} because our proof utilises an area-contraction argument.

\subsection{A chaotic attractor with many connected components}
\label{sub:mainTheorem}

For any $\xi \in \cR_n$ we have $g^n(\xi) \in \cR_0$ (see Lemma \ref{le:gForwards}),
while Theorem \ref{th:R0} describes the dynamics in $\cR_0$.
Thus by combining the renormalisation with Theorem \ref{th:R0}
we are able to describe the dynamics of $f_\xi$ with $\xi \in \cR_n$.

In view of the way $g$ is constructed, our renormalisation corresponds to the substitution rule
\begin{equation}
(L,R) \mapsto (RR,LR).
\label{eq:substitutionRule}
\end{equation}
The same rule arises in the one-dimensional setting of Ito {\em et.~al.}~\cite{ItTa79b}.
Given a word $\cW$ comprised of $L$'s and $R$'s of length $k$,
let $\cF(\cW)$ be the word of length $2 k$ that results from applying \eqref{eq:substitutionRule}
to every letter in $\cW$.
If an orbit of $f_{g(\xi)}$ has symbolic itinerary $\cW$,
the corresponding orbit of $f_\xi$ has symbolic itinerary $\cF(\cW)$.

The attractor of Theorem \ref{th:R0} is the closure of the unstable manifold of $X$.
Consequently for $\xi \in \cR_n$ the corresponding attractor is the closure of the unstable manifold of
a periodic solution with symbolic itinerary $\cF^n(R)$, see Table \ref{tb:cF}.

\begin{table}[b!]
\begin{center}
\begin{tabular}{c|c}
$n$ & $\cF^n(R)$ \\
\hline
$0$ & R \\
$1$ & LR \\
$2$ & RRLR \\
$3$ & LRLRRRLR \\
$4$ & RRLRRRLRLRLRRRLR
\end{tabular}
\end{center}
\caption{
The first few words in the sequence generated by repeatedly applying
the symbolic substitution rule \eqref{eq:substitutionRule} to $R$.
\label{tb:cF}
}
\end{table}

\begin{theorem}
Let $n \ge 0$ and $\xi \in \cR_n$.
Then $g^n(\xi) \in \cR_0$ and
there exist mutually disjoint sets $S_0,S_1,\ldots,S_{2^n-1} \subset \mathbb{R}^2$ such that
$f_\xi(S_i) = S_{(i+1) \,{\rm mod}\, 2^n}$ and
\begin{equation}
f_\xi^{2^n} \big|_{S_i} ~\text{is affinely conjugate to}~ f_{g^n(\xi)} \big|_{\Lambda(g^n(\xi))}
\label{eq:affinelyConjugate}
\end{equation}
for each $i \in \{ 0,1,\ldots,2^n-1 \}$.
Moreover,
\begin{equation}
\bigcup_{i=0}^{2^n-1} S_i = {\rm cl} \left( W^u \left( \gamma_n \right) \right),
\label{eq:Siunion}
\end{equation}
where $\gamma_n$ is a saddle-type periodic solution of $f_\xi$ with symbolic itinerary $\cF^n(R)$.
\label{th:affinelyConjugate}
\end{theorem}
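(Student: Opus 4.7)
I would prove this by induction on $n$. The base case $n = 0$ follows immediately from Theorem~\ref{th:R0}: set $S_0 = \Lambda(\xi)$ and $\gamma_0 = X$, the latter having the trivial itinerary $R = \cF^0(R)$.

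For the inductive step, suppose the statement holds for some $n \ge 0$ and let $\xi \in \cR_{n+1}$. Since $\zeta_k(g(\xi)) = \phi(g^{k+1}(\xi)) = \zeta_{k+1}(\xi)$, together with Proposition~\ref{pr:PhiForwardInvariant} this gives $g(\xi) \in \cR_n$. Applying the induction hypothesis to $g(\xi)$ yields mutually disjoint sets $T_0, \ldots, T_{2^n-1}$ cyclically permuted by $f_{g(\xi)}$, a saddle-type period-$2^n$ orbit $\tilde\gamma$ of $f_{g(\xi)}$ with itinerary $\cF^n(R)$ and $\bigcup_i T_i = \mathrm{cl}(W^u(\tilde\gamma))$, and an affine conjugacy between $f_{g(\xi)}^{2^n}|_{T_i}$ and $f_{g^{n+1}(\xi)}|_{\Lambda(g^{n+1}(\xi))}$ for each $i$. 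Let $h$ denote the affine map from Proposition~\ref{pr:conjugacy} that conjugates $f_\xi^2|_{\Pi_\xi}$ with $f_{g(\xi)}$.

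I would then set
\[
S_{2i} = h^{-1}(T_i), \qquad S_{2i+1} = f_\xi(S_{2i}), \qquad i = 0,1,\ldots,2^n-1,
\]
so that $S_{2i} \subset \Pi_\xi$ and $S_{2i+1} \subset \{x \ge 0\}$. The cyclic relation $f_\xi(S_j) = S_{(j+1) \bmod 2^{n+1}}$ holds by definition for even $j$, and for odd $j$ follows from $f_\xi^2 = h^{-1} \circ f_{g(\xi)} \circ h$ composed with the cyclic action on the $T_i$. The conjugacy \eqref{eq:affinelyConjugate} for even $i$ is obtained by composing $h$ with the conjugacy from the induction hypothesis; for odd $i$, one conjugates through the affine restriction $f_\xi|_{S_{2i+1}} = f_{R,\xi}|_{S_{2i+1}}$, which is valid because $S_{2i+1}$ lies in the closed right half-plane. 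I would take $\gamma_{n+1}$ to be the $f_\xi$-orbit of $h^{-1}(\tilde\gamma)$; since each $f_{g(\xi)}$ step corresponds to two $f_\xi$ steps via the substitution $(L,R) \mapsto (RR,LR)$, the itinerary of $\gamma_{n+1}$ is precisely $\cF(\cF^n(R)) = \cF^{n+1}(R)$, and its saddle property transfers through $h$. Equation~\eqref{eq:Siunion} at level $n+1$ then follows by pushing its level-$n$ counterpart through $h^{-1}$ and $f_\xi$, using that affine conjugacies carry unstable manifolds to unstable manifolds.

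The principal obstacles I anticipate are as follows. First, verifying mutual disjointness of all $2^{n+1}$ sets: disjointness within each parity class follows from bijectivity of $h^{-1}$ and injectivity of $f_{R,\xi}$, but excluding overlaps across parity requires exploiting forward invariance of $\bigcup_j S_j$ and the cyclic labelling to rule out identifications that would force a proper sub-period. Second, confirming that $h^{-1}(\mathrm{cl}(W^u(\tilde\gamma)))$ genuinely lies inside $\Pi_\xi$, so that $h$ truly intertwines the actual $f_\xi^2$ with $f_{g(\xi)}$ on the attractor rather than a merely linear extension; this is the technical difficulty emphasised in \S\ref{sec:intro} and should flow from the geometric analyses of \S\ref{sec:phiPsi}--\S\ref{sec:renormalisation}. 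Third, verifying that $\cF^{n+1}(R)$ is primitive, so that $\gamma_{n+1}$ has minimal period exactly $2^{n+1}$; this admits a short induction using that $\cF$ is an injective substitution and that $\cF^{n+1}(R)$ contains both letters.
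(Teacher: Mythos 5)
Your skeleton matches the paper's proof exactly: induction on $n$, with $S_{2i} = h_\xi^{-1}(\tilde S_i)$ and $S_{2i+1} = f_\xi(S_{2i})$, the conjugacy for odd indices transported by $f_{R,\xi}$, and the itinerary tracked through the substitution rule. But the step you flag as your second obstacle --- showing that $h_\xi^{-1}$ of the level-$n$ sets actually lies in $\Pi_\xi$ --- cannot be deferred to ``the geometric analyses of \S\ref{sec:phiPsi}--\S\ref{sec:renormalisation}''; it is the substance of the proof and you supply no mechanism for it. The paper's device is to \emph{strengthen the induction hypothesis}: it proves simultaneously that if $\zeta_0(\xi)>0$ then every $S_i$ is contained in the forward-invariant triangle $\Omega(\xi)$ of Proposition~\ref{pr:Omega} (base case via Lemma~\ref{le:LambdaInOmega}). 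Combined with Lemma~\ref{le:zeta0gxi}, which guarantees $\zeta_0(g(\xi))>0$ for $\xi\in\cR_{n+1}$, this yields $\tilde S_i \subset \Omega(g(\xi))$, and then the purpose-built triangle $\Omega'(\xi)$ of \S\ref{sub:OmegaPrime} finishes the job: Lemma~\ref{le:OmegaPrime} gives $h_\xi\left(\Omega'(\xi)\right) = \Omega(g(\xi))$ and $\Omega'(\xi)\subset \Pi_\xi$, whence $S_{2i} = h_\xi^{-1}(\tilde S_i)\subset\Omega'(\xi)\subset\Pi_\xi$. Without some containment of this kind carried along the induction, your assertion $S_{2i}\subset\Pi_\xi$ is unsupported.

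The same lemma also repairs your other two loose ends. Cross-parity disjointness is \emph{not} obtained by ruling out sub-periods --- overlapping $S_j$ and $S_k$ need not coincide, so no proper sub-period is forced and that argument stalls; instead it follows from $S_{2i}\subset\Omega'(\xi)$, $S_{2j+1}\subset f_\xi\left(\Omega'(\xi)\right)$, and Lemma~\ref{le:OmegaPrime}\ref{it:ImageDisjoint}, which states $\Omega'(\xi)\cap f_\xi\left(\Omega'(\xi)\right)=\varnothing$. Likewise your claim $S_{2i+1}\subset\{x\ge 0\}$, needed both for $f_\xi|_{S_{2i+1}} = f_{R,\xi}|_{S_{2i+1}}$ and for appending the letter $R$ in the itinerary computation, is exactly Lemma~\ref{le:OmegaPrime}\ref{it:ImageInRight}. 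Your third worry, primitivity of $\cF^{n+1}(R)$, is not something the paper addresses separately: the period and itinerary of $\gamma_{n+1}$ emerge from the construction itself.
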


Numerical explorations suggest that \eqref{eq:Siunion} is the unique attractor of \eqref{eq:f} for any $\xi \in \cR_n$.
Theorem \ref{th:affinelyConjugate} tells us it has $2^n$ connected components and is the closure
of the unstable manifold of a saddle-type period-$2^n$ solution.
Each component $S_i$ is invariant under $2^n$ iterations of $f_\xi$.
Equation \eqref{eq:affinelyConjugate} tells us that the dynamics of $f_\xi^{2^n}$ on $S_i$ 
is equivalent (under an affine coordinate change) to that of $f_{g^n(\xi)}$ on $\Lambda(g^n(\xi))$.
Since $g^n(\xi) \in \cR_0$, the properties listed in Theorem \ref{th:R0} apply to $f_\xi^{2^n}$ on $S_i$.
Thus \eqref{eq:Siunion} is a chaotic Milnor attractor of $f_\xi$.

As an example, consider $f_\xi$ with
\begin{equation}
\xi_{\rm ex} = (1.15,0.01,-1.12,0.01) \in \cR_2 \,.
\label{eq:xiExample}
\end{equation}
Fig.~\ref{fig:igRN_phasePortrait2}-a shows $1000$ points
of the forward orbit of the origin after transient behaviour has decayed.
As expected these points appear to converge to a chaotic attractor with four connected components.
By Theorem \ref{th:affinelyConjugate} each component is affinely conjugate to $\Lambda(g^2(\xi))$
which is approximated in Fig.~\ref{fig:igRN_phasePortrait2}-b by again iterating the origin.
The set $\Lambda(g^2(\xi))$ has a complicated branched structure
but this is not visible in Fig.~\ref{fig:igRN_phasePortrait2}-b
because the determinants are extremely small.

\begin{figure}[b!]
\begin{center}
\setlength{\unitlength}{1cm}
\begin{picture}(16.7,8.5)
\put(.2,0){\includegraphics[height=8cm]{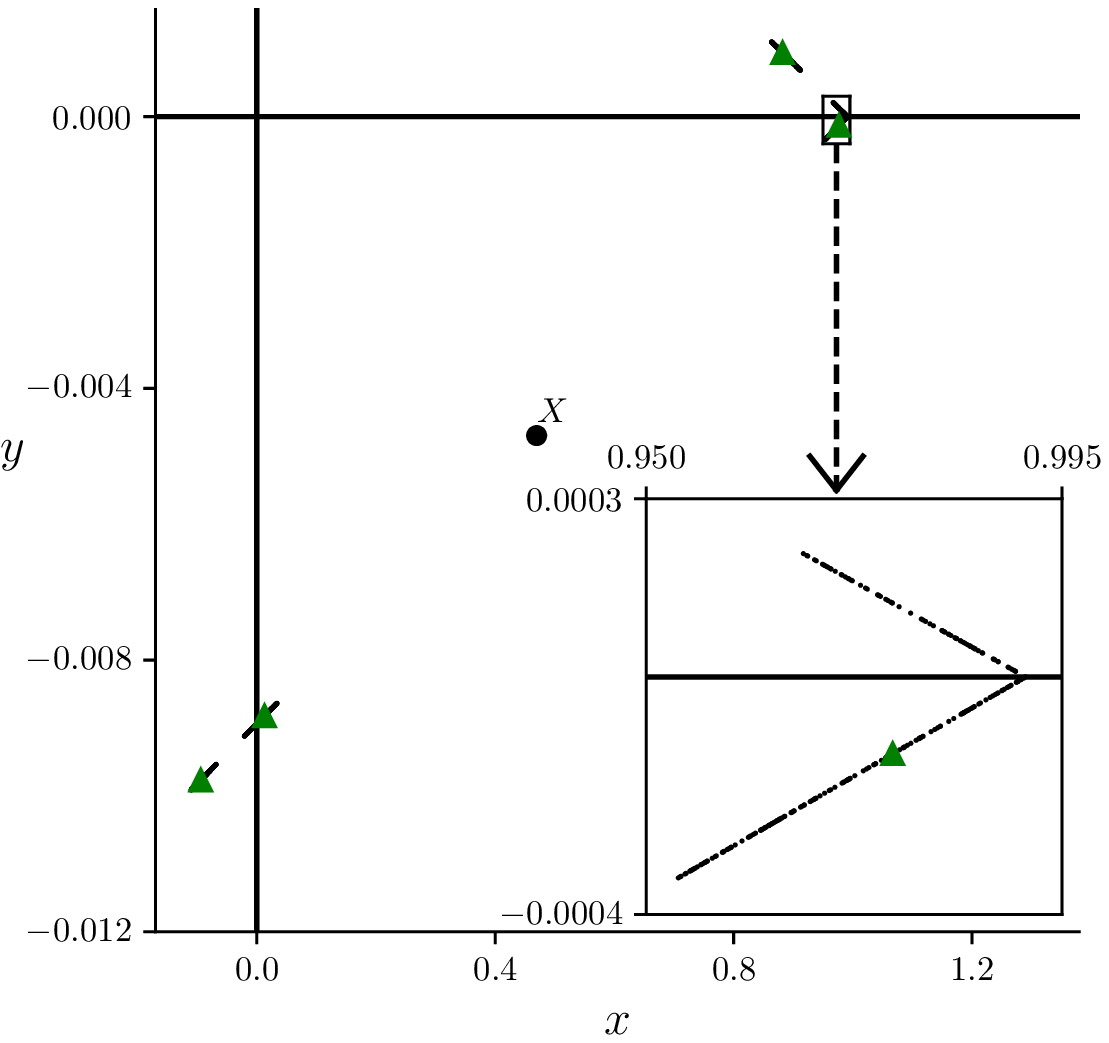}}
\put(9.1,0){\includegraphics[height=8cm]{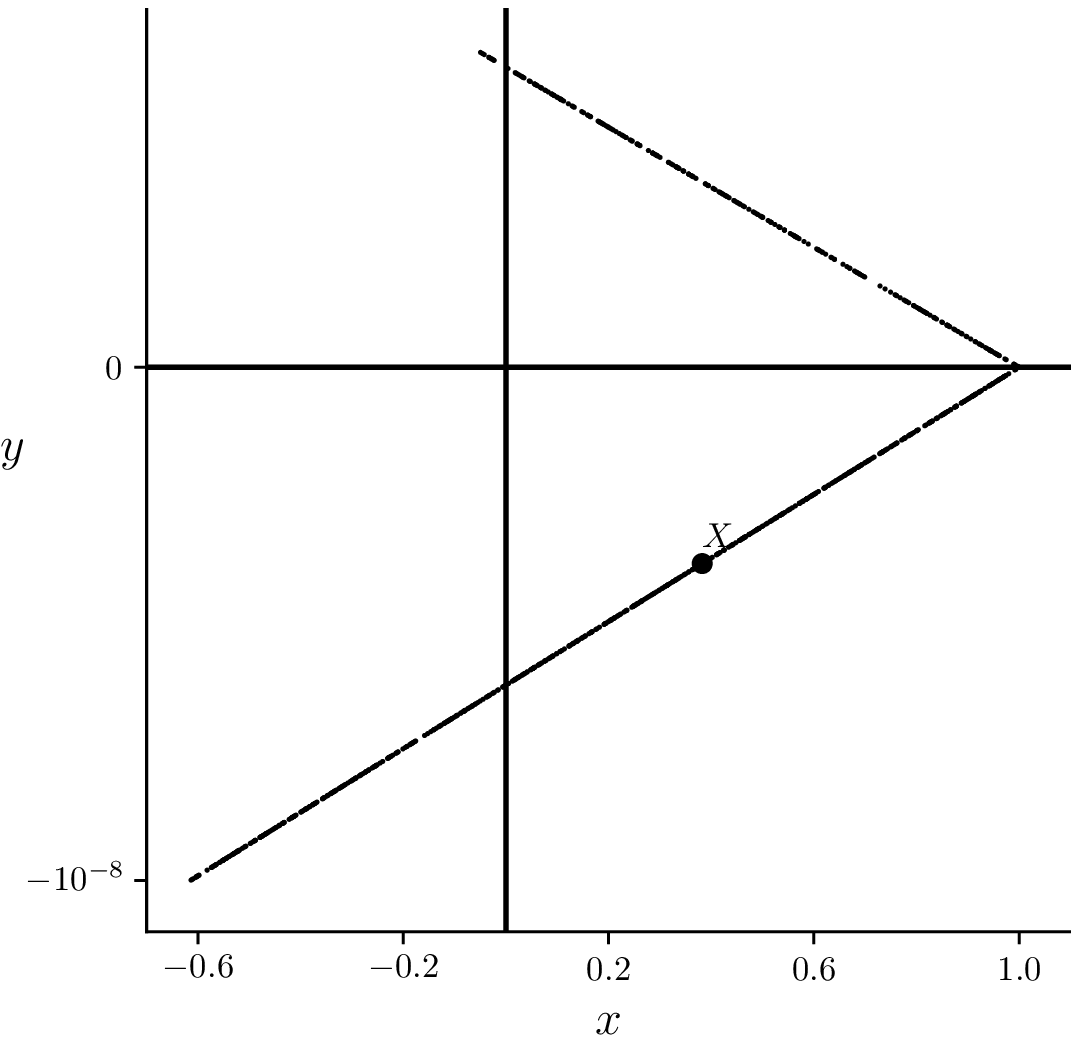}}
\put(3.6,8.2){\small {\bf a)}~~$\xi = \xi_{\rm ex}$}
\put(12.5,8.2){\small {\bf b)}~~$\xi = g^2(\xi_{\rm ex})$}
\end{picture}
\caption{
Numerically computed attractors of $f_\xi$ with $\xi = \xi_{\rm ex}$, \eqref{eq:xiExample}, in panel (a),
and $\xi = g^2(\xi_{\rm ex})$ in panel (b).
In panel (a) the four small triangles are the points of a periodic solution
with symbolic itinerary $\cF^2(R) = RRLR$.
\label{fig:igRN_phasePortrait2}
} 
\end{center}
\end{figure}


\section{The stable and unstable manifolds of the fixed points}
\label{sec:XY}
\setcounter{equation}{0}

In this section we discuss the stable and unstable manifolds of the saddle fixed points $X$ and $Y$.
Here and throughout the paper
\begin{equation}
0 < \lambda_L^s < 1 < \lambda_L^u
\label{eq:eigsAL}
\end{equation}
denote the eigenvalues of $A_L$, and
\begin{equation}
\lambda_R^u < -1 < \lambda_R^s < 0
\label{eq:eigsAR}
\end{equation}
denote the eigenvalues of $A_R$.
These are functions of $\xi$ and assume $\xi \in \Phi$.

\subsection{Stable and unstable manifolds of piecewise-linear maps}
\label{sub:P}

Let $P$ be one of the saddle fixed points $X$ or $Y$.
The stable manifold of $P$ is defined as
\begin{equation}
W^s(P) = \left\{ z \in \mathbb{R}^2 \setminus \{ P \} \,\big|\, f_\xi^n(z) \to P ~\text{as}~ n \to \infty \right\}.
\label{eq:WsDefn}
\end{equation}
For all $\xi \in \Phi$ the map $f_\xi$ is invertible so the unstable manifold of $P$ is defined analogously as
\begin{equation}
W^u(P) = \left\{ z \in \mathbb{R}^2 \setminus \{ P \} \,\big|\, f_\xi^{-n}(z) \to P ~\text{as}~ n \to \infty \right\}.
\label{eq:WuDefn}
\end{equation}
Since $P$ is a saddle, $W^s(P)$ and $W^u(P)$ are one-dimensional.
As with smooth maps, from $P$ they emanate tangent
to the stable and unstable subspaces $E^s(P)$ and $E^u(P)$.
These subspaces are the lines through $P$ with directions given by the eigenvectors of $\rD f_\xi (P)$.
But since $f_\xi$ is piecewise-linear,
$W^s(P)$ and $W^u(P)$ in fact {\em coincide} with $E^s(P)$ and $E^u(P)$ in a neighbourhood of $P$.
Globally they have a piecewise-linear structure: 
$W^s(P)$ has kinks on the switching line $x=0$ and on the backward orbits of these points;
$W^u(P)$ has kinks on the image of switching line, $y=0$, and on the forward orbits of these points.

In the remainder of this section we reproduce the geometric constructions
of \cite{GlSi21} that will be needed below.

\subsection{The stable and unstable manifolds of $Y$}
\label{sub:Y}

\begin{figure}[b!]
\begin{center}
\includegraphics[height=8cm]{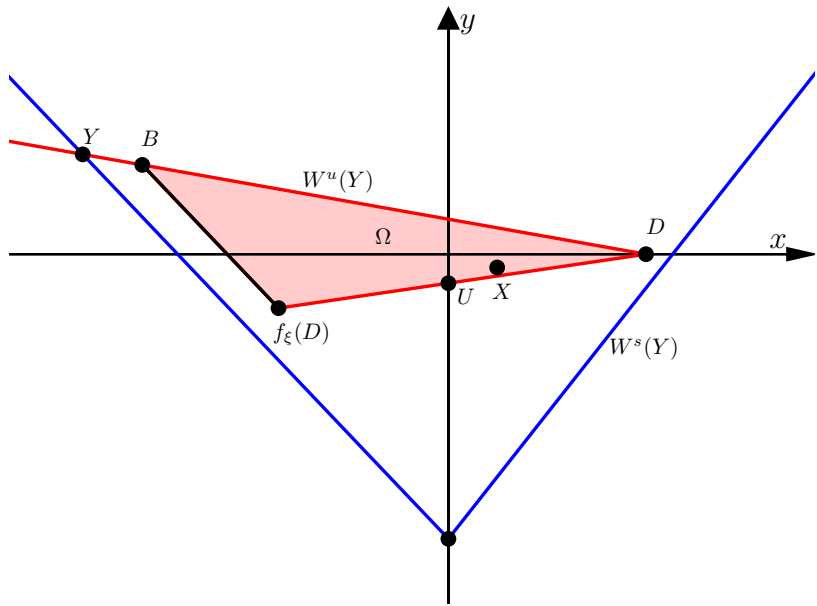}
\caption{
A sketch of the phase space of $f_\xi$ with $\xi \in \Phi_{\rm BYG}$.
The triangle $\Omega(\xi)$ is shaded.
\label{fig:igRN_Y}
} 
\end{center}
\end{figure}

Since the eigenvalues of $A_L$ are positive,
$W^s(Y)$ and $W^u(Y)$ each have two dynamically independent branches.
Let $D$ denote the first kink of the right branch of $W^u(Y)$ as we follow it outwards from $Y$,
see Fig.~\ref{fig:igRN_Y}.
Notice $D$ is the intersection of $E^u(Y)$ with $y=0$.
Now let $B$ denote the intersection of $E^u(Y)$ with the line through $f_\xi(D)$ and parallel to $E^s(Y)$.
Then let $\Omega(\xi)$ be the closed compact triangle with vertices $D$, $f_\xi(D)$, and $B$.

The following result says $\Omega(\xi)$ is forward invariant under $f_\xi$.
This was proved in \cite{GlSi21} by direct calculations.
The key observation is that $f_\xi(D)$ lies to the right of $E^s(Y)$ because $\phi(\xi) > 0$.

\begin{proposition}
For any $\xi \in \Phi_{\rm BYG}$, $f_\xi \left( \Omega(\xi) \right) \subset \Omega(\xi)$.
\label{pr:Omega}
\end{proposition}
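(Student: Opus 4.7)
The plan is to exploit the piecewise-linear structure of $f_\xi$. Split $\Omega(\xi)$ along the switching line into $\Omega_L = \Omega(\xi) \cap \{x \le 0\}$ and $\Omega_R = \Omega(\xi) \cap \{x \ge 0\}$; on each subpolygon $f_\xi$ restricts to an affine map ($f_{L,\xi}$ or $f_{R,\xi}$). Since $\Omega(\xi)$ is convex and the image of a polygon under an affine map equals the convex hull of its vertex images, it suffices to verify that the images under $f_{L,\xi}$ and $f_{R,\xi}$ of the three original vertices $D$, $f_\xi(D)$, $B$, together with the (at most two) intersections of $\partial\Omega(\xi)$ with $\{x=0\}$, all lie in $\Omega(\xi)$.

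Several of these checks are forced by construction. Let $C := E^u(Y) \cap \{x=0\}$, which lies on the edge $\overline{BD}$; a direct substitution gives $f_\xi(C) = D$, and by definition $f_{R,\xi}(D) = f_\xi(D)$. Since $f_{L,\xi}$ preserves $E^u(Y)$ and acts on it by scaling by $\lambda_L^u$ about $Y$, the portion of $\overline{BD}$ lying in $\{x \le 0\}$ maps under $f_{L,\xi}$ to another segment of $E^u(Y)$ with $C$ carried to $D$, while $f_{R,\xi}$ sends $\overline{CD}$ bijectively onto the edge $\overline{D f_\xi(D)}$. The other intersection $E := \overline{B f_\xi(D)} \cap \{x=0\}$ lies on the switching line, so $f_\xi(E)$ lies on $\{y=0\}$. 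What remains is to check that $f_{L,\xi}(B)$ (or $f_{R,\xi}(B)$, depending on the half-plane $B$ occupies), together with the image of $E$ from the relevant side, lie on the correct side of the edge $\overline{B f_\xi(D)}$ and within the segment $\Omega(\xi) \cap \{y=0\}$.

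The edge $\overline{B f_\xi(D)}$ is parallel to $E^s(Y)$; in the eigenbasis of $A_L(\xi)$ about $Y$ it is characterized by a single value of the unstable-eigendirection coordinate $\alpha$, which $B$ and $f_\xi(D)$ share by construction. The condition $\phi(\xi) > 0$ is, as established in \cite{GlSi21}, a positive multiple of the signed displacement of $f_\xi(D)$ from $E^s(Y)$ (equivalently, of this shared $\alpha$-coordinate), so $\phi(\xi) > 0$ places $f_\xi(D)$ on the required side. The main obstacle is the algebraic reduction of the image inequalities for $B$ and $E$ to $\phi(\xi) > 0$. Using the closed-form expressions for $D$, $f_\xi(D) = (\tau_R D_1 + 1, -\delta_R D_1)$, and $B$, together with the eigenvalue identities $\tau_L = \lambda_L^u + \lambda_L^s$ and $\delta_L = \lambda_L^u \lambda_L^s$ (and the analogous relations for $A_R(\xi)$), one expands the $\alpha$-coordinates of the relevant images and verifies that the resulting inequalities collapse to positive multiples of $\phi(\xi)$. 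Executing this expansion is the direct calculation referenced in \cite{GlSi21}.
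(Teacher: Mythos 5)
Your overall plan --- split $\Omega(\xi)$ along the switching line, use the fact that the affine image of a convex polygon is the convex hull of its vertex images, and trace the key inequality back to $\phi(\xi)>0$ --- is the natural one and matches the paper's pointer: the paper does not reproduce the argument but cites \cite{GlSi21} for a direct calculation whose key observation is that $f_\xi(D)$ lies to the right of $E^s(Y)$ precisely because $\phi(\xi)>0$. However, your execution has concrete gaps. First, you misplace the second switching-line crossing: since $D_1=\tfrac{1}{1-\lambda_L^s}>1$ and $\tau_R<-1$ give $f_\xi(D)_1=\tau_R D_1+1<0$, and $B$ lies still further left along $E^u(Y)$, the edge $\roverline{B f_\xi(D)}$ is entirely in the open left half-plane; the point you call $E$ must be $\roverline{D f_\xi(D)}\cap\{x=0\}$, not $\roverline{B f_\xi(D)}\cap\{x=0\}$. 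Second, and more seriously, your list of remaining checks omits the vertex $f_\xi(D)$ itself: it is a vertex of $\Omega_L$, so you must also show $f_{L,\xi}(f_\xi(D))=f_\xi^2(D)\in\Omega(\xi)$, and in particular that $f_\xi^2(D)$ lies on the same side of the line through $D$ and $f_\xi(D)$ as $B$. This is one of the substantive inequalities and it appears nowhere in your outline.

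Third, the closing claim that the remaining inequalities ``collapse to positive multiples of $\phi(\xi)$'' is both unexecuted and inaccurate. Writing $z=Y+a\,v_u+b\,v_s$ in the eigenbasis of $A_L$ about $Y$, so that $E^s(Y)$ is $\{a=0\}$ and $\roverline{Bf_\xi(D)}$ is $\{a=a(f_\xi(D))\}$, the single condition $a(f_\xi(D))>0$ is indeed equivalent to $\phi(\xi)>0$ and places $f_{L,\xi}(B)$ and $f_\xi^2(D)$ on the correct side of $\roverline{Bf_\xi(D)}$. But the other containments are not multiples of $\phi$: that $f_{L,\xi}(B)$ does not overshoot $D$ follows instead from $B_1\le 0$ (because $f_{L,\xi}$ scales $a$ by $\lambda_L^u$ and carries $C$ to $D$); that $f_\xi^2(D)$ lies on the $\Omega$-side of $\roverline{Df_\xi(D)}$ reduces to $\lambda_L^u\left(\tau_R+1-\lambda_L^s\right)\le\delta_R$, which holds throughout $\Phi$ for reasons unrelated to $\phi$; and the check that $f_\xi(E)$ lands in $\Omega(\xi)\cap\{y=0\}$ is yet another separate computation. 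Since these verifications are exactly the ``direct calculation'' you defer to \cite{GlSi21}, the proposal as written is a plan for a proof rather than a proof.
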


The next result tells us that the attractor of Theorem \ref{th:R0} is contained in $\Omega(\xi)$.

\begin{lemma}
For any $\xi \in \Phi_{\rm BYG}$, $\Lambda(\xi) \subset \Omega(\xi)$.
\label{le:LambdaInOmega}
\end{lemma}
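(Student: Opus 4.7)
My plan is to leverage the forward invariance of $\Omega(\xi)$ from Proposition \ref{pr:Omega} together with the definition \eqref{eq:WuDefn} of the unstable manifold. Because $\Omega(\xi)$ is closed and $\Lambda(\xi) = {\rm cl}(W^u(X))$, it suffices to prove $W^u(X) \subset \Omega(\xi)$. The entire argument then reduces to the single geometric claim that the saddle $X$ lies in the interior of $\Omega(\xi)$.

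The crux is verifying $X \in {\rm int}(\Omega(\xi))$ for every $\xi \in \Phi_{\rm BYG}$. The triangle has three edges: the segment $\lineSeg{D}{B}$ along $E^u(Y)$, the segment through $f_\xi(D)$ parallel to $E^s(Y)$, and the linear segment $\lineSeg{D}{f_\xi(D)}$ of $W^u(Y)$ in the right half-plane. I would write each edge as a linear equation using the explicit formula \eqref{eq:Y} for $Y$ together with the eigenvectors of $A_L(\xi)$, then evaluate at the coordinates of $X$ given by \eqref{eq:X}. The edge parallel to $E^s(Y)$ is the decisive one: the signed distance from $X$ to this line should, after simplification, agree up to a strictly positive factor with $\phi(\xi)$ as defined in \eqref{eq:phi}. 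This is essentially the same computation that yielded $\phi$ in \cite{GlSi21} for the image $f_\xi(D)$, now applied to $X$ instead. The other two one-sided inequalities I expect to follow from the sign conditions $\tau_L > \delta_L + 1 > 0$, $\delta_L > 0$, $\tau_R < -(\delta_R+1) < 0$, and $\delta_R > 0$ imposed by $\xi \in \Phi$.

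Once $X \in {\rm int}(\Omega(\xi))$ is established, the proof concludes quickly. Fix any $z \in W^u(X)$. By \eqref{eq:WuDefn} the backward orbit $f_\xi^{-n}(z)$ converges to $X$, so for $N$ sufficiently large $f_\xi^{-N}(z) \in \Omega(\xi)$. Iterating Proposition \ref{pr:Omega} gives
\begin{equation*}
z = f_\xi^N\bigl( f_\xi^{-N}(z) \bigr) \in f_\xi^N\bigl( \Omega(\xi) \bigr) \subset \Omega(\xi).
\end{equation*}
Hence $W^u(X) \subset \Omega(\xi)$, and taking closures yields $\Lambda(\xi) \subset \Omega(\xi)$.

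The main obstacle is the algebraic identification of the sign of $\phi(\xi)$ with the correct positioning of $X$ relative to the edge through $f_\xi(D)$. A cleaner alternative, which I would keep in mind if the direct computation becomes unwieldy, is to bypass the interior claim: show that the linear piece of $E^u(X)$ from $X$ out to its first kink already lies inside $\Omega(\xi)$, and then apply forward invariance to that arc to cover all of $W^u(X)$ via $W^u(X) = \bigcup_{n \ge 0} f_\xi^n(\ell)$ for a short initial arc $\ell$ on $E^u(X)$ through $X$. The inequalities needed are essentially the same, but the argument localises the computation to a neighbourhood of $X$ and may be more transparent.
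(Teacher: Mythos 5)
Your proposal is correct and follows essentially the same route as the paper: reduce via the forward invariance of $\Omega(\xi)$ (Proposition \ref{pr:Omega}) and the closedness of $\Omega(\xi)$ to showing $X$ lies inside the triangle, verify that by explicit coordinate computations against its edges, and then carry each point of $W^u(X)$ into $\Omega(\xi)$ along its backward orbit. One correction of expectation: in the paper the decisive edge is the one through $D$ and $f_\xi(D)$, and the relevant signed distance of $X$ from that line is positive using only $\xi \in \Phi$ and $0 < \lambda_L^s < 1$ (the remaining constraints reduce to $X_1 > 0$ and $X_2 < 0$), so the hypothesis $\phi(\xi) > 0$ enters this lemma solely through Proposition \ref{pr:Omega} rather than through the position of $X$ relative to the edge parallel to $E^s(Y)$.
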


\begin{proof}
Since $\Omega(\xi)$ is forward invariant we only need to show $X \in \Omega(\xi)$.
By direct calculations 
we find that the line through $D$ and $f_\xi(D)$ is $y = \ell(x)$ where
\begin{equation}
\ell(x) = \frac{\delta_R}{\lambda_L^s - \tau_R} \left( x - \frac{1}{1 - \lambda_L^s} \right).
\nonumber
\end{equation}
From \eqref{eq:X} we obtain, after much simplification,
\begin{equation}
X_2 - \ell(X_1) = \frac{\delta_R \left( \lambda_L^{s^2} - \tau_R \lambda_L^s + \delta_R \right)}
{(\delta_R + 1 - \tau_R) \left( \lambda_L^s - \tau_R \right) \left( 1 - \lambda_L^s \right)}.
\nonumber
\end{equation}
In view of \eqref{eq:saddleSaddleRegion} and \eqref{eq:eigsAL},
each factor in this expression is positive,
thus $X$ lies above the line through $D$ and $f_\xi(D)$.
Also $X_1 > 0$ and $X_2 < 0$, thus $X \in \Omega(\xi)$ as required.
\end{proof}

\subsection{The stable and unstable manifolds of $X$}
\label{sub:X}

\begin{figure}[b!]
\begin{center}
\setlength{\unitlength}{1cm}
\begin{picture}(16.5,8.6)
\put(0,0){\includegraphics[height=8cm]{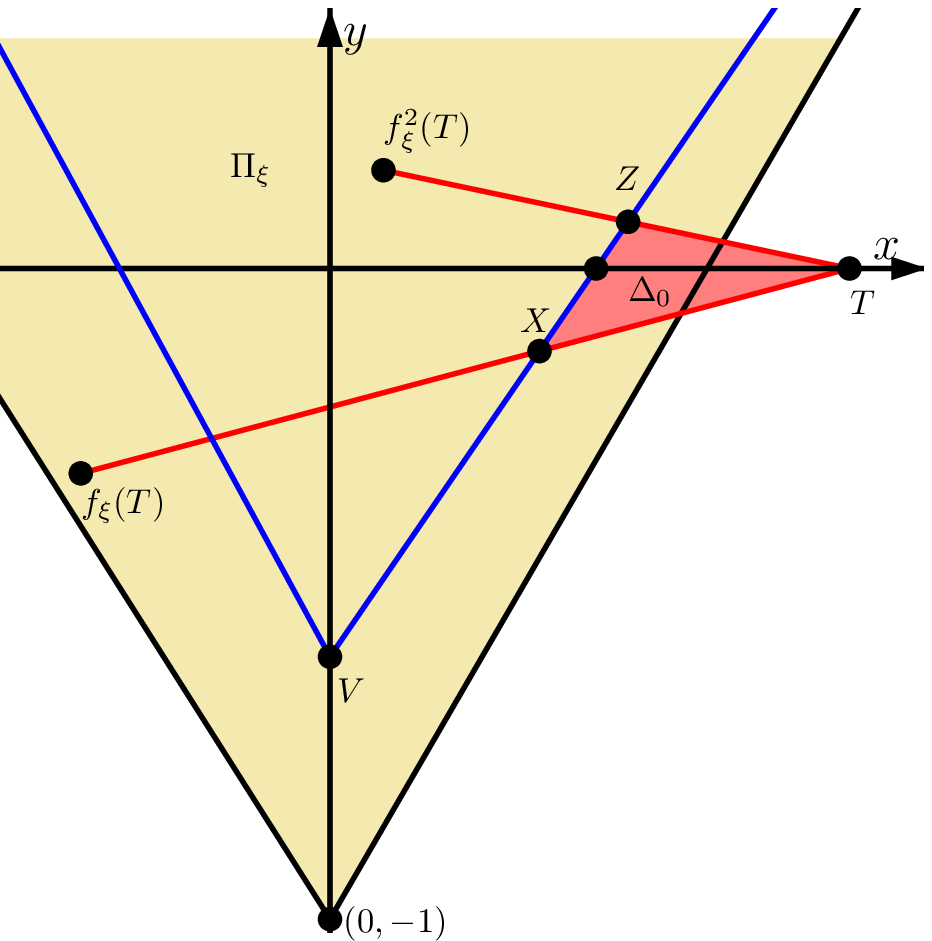}}
\put(8.5,0){\includegraphics[height=8cm]{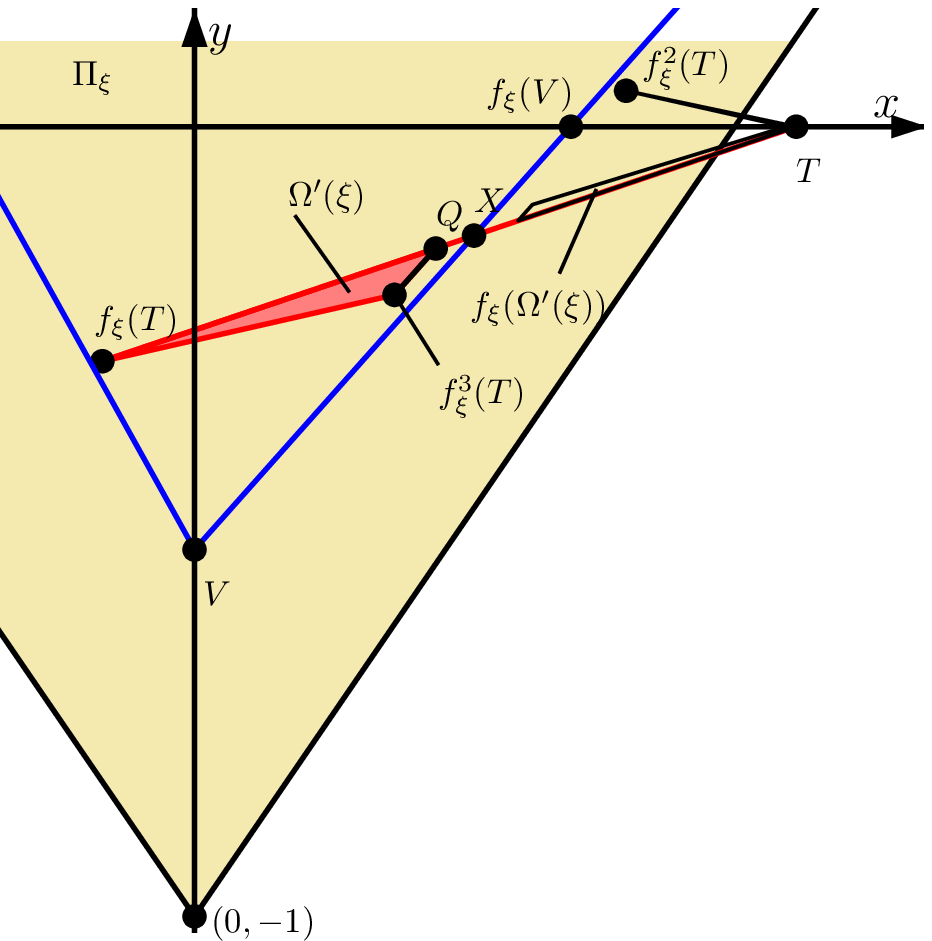}}
\put(3.4,8.3){\small {\bf a)}~~$\xi \in \cR_0$}
\put(11.5,8.3){\small {\bf b)}~~$\xi \in \cR_n \,, n \ge 1$}
\end{picture}
\caption{
Sketches of phase space with $\xi \in \cR_0$ in panel (a) and $\xi \in \cR_n$ with $n \ge 1$ in panel (b).
The set $\Delta_0$ in panel (a) is introduced in \S\ref{sub:R0Proof}.
The set $\Omega'$ in panel (b) is introduced in \S\ref{sub:OmegaPrime}.
\label{fig:igRN_X}
} 
\end{center}
\end{figure}

Since the eigenvalues of $A_R$ are negative,
$W^s(X)$ and $W^u(X)$ each have one dynamically independent branch.
Let $T$ denote the intersection of $E^u(X)$ with $y=0$
and let $V$ denote the intersection of $E^s(X)$ with $x=0$,
see Fig.~\ref{fig:igRN_X}.
It is easily shown that
\begin{equation}
T = \left( \frac{1}{1 - \lambda_R^s}, 0 \right).
\label{eq:T}
\end{equation}
If $f_\xi^2(T)$ lies to the left of $E^s(X)$, as in Fig.~\ref{fig:igRN_X}-a,
then $W^s(X)$ and $W^u(X)$ intersect transversely.
If $f_\xi^2(T)$ lies to the right of $E^s(X)$, as in Fig.~\ref{fig:igRN_X}-b,
then $W^s(X)$ and $W^u(X)$ have no intersection.
The following result was obtained in \cite{Gl17} by calculating $f_\xi^2(T)$ explicitly.

\begin{proposition}
For any $\xi \in \Phi$, $f_\xi^2(T)$ lies to the left of $E^s(X)$ if and only if $\psi(\xi) > 0$, where
\begin{align}
\psi(\xi) = (\tau_L \tau_R - \delta_R) \lambda_R^u
+ \left( \frac{\delta_L}{\delta_R} + \delta_L - 1 \right) \lambda_R^s
- \tau_L (1 + \delta_R) + \tau_R (1 - \delta_L).
\label{eq:psi}
\end{align}
\label{pr:psi}
\end{proposition}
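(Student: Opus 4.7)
The plan is to compute $f_\xi^2(T)$ explicitly, evaluate a linear functional vanishing on $E^s(X)$, and show that the sign of the result matches the sign of $\psi(\xi)$.

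First I would compute $f_\xi(T)$. Since $T$ lies on $E^u(X)$ in the right half-plane, $f_\xi(T) - X = \lambda_R^u (T - X)$. Substituting $T = \bigl( 1/(1-\lambda_R^s),\, 0 \bigr)$ and simplifying with $\tau_R = \lambda_R^u + \lambda_R^s$ and $\delta_R = \lambda_R^u \lambda_R^s$ produces
\begin{equation*}
f_\xi(T) = \left( \frac{\lambda_R^u + 1}{1-\lambda_R^s},\ \frac{-\delta_R}{1-\lambda_R^s} \right).
\end{equation*}
Because $\lambda_R^u < -1 < 0 < 1 - \lambda_R^s$, the first coordinate of $f_\xi(T)$ is negative, so $f_\xi^2(T) = f_{L,\xi}(f_\xi(T))$; applying $f_{L,\xi}$ yields an explicit rational function of the parameters for $f_\xi^2(T)$.

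Next I would encode which side of $E^s(X)$ a point lies on. The stable eigenvector of $A_R$ associated with $\lambda_R^s$ is $(1,-\lambda_R^u)$, so $E^s(X)$ is the zero set of the linear functional
\begin{equation*}
q(x,y) = \lambda_R^u (x - X_1) + (y - X_2),
\end{equation*}
and evaluating $q$ at a convenient reference point (for instance the origin) fixes the sign convention so that "to the left of $E^s(X)$" corresponds to $q > 0$. Substituting $f_\xi^2(T)$ into $q$ and clearing the positive denominator $(1-\lambda_R^u)(1-\lambda_R^s)$ leaves a polynomial $N(\xi)$ whose sign equals that of $q(f_\xi^2(T))$.

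The main obstacle is matching $N(\xi)$ to $\psi(\xi)$, since $\psi$ as written mixes a $\delta_L/\delta_R$ denominator with both eigenvalues $\lambda_R^u$ and $\lambda_R^s$. I would reduce $N$ to degree one in $\lambda_R^u$ by repeatedly applying $(\lambda_R^u)^2 = \tau_R \lambda_R^u - \delta_R$ and $\lambda_R^u \lambda_R^s = \delta_R$, and in parallel multiply $\psi$ by $\lambda_R^u$ using $\lambda_R^s = \delta_R/\lambda_R^u$ so that the $\delta_L/\delta_R$ term becomes a polynomial. I expect the coefficient of $\lambda_R^u$ and the constant term to match after cancellation, giving $N(\xi) = -\lambda_R^u \,\psi(\xi)$. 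Since $\lambda_R^u < 0$ throughout $\Phi$, the factor $-\lambda_R^u$ is positive, so $N(\xi)$ and $\psi(\xi)$ share a sign; hence $f_\xi^2(T)$ lies to the left of $E^s(X)$ if and only if $\psi(\xi) > 0$.
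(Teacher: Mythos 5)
Your proposal is correct and takes essentially the route the paper itself uses: the paper gives no proof beyond attributing the result to \cite{Gl17}, where it is ``obtained by calculating $f_\xi^2(T)$ explicitly,'' which is exactly your computation of $f_\xi^2(T)$ followed by evaluation of a linear functional vanishing on $E^s(X)$. The identity you anticipate does hold exactly --- after clearing the positive factor $(1-\lambda_R^u)(1-\lambda_R^s)$ one gets $N(\xi) = -\lambda_R^u\,\psi(\xi)$ with $-\lambda_R^u>0$ --- so the sign argument closes as you describe.
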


As a bifurcation, $\psi(\xi) = 0$ is a homoclinic corner for the fixed point $X$.
This is analogous to the surface $\phi(\xi) = 0$ for the fixed point $Y$ as discussed in \S\ref{sub:phiBYG}.

\section{The second iterate of $f_\xi$}
\label{sec:f2}
\setcounter{equation}{0}

As discussed in \S\ref{sub:renormOp},
on $\Pi_\xi$ the second iterate of $f_\xi$
is a continuous, piecewise-linear map with two pieces, \eqref{eq:f2}.
Next in \S\ref{sub:conjugacy} we provide the affine transformation that converts \eqref{eq:f2} to the normal form \eqref{eq:f}.
Then in \S\ref{sub:psiAgain} we show that the bifurcation surface $\psi(\xi) = 0$ of the previous section
is in fact identical to $\zeta_1(\xi) = \phi(g(\xi)) = 0$.


\subsection{A transformation to the normal form}
\label{sub:conjugacy}

Any continuous, two-piece, piecewise-linear map on $\mathbb{R}^2$
for which the image of the switching line intersects the switching line at a unique point that is not a fixed point
can be transformed to \eqref{eq:f} under an affine coordinate transformation.
The required transformation is described in the original work \cite{NuYo92}.
For the generalisation to $n$ dimensions refer to \cite{Si16}.

The switching line of \eqref{eq:f2} satisfies this condition for any $\xi \in \Phi$.
As clarified by Proposition \ref{pr:conjugacy}, the required coordinate transformation is
\begin{equation}
h_\xi(x,y) = \frac{1}{\tau_R + \delta_R + 1} \begin{bmatrix} x \\ \delta_R x + \tau_R y - \delta_R \end{bmatrix}.
\label{eq:h}
\end{equation}

\begin{proposition}
For any $\xi \in \Phi$,
\begin{equation}
f_\xi^2 = h_\xi^{-1} \circ f_{g(\xi)} \circ h_\xi \,,
\label{eq:conjugacy}
\end{equation}
on $\Pi_\xi$.
\label{pr:conjugacy}
\end{proposition}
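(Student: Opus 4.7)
The plan is to prove \eqref{eq:conjugacy} by a direct piecewise verification. First I would check that $h_\xi$ is an invertible affine map: its linear part has determinant $\tau_R / (\tau_R + \delta_R + 1)^2$, which is nonzero for every $\xi \in \Phi$. Two further observations structure the whole argument. First, $h_\xi$ preserves the switching line $x = 0$. Second, throughout $\Phi$ we have $\tau_R + \delta_R + 1 < 0$, so the scaling $1/(\tau_R + \delta_R + 1)$ in the first coordinate of $h_\xi$ flips orientation, and $h_\xi$ therefore interchanges the two open half-planes. This swap is exactly what pairs the left piece $f_{R,\xi} \circ f_{L,\xi}$ of \eqref{eq:f2} with the right piece $f_{R,g(\xi)}$ of the target, and the right piece $f_{R,\xi}^2$ of \eqref{eq:f2} with the left piece $f_{L,g(\xi)}$.

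With this pairing fixed, \eqref{eq:conjugacy} reduces to verifying the two affine equalities $h_\xi \circ (f_{R,\xi} \circ f_{L,\xi}) = f_{R,g(\xi)} \circ h_\xi$ on $\{x \le 0\}$ and $h_\xi \circ f_{R,\xi}^2 = f_{L,g(\xi)} \circ h_\xi$ on $\{x \ge 0\}$. Each is an equality between two affine maps $\mathbb{R}^2 \to \mathbb{R}^2$, so it suffices to compare linear parts and constant terms. For the linear parts, one computes $\mathrm{tr}(A_R A_L) = \tau_L \tau_R - \delta_L - \delta_R$, $\det(A_R A_L) = \delta_L \delta_R$, $\mathrm{tr}(A_R^2) = \tau_R^2 - 2 \delta_R$, $\det(A_R^2) = \delta_R^2$; these match the trace-determinant data defining $A_{R,g(\xi)}$ and $A_{L,g(\xi)}$ through \eqref{eq:g}. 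A short multiplication against the linear part of $h_\xi$ then confirms that conjugation by $h_\xi$ in fact sends $A_R A_L$ and $A_R^2$ to the companion-form matrices $A_{R,g(\xi)}$ and $A_{L,g(\xi)}$ (not merely to similar matrices). For the constant terms, the verification collapses to the single identity $c(\tau_R + \delta_R + 1) = 1$, i.e.\ the defining relation of $c := 1/(\tau_R + \delta_R + 1)$; in both pieces the constant vector produced on the left is $(c(\tau_R+1),\,-c\delta_R)$, which agrees with the constant vector $(1,0)$ inside the target normal form after applying $h_\xi$ to the border-collision point $(0,-1)$.

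There is no conceptual obstacle; the entire proof is a finite linear-algebra bookkeeping exercise in four parameters. The only point that requires care is tracking the sign of $\tau_R + \delta_R + 1$ so that the $L$ and $R$ labels on the target family are assigned correctly, and carrying the common denominator through several compositions without sign errors. A more abstract alternative would be to invoke the normal form theorem of \cite{NuYo92} (or the generalisation in \cite{Si16}) after verifying that \eqref{eq:f2} satisfies the required non-degeneracy hypothesis: its switching line $x=0$ is mapped by each piece to a line meeting $x=0$ transversely at the single point $(0,-1)$, which is not a fixed point for any $\xi \in \Phi$. The theorem then produces a unique affine conjugacy to \eqref{eq:f}, and the two trace-determinant pairs together with the border-collision point identify the conjugacy as $h_\xi$ and the target parameters as $g(\xi)$.
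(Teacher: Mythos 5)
Your proposal is correct and takes essentially the same route as the paper: a direct piecewise computation verifying $h_\xi \circ f_\xi^2 = f_{g(\xi)} \circ h_\xi$, with the sign of $\tau_R + \delta_R + 1 < 0$ forcing the pairing of the left piece of $f_\xi^2$ with the right piece of $f_{g(\xi)}$ and vice versa, exactly as in the paper's proof and its subsequent remark on ${\rm sgn}(\tilde{x}) = -{\rm sgn}(x)$. The only blemish is a small arithmetic slip in your stated constant vector of $h_\xi$ composed with a piece of $f_\xi^2$: it should be $\left( \frac{\tau_R+1}{\tau_R+\delta_R+1}, 0 \right)$ rather than $\left( \frac{\tau_R+1}{\tau_R+\delta_R+1}, \frac{-\delta_R}{\tau_R+\delta_R+1} \right)$, which does not affect the validity of the argument.
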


\begin{proof}
By directly composing \eqref{eq:fLfR} and \eqref{eq:h} we obtain
\begin{equation}
h_\xi \circ f_\xi^2 = \begin{cases}
\dfrac{1}{\tau_R + \delta_R + 1} \begin{bmatrix}
\left( \tau_R^2 - \delta_R \right) x + \tau_R y + \tau_R + 1 \\
-\delta_R^2 x
\end{bmatrix}, & x \le 0, \\
\dfrac{1}{\tau_R + \delta_R + 1} \begin{bmatrix}
\left( \tau_L \tau_R - \delta_L \right) x + \tau_R y + \tau_R + 1 \\
-\delta_L \delta_R x
\end{bmatrix}, & x \ge 0,
\end{cases}
\nonumber
\end{equation}
and it is readily seen that $f_{g(\xi)} \circ h_\xi$ produces the same expression.
\end{proof}

Write $(\tilde{x},\tilde{y}) = h_\xi(x,y)$.
Notice that $x$ and $\tilde{x}$ have opposite signs, i.e.
\begin{equation}
{\rm sgn}(x) = -{\rm sgn}(\tilde{x}).
\label{eq:oppositeSigns}
\end{equation}
This is because $\tau_R + \delta_R + 1 < 0$ by \eqref{eq:saddleSaddleRegion}.
Thus the left piece of $f_{g(\xi)}$
corresponds to the right piece of $f_\xi^2$ in \eqref{eq:f2},
and this is consistent with how $g$ was introduced in \S\ref{sub:renormOp}.

\subsection{A reinterpretation of $\psi$}
\label{sub:psiAgain}

In \S\ref{sub:X} we saw that the fixed point $X$ of $f_\xi$ has a homoclinic corner when $\psi(\xi) = 0$.
The same is true for $f_\xi^2$: its fixed point $X$ has a homoclinic corner when $\psi(\xi) = 0$.
Notice $X$ is a fixed point of $f_{R,\xi}^2$, which is transformed under \eqref{eq:conjugacy} to $f_{L,g(\xi)}$, which has the fixed point $Y$.
Thus, while the stable and unstable manifolds of $X$ lie in $\Pi_\xi$,
they transform to the stable and unstable manifolds of $Y$ for $f_{g(\xi)}$.
The latter manifolds have a homoclinic corner when $\phi(g(\xi)) = 0$,
which suggests that $\psi(\xi) = 0$ and $\phi(g(\xi)) = 0$ are the same surface.
The following result tells us that this is indeed the case.

\begin{lemma}
For any $\xi \in \Phi$,
\begin{equation}
\phi(g(\xi)) = \tau_R \lambda_R^{u^2} \psi(\xi).
\label{eq:psi2}
\end{equation}
\label{le:psizeta1Relationship}
\end{lemma}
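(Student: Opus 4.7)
The plan is to verify the identity by direct algebraic manipulation, using the eigenvalue identities for $A_R(\xi)$, namely $\tau_R = \lambda_R^u + \lambda_R^s$ and $\delta_R = \lambda_R^u \lambda_R^s$. The key preliminary observation is that the larger eigenvalue of $A_L(g(\xi))$ equals $(\lambda_R^u)^2$. Indeed, $A_L(g(\xi))$ has trace $T_L = \tau_R^2 - 2 \delta_R$ and determinant $D_L = \delta_R^2$; rewriting these in terms of $\lambda_R^u, \lambda_R^s$ gives $T_L = (\lambda_R^u)^2 + (\lambda_R^s)^2$ and $D_L = (\lambda_R^u \lambda_R^s)^2$, so the eigenvalues of $A_L(g(\xi))$ are precisely $(\lambda_R^u)^2$ and $(\lambda_R^s)^2$. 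Since $\xi \in \Phi$ forces $|\lambda_R^u| > 1 > |\lambda_R^s|$, we conclude $\lambda_L^u(g(\xi)) = (\lambda_R^u)^2$.

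Next I would rewrite $\phi$ in a more manageable form. Using $\tau_L + \sqrt{\tau_L^2 - 4 \delta_L} = 2 \lambda_L^u$, the definition \eqref{eq:phi} becomes
\begin{equation}
\phi(\xi) = \delta_R - (1 + \tau_R) \delta_L + \big( (1 + \tau_R) \tau_L - \tau_R - \delta_L - \delta_R \big) \lambda_L^u \,. \nonumber
\end{equation}
Substituting the components of $g(\xi)$ from \eqref{eq:g} and replacing $\lambda_L^u(g(\xi))$ by $(\lambda_R^u)^2$, one obtains $\phi(g(\xi))$ as an explicit polynomial in $\tau_L, \delta_L, \tau_R, \delta_R$ and $(\lambda_R^u)^2$.

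To compare this with $\tau_R (\lambda_R^u)^2 \psi(\xi)$, I would reduce both sides to the same canonical form. The cleanest reduction uses the characteristic relation $(\lambda_R^u)^2 = \tau_R \lambda_R^u - \delta_R$, which (together with $\lambda_R^s = \delta_R / \lambda_R^u$) lets us eliminate $\lambda_R^s$ and all powers of $\lambda_R^u$ of degree $\ge 2$ from $\tau_R (\lambda_R^u)^2 \psi(\xi)$. This yields an expression of the form $P + Q \lambda_R^u$ with $P, Q$ polynomials in the four parameters. Applying the same reduction to $\phi(g(\xi))$ yields an expression of the same form, and matching coefficients of $1$ and $\lambda_R^u$ gives the identity.

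The main obstacle is purely computational: tracking terms through the substitution $\xi \mapsto g(\xi)$ and then through the reduction by $(\lambda_R^u)^2 = \tau_R \lambda_R^u - \delta_R$. There is, however, a strong sanity check: both $\phi(g(\xi))$ and $\psi(\xi)$ vanish precisely on the surface where $f_\xi$ (equivalently $f_\xi^2$ on $\Pi_\xi$, conjugate to $f_{g(\xi)}$ by Proposition \ref{pr:conjugacy}) has a homoclinic corner at $X$, so one expects proportionality on geometric grounds; the lemma simply pins down the analytic proportionality factor $\tau_R (\lambda_R^u)^2$.
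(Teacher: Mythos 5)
Your proposal follows essentially the same route as the paper: rewrite $\phi$ in terms of $\lambda_L^u$, substitute the components of $g(\xi)$ from \eqref{eq:g}, replace $\lambda_L^u(g(\xi))$ by $\lambda_R^{u^2}$ (justified exactly as you do, via the trace and determinant of $A_R^2$), and verify the resulting identity algebraically --- a computation the paper likewise leaves as a ``simple (though tedious) exercise.'' Your extra detail on organising that verification by reducing modulo the characteristic relation $\lambda_R^{u^2} = \tau_R \lambda_R^u - \delta_R$ is a sound way to carry it out, but it is not a different method.
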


\begin{proof}
Equation \eqref{eq:phi} can be written as
\begin{equation}
\phi(\xi) = (1+\tau_R) \lambda_L^{u^2} - (\tau_R + \delta_L + \delta_R) \lambda_L^u + \delta_R \,.
\label{eq:phi2}
\end{equation}
To evaluate $\phi(g(\xi))$, in \eqref{eq:phi2} we replace $\delta_L$ with $\delta_R^2$, $\delta_R$ with $\delta_L \delta_R$,
and $\tau_R$ with $\tau_L \tau_R - \delta_L - \delta_R$, see \eqref{eq:g}.
Also we replace $\lambda_L^u$ with $\lambda_R^{u^2}$
because $\lambda_R^{u^2}$ is the unstable eigenvalue of $A_R^2$
(which has trace and determinant given by the first two components of \eqref{eq:g}).
It is a simple (though tedious) exercise to show that upon performing these substitutions and simplifying we obtain
$\tau_R \lambda_R^{u^2} \psi(\xi)$.
\end{proof}

\section{The geometry of the boundary of $\cR_0$}
\label{sec:phiPsi}
\setcounter{equation}{0}


The region $\cR_0 \subset \mathbb{R}^4$ is bounded by $\zeta_0(\xi) = \phi(\xi) = 0$,
$\zeta_1(\xi) = \phi(g(\xi)) = 0$,
and the hyperplanes specified in \eqref{eq:saddleSaddleRegion}.
Since parameter space is four-dimensional these are difficult to visualise.
We can benefit from the fact that the $\delta_L$ and $\delta_R$ components of
$g$ are decoupled from $\tau_L$ and $\tau_R$.
Thus two-dimensional slices
\begin{equation}
\Phi_{\rm slice}(\delta_L,\delta_R) = \left\{ (\tau_L,\tau_R) \,\middle|\,
\tau_L > \delta_L + 1, \tau_R < -\delta_R-1 \right\},
\label{eq:Phislice}
\end{equation}
defined by fixing the values of $\delta_L$ and $\delta_R$,
map to one another under $g$.
In any such slice $\zeta_0(\xi) = 0$ and $\zeta_1(\xi) = 0$ are curves.
In this section we show that for any values $0 < \delta_L < 1$ and $0 < \delta_R < 1$,
these curves have the geometry shown in Fig.~\ref{fig:igRN_phipsi}.

\begin{figure}[b!]
\begin{center}
\includegraphics[height=10cm]{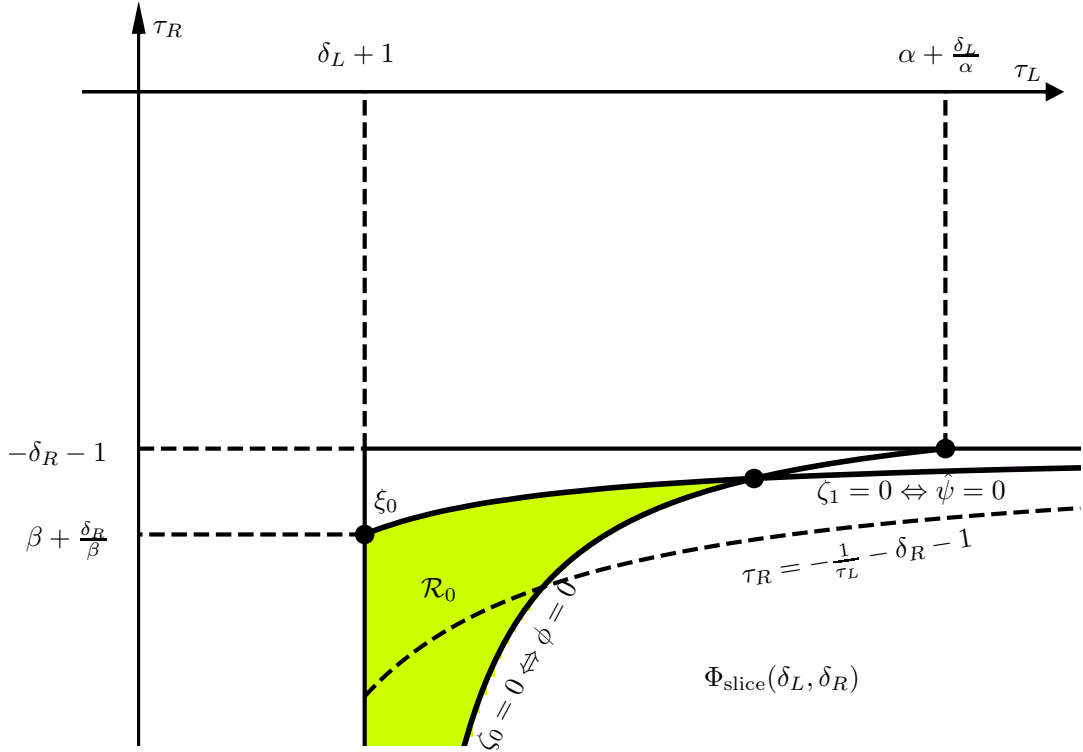}
\caption{
A sketch of $\zeta_0(\xi) = 0$ and $\zeta_1(\xi) = 0$
(equivalently $\phi(\xi) = 0$ and $\hat{\psi}(\xi) = 0$)
in $\Phi_{\rm slice}(\delta_L,\delta_R)$ with $0 < \delta_L < 1$ and $0 < \delta_R < 1$.
The curve $\tau_R = -\frac{1}{\tau_L} - \delta_R - 1$ is shown dashed.
\label{fig:igRN_phipsi}
} 
\end{center}
\end{figure}

Observe $\zeta_0(\xi) = 0$ is the same as 
$\phi(\xi) = 0$, while, by Lemma \ref{le:psizeta1Relationship},
$\zeta_1(\xi) = 0$ is the same as $\psi(\xi) = 0$.
However, we find the function
\begin{equation}
\hat{\psi}(\xi) = \lambda_R^u \psi(\xi),
\label{eq:psiHatDefn}
\end{equation}
easier to work with $\psi(\xi)$.
By \eqref{eq:psi2} the sign of $\hat{\psi}(\xi)$ is the same as that of $\zeta_1(\xi)$.
From \eqref{eq:psi} we obtain
\begin{equation}
\hat{\psi}(\xi) = -\delta_L \left( \lambda_R^{u^2} - 1 \right) + \lambda_R^u \left( \lambda_R^{u^2} - 1 \right) \tau_L
+ (1 - \delta_R) \lambda_R^{u^2} \,.
\label{eq:psiHat}
\end{equation}

The remainder of this section is organised as follows.
First in \S\ref{sub:phi} we study the curve $\phi(\xi) = 0$.
We then derive analogous properties for $\hat{\psi}(\xi) = 0$ and obtain some additional bounds, \S\ref{sub:psi}.
Lastly we show these curves intersect at a unique point in $\Phi_{\rm slice}$, \S\ref{sub:phiandpsi}.

\subsection{The curve $\phi(\xi) = 0$}
\label{sub:phi}

We first show the curve $\phi(\xi) = 0$ does not exist
in $\Phi_{\rm slice}(\delta_L,\delta_R)$ if $\delta_L \ge 1$.

\begin{lemma}
Let $\xi \in \Phi$.
If $\delta_L \ge 1$ then $\phi(\xi) < 0$.
\label{le:deltaLge1}
\end{lemma}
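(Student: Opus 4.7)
My plan is to exploit the compact reformulation $\phi(\xi) = (1+\tau_R)(\lambda_L^u)^2 - (\tau_R + \delta_L + \delta_R)\lambda_L^u + \delta_R$ from \eqref{eq:phi2} and treat $\phi$ as a univariate quadratic in $\lambda := \lambda_L^u$. The constraint $\tau_R < -(\delta_R + 1)$ from \eqref{eq:saddleSaddleRegion} forces $1 + \tau_R < -\delta_R < 0$, so this parabola opens downward. The strategy is then to show that $\phi$ is already non-positive at $\lambda = 1$ whenever $\delta_L \ge 1$, and that $\phi$ is strictly decreasing on $[1,\infty)$, so that the true value $\lambda_L^u > 1$ forces $\phi(\xi) < 0$.

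Two short computations will do the work. First, substituting $\lambda = 1$ collapses the quadratic to $1 - \delta_L$, which is $\le 0$ exactly under the hypothesis. Second, the derivative $\phi'(\lambda) = 2(1+\tau_R)\lambda - (\tau_R + \delta_L + \delta_R)$ evaluated at $\lambda = 1$ is $2 + \tau_R - \delta_L - \delta_R$; using $\tau_R < -\delta_R - 1$ together with $\delta_L \ge 1$ and $\delta_R > 0$ bounds this above by $-2\delta_R < 0$. Concavity then gives $\phi'(\lambda) \le \phi'(1) < 0$ for every $\lambda \ge 1$, so $\phi$ is strictly decreasing on $[1,\infty)$. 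Since $\xi \in \Phi$ guarantees $\lambda_L^u > 1$ strictly, I will conclude $\phi(\xi) < \phi(1) = 1 - \delta_L \le 0$; the inequality is strict whether $\delta_L = 1$ (by monotonicity alone) or $\delta_L > 1$ (already from $\phi(1) < 0$).

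I do not anticipate any real obstacle. The only worthwhile observation is that the radical form \eqref{eq:phi} would be awkward to analyse directly, whereas the quadratic form \eqref{eq:phi2} turns the hypothesis $\delta_L \ge 1$ into a transparent sign statement at the single point $\lambda = 1$. Everything else is routine manipulation of the defining inequalities of $\Phi$.
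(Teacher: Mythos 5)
Your proof is correct and rests on the same key ingredients as the paper's: the quadratic reformulation \eqref{eq:phi2} in the variable $\lambda_L^u$ and the observation that this expression equals $1-\delta_L \le 0$ at $\lambda_L^u = 1$. The only cosmetic difference is that you establish strict negativity for $\lambda_L^u > 1$ via a derivative-and-concavity argument on the downward parabola, whereas the paper regroups \eqref{eq:phi2} into three terms that are each manifestly non-positive for $\lambda_L^u > 1$ (the first two vanishing at $\lambda_L^u = 1$), which encodes the same monotonicity fact.
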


\begin{proof}
We can rearrange \eqref{eq:phi2} as
\begin{equation}
\phi(\xi) = (\tau_R + \delta_R + 1) \lambda_L^u \left( \lambda_L^u - 1 \right)
- \delta_R \left( \lambda_L^{u^2} - 1 \right)
+ (1 - \delta_L) \lambda_L^u \,.
\label{eq:phi3}
\end{equation}
By inspection the first two terms in \eqref{eq:phi3} are negative
and if $\delta_L \ge 1$ then the last term is less than or equal to zero.
\end{proof}

The next result shows that $\phi(\xi) = 0$ appears roughly as in Fig.~\ref{fig:igRN_phipsi}.

\begin{proposition}
Let $0 < \delta_L < 1$ and $\delta_R > 0$.
There exists a unique $C^\infty$ function $G : (-\infty,-\delta_R-1] \to (\delta_L+1,\infty)$
such that
\begin{equation}
\phi \big( G(\tau_R), \delta_L, \tau_R, \delta_R \big) = 0,
\label{eq:phiZeroCurve}
\end{equation}
for all $\tau_R \in (-\infty,-\delta_R-1]$.
Moreover,
$G$ is strictly increasing,
$G(\tau_R) \to \delta_L+1$ as $\tau_R \to -\infty$,
and $G(-\delta_R-1) = \alpha + \frac{\delta_L}{\alpha}$ where $\alpha \in \mathbb{R}$ is the largest solution to
\begin{equation}
-\delta_R \alpha^2 + (1-\delta_L) \alpha + \delta_R = 0\,.
\label{eq:phiTopIntersection}
\end{equation}
\label{pr:phiZeroCurve}
\end{proposition}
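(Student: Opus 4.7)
The plan is to change variables from $\tau_L$ to the unstable eigenvalue $\Lambda = \lambda_L^u$. For $0 < \delta_L < 1$ the map $\Lambda \mapsto \Lambda + \delta_L/\Lambda$ is a strictly increasing $C^\infty$ bijection from $(1,\infty)$ onto $(\delta_L+1,\infty)$, since its derivative $1 - \delta_L/\Lambda^2$ is positive on $(1,\infty)$. Hence it suffices to produce a $C^\infty$, strictly increasing function $\Lambda : (-\infty, -\delta_R-1] \to (1,\infty)$ satisfying $\phi = 0$, with limit $1$ at $-\infty$ and value $\alpha$ at $-\delta_R-1$. From \eqref{eq:phi2}, $\phi$ is a quadratic in $\Lambda$, and on the given $\tau_R$-domain its leading coefficient $1 + \tau_R \le -\delta_R$ is strictly negative, so the parabola opens downward. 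A direct check gives $\phi|_{\Lambda = 1} = 1 - \delta_L > 0$ identically in $\tau_R$, and $\phi \to -\infty$ as $\Lambda \to +\infty$, so the intermediate value theorem produces a unique root $\Lambda(\tau_R) \in (1,\infty)$, at which $\partial_\Lambda \phi < 0$ (we sit to the right of the vertex of a downward parabola).

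Smoothness of $\Lambda(\tau_R)$ then follows from the implicit function theorem. Using $\partial_{\tau_R} \phi = \Lambda^2 - \Lambda > 0$ and $\partial_\Lambda \phi < 0$ at the root, implicit differentiation yields
$$\frac{d\Lambda}{d\tau_R} = -\frac{\Lambda(\Lambda - 1)}{\partial_\Lambda \phi} > 0,$$
so $\Lambda$ is strictly increasing; composing with the increasing function $\Lambda \mapsto \Lambda + \delta_L/\Lambda$ then gives $G$ strictly increasing and $C^\infty$. Uniqueness of $G$ is inherited from the uniqueness of the root $\Lambda > 1$ at each $\tau_R$.

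For the endpoint, substituting $\tau_R = -\delta_R - 1$ into $\phi = 0$ reduces it to $-\delta_R\Lambda^2 + (1-\delta_L)\Lambda + \delta_R = 0$, which is exactly \eqref{eq:phiTopIntersection}; its two roots have product $-1$, so exactly one is positive, and this positive (hence larger) root $\alpha$ equals $\Lambda(-\delta_R-1)$, giving $G(-\delta_R-1) = \alpha + \delta_L/\alpha$. For the limit as $\tau_R \to -\infty$, the sum $(\tau_R + \delta_L + \delta_R)/(1+\tau_R)$ and product $\delta_R/(1+\tau_R)$ of the two $\Lambda$-roots of $\phi$ tend to $1$ and $0$ respectively, so the roots approach $\{0,1\}$. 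Combined with the monotonicity above (so $\Lambda(\tau_R)$ decreases but stays above $1$ as $\tau_R$ decreases to $-\infty$), this forces $\Lambda(\tau_R) \to 1$ and hence $G(\tau_R) \to \delta_L + 1$. The most delicate step is this asymptotic argument, which must rule out the branch escaping to $+\infty$; the sum/product computation together with the monotonicity handles it cleanly.
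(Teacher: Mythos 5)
Your proof is correct and follows the same overall skeleton as the paper's: treat $\phi$ as a quadratic in $\lambda_L^u$ via \eqref{eq:phi2}, get existence from the intermediate value theorem using $\phi|_{\lambda_L^u=1}=1-\delta_L>0$ and $\phi\to-\infty$, get smoothness and monotonicity from the implicit function theorem with $\partial\phi/\partial\tau_R=\lambda_L^u(\lambda_L^u-1)>0$, and obtain the endpoint value by substituting $\tau_R=-\delta_R-1$. Where you differ is in two local mechanisms, both arguably cleaner. For uniqueness and the sign of the $\Lambda$-derivative at the root, you exploit the downward-opening parabola directly (positive at $\Lambda=1$, so exactly one root lies in $(1,\infty)$ and sits on the decreasing side), whereas the paper computes $\partial\phi/\partial\tau_L\big|_{\phi=0}$ explicitly by substituting the root condition back in. For the limit $G(\tau_R)\to\delta_L+1$, you use Vieta's formulas (sum of roots $\to 1$, product $\to 0$, so the larger root $\to 1$), whereas the paper squeezes $G(\tau_R)$ below $\delta_L+1+\ee$ by showing $\phi\to-\infty$ along $\tau_L=\delta_L+1+\ee$. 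Your explicit change of variables $\tau_L\leftrightarrow\Lambda$ (a $C^\infty$ increasing bijection $(1,\infty)\to(\delta_L+1,\infty)$ since $0<\delta_L<1$) is exactly the substitution the paper uses implicitly through $\partial\lambda_L^u/\partial\tau_L$, so nothing is lost; both proofs are complete.
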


\begin{proof}
First fix $\tau_R \le -\delta_R-1$.
With $\tau_L = \delta_L + 1$ we have $\lambda_L^u = 1$ and so \eqref{eq:phi2} simplifies to
$\phi(\xi) = 1 - \delta_L > 0$.
As $\tau_L \to \infty$ we have $\lambda_L^u \to \infty$ and so $\phi(\xi) \to -\infty$
(because the $\lambda_L^{u^2}$-coefficient in \eqref{eq:phi2} is negative).
Thus by the intermediate value theorem
there exists $\tau_L = G(\tau_R) > \delta_L + 1$ satisfying \eqref{eq:phiZeroCurve}.

To demonstrate the uniqueness of $G$ we differentiate \eqref{eq:phi2} to obtain
\begin{equation}
\frac{\partial \phi}{\partial \tau_L} =
\left( 2 (1 + \tau_R) \lambda_L^u - (\tau_R + \delta_L + \delta_R) \right) \frac{\partial \lambda_L^u}{\partial \tau_L}.
\label{eq:phiZeroCurveProof10}
\end{equation}
It is a simple exercise to show that $\frac{\partial \lambda_L^u}{\partial \tau_L} = \frac{\lambda_L^u}{\lambda_L^u - \lambda_L^s}$.
Also if $\phi = 0$ then by \eqref{eq:phi2}
we can replace $(\tau_R + \delta_L + \delta_R)$ in \eqref{eq:phiZeroCurveProof10}
with $\frac{\delta_R}{\lambda_L^u} + (1+\tau_R) \lambda_L^u$ to obtain
\begin{equation}
\frac{\partial \phi}{\partial \tau_L} \bigg|_{\phi = 0} =
\left( (1 + \tau_R) \lambda_L^u - \frac{\delta_R}{\lambda_L^u} \right) \frac{\lambda_L^u}{\lambda_L^u - \lambda_L^s}.
\label{eq:phiZeroCurveProof11}
\end{equation}
By inspection $\frac{\partial \phi}{\partial \tau_L} \big|_{\phi = 0} < 0$.
Thus $G$ is unique (because if $\phi = 0$ for two distinct values of $\tau_L > \delta_L + 1$
then $\frac{\partial \phi}{\partial \tau_L} \ge 0$ at at least one of these values).

Since $\phi(\xi)$ is $C^\infty$ the function $G$ is $C^\infty$ by the implicit function theorem.
From \eqref{eq:phi2} we obtain
\begin{equation}
\frac{\partial \phi}{\partial \tau_R} =
\lambda_L^u \left( \lambda_L^u - 1 \right),
\label{eq:phiZeroCurveProof20}
\end{equation}
which is evidently positive.
Thus $\frac{d G}{d \tau_R} = -\frac{\frac{\partial \phi}{\partial \tau_L}}{\frac{\partial \phi}{\partial \tau_R}} \Big|_{\phi = 0} > 0$,
so $G$ is strictly increasing.

Also $G(\tau_R) \to \delta_L+1$ as $\tau_R \to -\infty$ because
if we fix $\tau_L = \delta_L + 1 + \ee$,
then $\phi(\xi) \to -\infty$ as $\tau_R \to -\infty$ for any $\ee > 0$.
Finally, by substituting $\tau_R = -\delta_R - 1$ into \eqref{eq:phi2} we obtain
\begin{equation}
\phi(\xi) \big|_{\tau_R = -\delta_R - 1} = -\delta_R \lambda_L^{u^2} + (1-\delta_L) \lambda_L^u + \delta_R \,.
\label{eq:phiZeroCurveProof30}
\end{equation}
Since $\tau_L = \lambda_L^u + \frac{\delta_L}{\lambda_L^u}$
we have $G(-\delta_R-1) = \alpha + \frac{\delta_L}{\alpha}$.
\end{proof}

\subsection{The curve $\hat{\psi}(\xi) = 0$}
\label{sub:psi}

The arguments presented here for $\hat{\psi}$ mirror those above for $\phi$.
We first show $\hat{\psi}(\xi) = 0$ does not exist
in $\Phi_{\rm slice}(\delta_L,\delta_R)$ if $\delta_R \ge 1$.

\begin{lemma}
Let $\xi \in \Phi$.
If $\delta_R \ge 1$ then $\hat{\psi}(\xi) < 0$.
\label{le:deltaRge1}
\end{lemma}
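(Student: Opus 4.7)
The plan is to mirror the short sign-analysis proof of Lemma~\ref{le:deltaLge1} but applied to the expression~\eqref{eq:psiHat}. The idea is that each of the three terms in $\hat{\psi}(\xi)$ is non-positive whenever $\xi \in \Phi$ and $\delta_R \ge 1$, and at least one is strictly negative, so their sum is strictly negative.

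Concretely, I would first record the sign information available from $\xi \in \Phi$ together with \eqref{eq:eigsAR}: we have $\lambda_R^u < -1$ (so $\lambda_R^u < 0$ and $\lambda_R^{u^2} > 1$, hence $\lambda_R^{u^2} - 1 > 0$), $\tau_L > \delta_L + 1 > 1 > 0$, and $\delta_L > 0$. With these in hand I would examine the three terms of
\begin{equation}
\hat{\psi}(\xi) = -\delta_L \left( \lambda_R^{u^2} - 1 \right) + \lambda_R^u \left( \lambda_R^{u^2} - 1 \right) \tau_L + (1 - \delta_R) \lambda_R^{u^2}
\nonumber
\end{equation}
in turn. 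The first term is the product of the positive quantities $\delta_L$ and $\lambda_R^{u^2} - 1$ with a minus sign, hence is strictly negative. The second term is a product of $\lambda_R^u < 0$, $\lambda_R^{u^2} - 1 > 0$, and $\tau_L > 0$, so is also strictly negative. The third term has $\lambda_R^{u^2} > 0$ and, under the hypothesis $\delta_R \ge 1$, the factor $1 - \delta_R \le 0$, so this term is non-positive.

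Adding the three contributions gives $\hat{\psi}(\xi) < 0$, which is the claim. I expect no real obstacle here: the form of \eqref{eq:psiHat} already groups terms so that the hypothesis $\delta_R \ge 1$ acts directly on the one term whose sign is not automatic, exactly as $\delta_L \ge 1$ did in the proof of Lemma~\ref{le:deltaLge1}. The only thing to be careful about is extracting the signs of $\lambda_R^u$ and $\lambda_R^{u^2} - 1$ from the saddle condition \eqref{eq:eigsAR}, and noting that $\tau_L > 0$ follows from $\tau_L > \delta_L + 1$ with $\delta_L > 0$; both are immediate from the definition of $\Phi$ in \eqref{eq:saddleSaddleRegion}.
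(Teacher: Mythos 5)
Your proposal is correct and is essentially the paper's own proof: the paper likewise argues that the first two terms of \eqref{eq:psiHat} are negative by inspection and that the last term is non-positive when $\delta_R \ge 1$. You have simply made the sign-checking of each factor explicit, which is fine.
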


\begin{proof}
By inspection the first two terms in \eqref{eq:psiHat} are negative
and if $\delta_R \ge 1$ then the last term is less than or equal to zero.
\end{proof}

We now show $\hat{\psi}(\xi) = 0$ appears roughly as in Fig.~\ref{fig:igRN_phipsi}.

\begin{proposition}
Let $\delta_L > 0$ and $0 < \delta_R < 1$.
There exists a unique $C^\infty$ function $H : [\delta_L+1,\infty) \to (-\infty,-\delta_R-1)$
such that
\begin{equation}
\hat{\psi} \big( \tau_L, \delta_L, H(\tau_L), \delta_R \big) = 0,
\label{eq:psiHatZeroCurve}
\end{equation}
for all $\tau_L \in [\delta_L+1,\infty)$.
Moreover,
$H$ is strictly increasing,
$H(\tau_L) \to -\delta_R-1$ as $\tau_L \to \infty$,
and $H(\delta_L+1) = \beta + \frac{\delta_R}{\beta}$ where $\beta \in \mathbb{R}$ is the smallest (most negative) solution to $p(\beta) = 0$ where
\begin{equation}
p(\beta) = (1+\delta_L) \beta^3 + (1-\delta_L-\delta_R) \beta^2 - (1+\delta_L) \beta + \delta_L \,.
\label{eq:psiHatLeftIntersection}
\end{equation}
\label{pr:psiHatZeroCurve}
\end{proposition}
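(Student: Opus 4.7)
The plan is to mirror the proof of Proposition~\ref{pr:phiZeroCurve}, exploiting a clean polynomial structure. First I would rewrite $\hat\psi$ as a cubic in $\mu = \lambda_R^u$: directly expanding \eqref{eq:psiHat} yields
\[
\hat\psi(\xi) = q(\mu) := \tau_L \mu^3 + (1 - \delta_L - \delta_R) \mu^2 - \tau_L \mu + \delta_L \,,
\]
which at $\tau_L = \delta_L + 1$ coincides exactly with the polynomial $p$ of \eqref{eq:psiHatLeftIntersection}. Fix $\tau_L \ge \delta_L + 1 > 1$. A direct evaluation gives $q(-1) = 1 - \delta_R > 0$, while $q(\mu) \to -\infty$ as $\mu \to -\infty$ since the leading coefficient $\tau_L$ is positive; so by the intermediate value theorem there exists a root $\mu^* \in (-\infty, -1)$.

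Uniqueness hinges on showing $q$ is strictly increasing on $(-\infty, -1]$. The derivative $q'(\mu) = 3 \tau_L \mu^2 + 2(1 - \delta_L - \delta_R) \mu - \tau_L$ is a quadratic opening upwards with $q'(0) = -\tau_L < 0$, so its two real roots $\mu_- < \mu_+$ satisfy $\mu_- < 0 < \mu_+$. Since $q'(-1) = 2(\tau_L - 1 + \delta_L + \delta_R) > 0$ under the hypotheses, and $q' < 0$ on $(\mu_-,\mu_+)$, we must have $\mu_- > -1$; hence $q' > 0$ throughout $(-\infty, -1]$, yielding both uniqueness of $\mu^*$ and $q'(\mu^*) > 0$. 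Because $0 < \delta_R < 1$, the map $\mu \mapsto \mu + \delta_R/\mu$ is a smooth bijection from $(-\infty,-1)$ onto $(-\infty,-\delta_R-1)$, which converts $\mu^*$ to the unique $H(\tau_L) \in (-\infty, -\delta_R - 1)$.

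Smoothness of $H$ then follows from the implicit function theorem, because $\partial \hat\psi / \partial \tau_R = q'(\mu^*) \cdot \mu^*/(\mu^* - \lambda_R^s)$ is nonzero: $\mu^*$ and $\mu^* - \lambda_R^s$ are both negative (since $\mu^* < -1 < \lambda_R^s < 0$), so the product is positive. Strict monotonicity of $H$ comes from $H'(\tau_L) = -(\partial \hat\psi / \partial \tau_L)/(\partial \hat\psi / \partial \tau_R)$: the numerator $\mu^*(\mu^{*2}-1)$ is negative while the denominator is positive, giving $H' > 0$. For the limit, substituting $\mu^* = -1 - \ee$ into $q = 0$ and balancing leading-order terms yields $\ee \sim (1 - \delta_R)/(2 \tau_L)$ as $\tau_L \to \infty$, so $\mu^* \to -1$ and $H(\tau_L) \to -\delta_R - 1$. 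At $\tau_L = \delta_L + 1$ we have $q = p$, and the monotonicity argument identifies $\mu^*$ as the unique root of $p$ in $(-\infty,-1)$; any further real roots of $p$ lie in $[\mu_-, \infty) \subset [-1, \infty)$, so $\mu^* = \beta$ is the smallest real root and $H(\delta_L + 1) = \beta + \delta_R/\beta$.

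The main obstacle I anticipate is the monotonicity step for $q$, since it is there that the hypothesis $\tau_L \ge \delta_L + 1$ enters decisively (through the sign of $q'(-1)$) and I do not see how to bypass the quadratic-root analysis for $q'$. Everything else is routine implicit-function-theorem bookkeeping in the same spirit as Proposition~\ref{pr:phiZeroCurve}.
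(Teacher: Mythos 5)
Your proposal is correct and follows essentially the same route as the paper: existence via the intermediate value theorem from $\hat\psi = 1-\delta_R > 0$ at $\lambda_R^u = -1$, uniqueness from the sign of the $\tau_R$-derivative, smoothness and monotonicity of $H$ from the implicit function theorem, the limit from the same $\ee$-asymptotics, and the endpoint value by substituting $\tau_L = \delta_L+1$ to recover $p$. The only cosmetic difference is in the uniqueness step, where the paper evaluates $\partial\hat\psi/\partial\tau_R$ only on the zero set of $\hat\psi$ (using the equation $\hat\psi=0$ to simplify it), whereas you establish global monotonicity of the cubic $q$ on $(-\infty,-1]$ by locating the roots of $q'$ --- both arguments are valid and yield the same conclusion.
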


\begin{proof}
Fix $\tau_L \ge \delta_L + 1$.
With $\tau_R = -\delta_R - 1$ we have $\lambda_R^u = -1$ and so \eqref{eq:psiHat} simplifies to
$\hat{\psi}(\xi) = 1 - \delta_R > 0$.
Also $\hat{\psi}(\xi) \to -\infty$ as $\tau_R \to -\infty$,
thus, by the intermediate value theorem,
there exists $\tau_R = H(\tau_L) < -\delta_R-1$ satisfying \eqref{eq:psiHatZeroCurve}.

From \eqref{eq:psiHat},
\begin{equation}
\frac{\partial \hat{\psi}}{\partial \tau_R} =
\left( 3 \tau_L \lambda_R^{u^2} + 2 (1 - \delta_L - \delta_R) \lambda_R^u - \tau_L \right)
\frac{\lambda_R^u}{\lambda_R^u - \lambda_R^s},
\nonumber
\end{equation}
and if $\hat{\psi}(\xi) = 0$ this can be simplified to
\begin{equation}
\frac{\partial \hat{\psi}}{\partial \tau_R} \bigg|_{\hat{\psi}=0} =
\left( \tau_L \left( 1 + \lambda_R^{u^2} \right) - \frac{2 \delta_L}{\lambda_R^u} \right)
\frac{\lambda_R^u}{\lambda_R^u - \lambda_R^s},
\label{eq:psiHatZeroCurveProof11}
\end{equation}
which is positive.
Hence $H(\tau_L)$ satisfying \eqref{eq:psiHatZeroCurve} is unique for all $\tau_L \ge \delta_L + 1$.
Moreover, $H$ is $C^\infty$ because $\hat{\psi}$ is $C^\infty$.
From \eqref{eq:psiHat},
\begin{equation}
\frac{\partial \hat{\psi}}{\partial \tau_L} =
\lambda_R^u \left( \lambda_R^{u^2} - 1 \right) < 0,
\nonumber
\end{equation}
thus $\frac{d H}{d \tau_L} =
-\frac{\frac{\partial \hat{\psi}}{\partial \tau_R}}{\frac{\partial \hat{\psi}}{\partial \tau_L}}
\Big|_{\hat{\psi} = 0} > 0$,
i.e.~$H$ is strictly increasing.

We have $H(\tau_L) \to -\delta_R-1$ as $\tau_L \to \infty$ because if $\tau_R = -\delta_R-1-\ee$
then $\hat{\psi}(\xi) \to -\infty$ as $\tau_L \to \infty$ for any $\ee > 0$.
Finally, by substituting $\tau_L = \delta_L+1$ into \eqref{eq:psiHat} we obtain
$\hat{\psi}(\xi) \big|_{\tau_L = \delta_L + 1} = p \left( \lambda_R^u \right)$
and so $H(\delta_L+1) = \beta + \frac{\delta_R}{\beta}$ as required.
\end{proof}

Next we obtain upper bounds on the values of $\beta$ and $\beta + \frac{\delta_R}{\beta}$.
These are the values of $\lambda_R^u$ and $\tau_R$ for the point at which the curve $\hat{\psi}(\xi) = 0$
meets the boundary $\tau_L = \delta_L + 1$, see Fig.~\ref{fig:igRN_phipsi}.

\begin{lemma}
Let $\delta_L > 0$ and $0 < \delta_R < 1$.
The value of $\beta$ in Proposition \ref{pr:psiHatZeroCurve} satisfies $\beta > -\frac{1 + \sqrt{5}}{2}$
and $\beta + \frac{\delta_R}{\beta} > -2$.
\label{le:betaBounds}
\end{lemma}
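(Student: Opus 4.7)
My plan is to reduce both inequalities to a statement of the form ``$\beta > \beta_*$'' for an explicit comparison value $\beta_* < -1$, and then verify each by computing $p(\beta_*)$ and $p'(\beta_*)$. Since $p'$ is a quadratic with positive leading coefficient $3(1+\delta_L)$ and $p'(0) = -(1+\delta_L) < 0$, it has roots $c_1 < 0 < c_2$; the inequality $p'(\beta_*) > 0$ at a value $\beta_* < 0$ then forces $\beta_* < c_1$, so $p$ is strictly increasing on $(-\infty, \beta_*]$. Combined with $p(\beta) \to -\infty$ as $\beta \to -\infty$ and $p(\beta_*) < 0$, this will force $p < 0$ throughout $(-\infty, \beta_*]$, leaving no room for a real root of $p$ there, so the smallest real root satisfies $\beta > \beta_*$.

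For the first bound I take $\beta_* = \beta_0 := -(1+\sqrt{5})/2$, which satisfies $\beta_0^2 = 1 - \beta_0$ and hence $\beta_0^3 = 2\beta_0 - 1$. Substituting these identities into \eqref{eq:psiHatLeftIntersection} collapses $p(\beta_0)$ to $(2\delta_L + \delta_R)\beta_0 - (\delta_L + \delta_R)$, which is negative as a sum of negative terms, and an analogous calculation reduces $p'(\beta_0)$ to $2(1+\delta_L) - \beta_0(1 + 5\delta_L + 2\delta_R)$, which is a sum of positive terms.

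For the second bound, multiplying $\beta + \delta_R/\beta > -2$ by $\beta < 0$ gives the equivalent inequality $\beta^2 + 2\beta + \delta_R < 0$. The roots of this quadratic are $\beta_{1,2} = -1 \mp \sqrt{1-\delta_R}$; since $\beta_2 \in (-1,0)$ and $\beta < -1$, the bound $\beta < \beta_2$ is automatic, and it suffices to prove $\beta > \beta_1 := -1 - \sqrt{1-\delta_R}$. Using $\beta_1^2 = -2\beta_1 - \delta_R$ (hence $\beta_1^3 = (4 - \delta_R)\beta_1 + 2\delta_R$) and writing $s = \sqrt{1-\delta_R} \in (0,1)$, a short expansion factors
\begin{equation}
p(\beta_1) = s \Bigl[ (1+s)(s^2 - 2) - \delta_L \bigl( 4 + 4s + s^2 \bigr) \Bigr],
\nonumber
\end{equation}
which is negative on $(0,1) \times (0,\infty)$ because $(1+s)(s^2-2) < 0$ on $[0,1]$ and the bracketed $\delta_L$-term is subtracted.

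The main obstacle is verifying $p'(\beta_1) > 0$, which is not transparent from sign-of-terms alone. My plan is to substitute $a = -\beta_1 = 1 + \sqrt{1-\delta_R} \in (1,2)$, so that $\delta_R = 2a - a^2$, and after collecting terms obtain
\begin{equation}
p'(\beta_1) = \bigl( -2a^3 + 7a^2 - 2a - 1 \bigr) + \delta_L (3a-1)(a+1).
\nonumber
\end{equation}
The $\delta_L$-coefficient $(3a-1)(a+1)$ is strictly positive on $(1,2)$. Positivity of the cubic $-2a^3 + 7a^2 - 2a - 1$ on $[1,2]$ can then be established elementarily by evaluating it at the endpoints (where it takes the values $2$ and $7$) and checking that its derivative $-6a^2 + 14a - 2$ is positive on $[1,2]$, so the cubic is increasing there. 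This cubic-in-$a$ estimate is the most intricate single computation in the proof, though still purely mechanical.
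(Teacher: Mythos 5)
Your proof is correct, but it follows a genuinely different route from the paper's. The paper never evaluates $p$ at a comparison point: for the first bound it rewrites $p$ as in \eqref{eq:betaBoundsProof10}, observes that two of the three terms are manifestly negative for $\beta<-1$, and concludes from $p(\beta)=0$ that the third term $\beta(\beta^2+\beta-1)$ must be positive, which immediately forces $\beta>-\tfrac{1+\sqrt5}{2}$; for the second bound it factors out $\beta^2$ to solve the root equation for $-(\beta+1)$ as in \eqref{eq:betaBoundsProof30}, bounds the denominator below by $1$, and squares. Your argument instead fixes the explicit comparison values $\beta_0=-\tfrac{1+\sqrt5}{2}$ and $\beta_1=-1-\sqrt{1-\delta_R}$, shows $p(\beta_*)<0$ and $p'(\beta_*)>0$ there, and uses the shape of the quadratic $p'$ (positive leading coefficient, $p'(0)<0$) to conclude $p$ is increasing and negative on all of $(-\infty,\beta_*]$, so the smallest root lies to the right. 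I checked your algebra: $p(\beta_0)=(2\delta_L+\delta_R)\beta_0-(\delta_L+\delta_R)$, $p'(\beta_0)=2(1+\delta_L)-\beta_0(1+5\delta_L+2\delta_R)$, the factorisation $p(\beta_1)=s\bigl[(1+s)(s^2-2)-\delta_L(s+2)^2\bigr]$, and $p'(\beta_1)=(-2a^3+7a^2-2a-1)+\delta_L(3a-1)(a+1)$ with the cubic increasing from $2$ to $7$ on $[1,2]$ are all correct, as is the reduction of $\beta+\delta_R/\beta>-2$ to $\beta_1<\beta<\beta_2$. What the paper's approach buys is brevity --- the sign bookkeeping is done once per bound by a single algebraic identity, and no derivative estimates are needed; what yours buys is a mechanical template (evaluate $p$ and $p'$ at the candidate bound) that would adapt to other comparison values without needing to find a felicitous rewriting of $p$, at the cost of the more intricate $p'(\beta_1)$ computation. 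One small point to make explicit: you use $\beta<-1$ in both halves (to place $\beta_*<0$ in context and to dismiss the root $\beta_2$); this follows from $p(-1)=1-\delta_R>0$ together with $p(\beta)\to-\infty$ as $\beta\to-\infty$, so it is worth a sentence rather than an unstated assumption.
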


\begin{proof}
The function $p$ can be rewritten as
\begin{equation}
p(\beta) =
\delta_L \left( \beta - 1 \right)^2 \left( \beta + 1 \right)
- \delta_R \beta^2
+ \beta \left( \beta^2 + \beta - 1 \right).
\label{eq:betaBoundsProof10}
\end{equation}
The first two terms of \eqref{eq:betaBoundsProof10} are negative,
so since $p(\beta) = 0$ the last term of \eqref{eq:betaBoundsProof10} must be positive.
This requires $\beta > -\frac{1 + \sqrt{5}}{2}$.

Also $p$ can be rewritten as
\begin{equation}
p(\beta) = \left[ (\beta+1) \left( 1 - \frac{1}{\beta} \right) \left( 1 + \delta_L - \frac{\delta_L}{\beta} \right) + (1-\delta_R) \right] \beta^2.
\nonumber
\end{equation}
Thus $p(\beta) = 0$ implies
\begin{equation}
-(\beta + 1) = \frac{1-\delta_R}{\left( 1 - \frac{1}{\beta} \right) \left( 1 + \delta_L - \frac{\delta_L}{\beta} \right)}.
\label{eq:betaBoundsProof30}
\end{equation}
Since $\beta < 0$ the denominator of \eqref{eq:betaBoundsProof30} is greater than $1$
and so $-(\beta + 1) < 1-\delta_R$.
Thus $(\beta+1)^2 < (1-\delta_R)^2$ which can be rearranged as
$\beta^2 + \delta_R < -2 \beta - \delta_R (1 - \delta_R)$.
Since $0 < \delta_R < 1$ this can be reduced to $\beta + \frac{\delta_R}{\beta} > -2$.
\end{proof}

Lastly we show that the curve
$\tau_R = -\frac{1}{\tau_L} - \delta_R - 1$
lies below $\hat{\psi}(\xi) = 0$, as in Fig.~\ref{fig:igRN_phipsi}.
This result is used later in the proof of Proposition \ref{pr:psiHat}.

\begin{lemma}
Let $\delta_L > 0$, $0 < \delta_R < 1$, and $\tau_L \ge \delta_L + 1$.
Then
\begin{equation}
H(\tau_L) > -\frac{1}{\tau_L} - \delta_R - 1.
\label{eq:HBound}
\end{equation}
\label{le:HBound}
\end{lemma}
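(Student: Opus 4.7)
The plan is to show that $\hat{\psi}$ is strictly negative along the curve $\tau_R = -\frac{1}{\tau_L} - \delta_R - 1$. Since $\hat{\psi}\big|_{\tau_R = -\delta_R - 1} = 1 - \delta_R > 0$ (as noted in the proof of Proposition~\ref{pr:psiHatZeroCurve}) and $H(\tau_L)$ is the unique zero of $\hat{\psi}$ in $\tau_R < -\delta_R - 1$ for fixed $\tau_L \ge \delta_L + 1$, negativity at $\tau_R = -\frac{1}{\tau_L} - \delta_R - 1$ will force $H(\tau_L) \in \bigl(-\tfrac{1}{\tau_L} - \delta_R - 1,\, -\delta_R - 1\bigr)$ by the intermediate value theorem, which is \eqref{eq:HBound}.

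First I would fix $\tau_L \ge \delta_L + 1$ and write $\lambda = \lambda_R^u$ for brevity. From $\tau_R = \lambda + \frac{\delta_R}{\lambda}$ one readily obtains $\tau_R + \delta_R + 1 = \frac{(\lambda+1)(\lambda+\delta_R)}{\lambda}$, so on the curve the relation $\tau_R + \delta_R + 1 = -\frac{1}{\tau_L}$ is equivalent to
\begin{equation}
\tau_L = -\frac{\lambda}{(\lambda+1)(\lambda+\delta_R)}.
\nonumber
\end{equation}
Substituting into \eqref{eq:psiHat} and factoring $\lambda^2 - 1 = (\lambda-1)(\lambda+1)$ reduces $\hat{\psi}$ to
\begin{equation}
\hat{\psi} = -\delta_L(\lambda^2 - 1) - \frac{\lambda^2(\lambda-1)}{\lambda + \delta_R} + (1 - \delta_R)\lambda^2.
\nonumber
\end{equation}
Since $\lambda < -1$ and $0 < \delta_R < 1$ give $\lambda + \delta_R < 0$, the claim $\hat{\psi} < 0$ is equivalent to $(\lambda + \delta_R)\hat{\psi} > 0$.

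The central algebraic step is to clear the denominator and regroup. A brief calculation using the identity $-\lambda^2(\lambda - 1) + (1-\delta_R)\lambda^2(\lambda + \delta_R) = \lambda^2(1 - \delta_R^2) + \lambda^2 \delta_R(1 - \lambda)$ yields
\begin{equation}
(\lambda + \delta_R)\hat{\psi} = -\delta_L(\lambda^2 - 1)(\lambda + \delta_R) + \lambda^2(1 - \delta_R^2) + \lambda^2 \delta_R(1 - \lambda).
\nonumber
\end{equation}
For $\lambda < -1$ and $0 < \delta_R < 1$ each summand on the right is manifestly positive: $\lambda^2 - 1 > 0$ together with $\lambda + \delta_R < 0$ makes the first positive (with its leading minus); the second is a product of positive factors; and $1 - \lambda > 0$ makes the third positive. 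Hence $(\lambda + \delta_R)\hat{\psi} > 0$, giving $\hat{\psi} < 0$ as required.

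The main obstacle I anticipate is spotting the regrouping above; directly expanding $(\lambda+\delta_R)\hat{\psi}$ as a cubic in $\lambda$ obscures its positivity. I also note that the bound $\beta + \frac{\delta_R}{\beta} > -2$ from Lemma~\ref{le:betaBounds} is by itself insufficient to prove \eqref{eq:HBound} at the boundary $\tau_L = \delta_L + 1$ when $\delta_L$ is large and $\delta_R$ is small, so the direct substitution approach is genuinely needed rather than a convenience.
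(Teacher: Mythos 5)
Your proof is correct, and I have checked the algebra: with $\lambda=\lambda_R^u$ one indeed has $\tau_R+\delta_R+1=\frac{(\lambda+1)(\lambda+\delta_R)}{\lambda}$, the substitution of $\tau_L=-\frac{\lambda}{(\lambda+1)(\lambda+\delta_R)}$ into \eqref{eq:psiHat} gives the displayed expression, and the regrouping
\begin{equation}
(\lambda+\delta_R)\hat{\psi} = -\delta_L\left(\lambda^2-1\right)\left(\lambda+\delta_R\right) + \lambda^2\left(1-\delta_R^2\right) + \lambda^2\delta_R\left(1-\lambda\right)
\nonumber
\end{equation}
is an identity whose three terms are each positive for $\lambda<-1$, $0<\delta_R<1$, $\delta_L>0$. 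Combined with $\hat{\psi}\big|_{\tau_R=-\delta_R-1}=1-\delta_R>0$ and the uniqueness of the zero from Proposition \ref{pr:psiHatZeroCurve}, the conclusion follows. The overall strategy (show $\hat{\psi}<0$ along the dashed curve $\tau_R=-\frac{1}{\tau_L}-\delta_R-1$, then invoke uniqueness of $H$) is exactly the paper's, but the key step is carried out by a genuinely different route: the paper argues geometrically, computing $f_\xi^2(T)$ via \eqref{eq:f2T}--\eqref{eq:f2T1}, observing that on this curve it lands in the second quadrant and hence to the left of $E^s(X)$, and then invoking Proposition \ref{pr:psi} to get $\psi>0$, i.e.\ $\hat{\psi}<0$. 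Your argument is purely algebraic and self-contained, needing only the explicit formula \eqref{eq:psiHat}; it avoids Proposition \ref{pr:psi} and the homoclinic interpretation entirely, at the cost of losing the geometric explanation of why this particular hyperbola is the right comparison curve (it is where $f_\xi^2(T)$ crosses the $y$-axis). Your closing remark that Lemma \ref{le:betaBounds} alone cannot yield \eqref{eq:HBound} near $\tau_L=\delta_L+1$ for large $\delta_L$ and small $\delta_R$ is also correct.
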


\begin{proof}
By iterating \eqref{eq:T} under $f_{R,\xi}$ and $f_{L,\xi}$ we obtain
\begin{equation}
f_\xi^2(T) = \left(
\tau_L \left( \frac{\tau_R}{1 - \lambda_R^s} + 1 \right) - \frac{\delta_R}{1 - \lambda_R^s} + 1,
-\delta_L \left( \frac{\tau_R}{1 - \lambda_R^s} + 1 \right) \right).
\label{eq:f2T}
\end{equation}
The second component of \eqref{eq:f2T} is clearly positive with any $\tau_R < -\delta_R - 1$.
The first component of \eqref{eq:f2T} can be rearranged as
\begin{equation}
f_\xi^2(T)_1 =
\left( \tau_L - \frac{(\tau_R + \delta_R) \lambda_R^s - 1}{\tau_R + \delta_R + 1} \right)
\left( \frac{\tau_R}{1 - \lambda_R^s} + 1 \right).
\label{eq:f2T1}
\end{equation}
If $\tau_L = \frac{-1}{\tau_R + \delta_R + 1}$
(equivalently $\tau_R = -\frac{1}{\tau_L} - \delta_R - 1$)
then \eqref{eq:f2T1} simplifies to a quantity that is clearly negative.
In this case $f_\xi^2(T)$ is located in the second quadrant of $\mathbb{R}^2$,
so certainly it lies to the left of $E^s(X)$.
Thus $\psi(\xi) > 0$ by Proposition \ref{pr:psi}, so $\hat{\psi}(\xi) < 0$.

We have shown $\tau_R = -\frac{1}{\tau_L} - \delta_R - 1$ implies $\hat{\psi}(\xi) < 0$.
Therefore if $\hat{\psi}(\xi) = 0$ (equivalently $\tau_R = H(\tau_L)$),
then $\tau_R > -\frac{1}{\tau_L} - \delta_R - 1$, as required.
\end{proof}

\subsection{The curves $\phi(\xi) = 0$ and $\hat{\psi}(\xi) = 0$ intersect at a unique point}
\label{sub:phiandpsi}

\begin{proposition}
Fix $0 < \delta_L < 1$ and $0 < \delta_R < 1$.
There exist unique $\tau_L > \delta_L + 1$ and $\tau_R < -\delta_R-1$
such that $\phi(\xi) = \hat{\psi}(\xi) = 0$.
\label{pr:phipsiIntersection}
\end{proposition}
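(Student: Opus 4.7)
The plan is to rephrase the problem as one about fixed points of the strictly increasing map $H \circ G : (-\infty, -\delta_R - 1] \to (-\infty, -\delta_R - 1)$. By Propositions~\ref{pr:phiZeroCurve} and~\ref{pr:psiHatZeroCurve}, the two curves $\phi(\xi) = 0$ and $\hat{\psi}(\xi) = 0$ in $\Phi_{\rm slice}(\delta_L, \delta_R)$ are the graphs of the smooth, strictly increasing functions $G$ and $H$, and so their intersection points correspond bijectively to fixed points of $H \circ G$. For existence I would apply the intermediate value theorem to the continuous function $N(\tau_R) = H(G(\tau_R)) - \tau_R$. As $\tau_R \to -\infty$, Proposition~\ref{pr:phiZeroCurve} gives $G(\tau_R) \to \delta_L + 1$, so continuity of $H$ at $\delta_L + 1$ yields $H(G(\tau_R)) \to \beta + \delta_R/\beta$, which is finite; hence $N(\tau_R) \to +\infty$. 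At $\tau_R = -\delta_R - 1$ we have $G(-\delta_R - 1) = \alpha + \delta_L/\alpha$ and $H(\alpha + \delta_L/\alpha) < -\delta_R - 1$ by Proposition~\ref{pr:psiHatZeroCurve}, so $N(-\delta_R - 1) < 0$. Continuity then produces at least one interior zero of $N$.

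For uniqueness I would show that $(H \circ G)'(\tau_R^*) < 1$ at every fixed point $\tau_R^*$. This condition forces uniqueness: if two fixed points $\tau_R^{*1} < \tau_R^{*2}$ existed, the inequality at $\tau_R^{*1}$ would make $N$ strictly decreasing there, so $N$ would become negative just to the right of $\tau_R^{*1}$; taking $\tau_R^{*2}$ to be the next zero, $N$ would remain negative on $(\tau_R^{*1}, \tau_R^{*2})$ and approach zero from below at $\tau_R^{*2}$, forcing $N'(\tau_R^{*2}) \geq 0$ and contradicting $(H \circ G)'(\tau_R^{*2}) < 1$.

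The hard part will be verifying the slope bound $G'(\tau_R^*) H'(\tau_L^*) < 1$ at the intersection. Writing $\lambda = \lambda_L^u$, $\mu = \lambda_R^u$, $\tau_L = \lambda + \delta_L/\lambda$, and $\tau_R = \mu + \delta_R/\mu$, the formulas~\eqref{eq:phiZeroCurveProof11}, \eqref{eq:phiZeroCurveProof20}, and~\eqref{eq:psiHatZeroCurveProof11} give an explicit rational expression for $G'(\tau_R^*) H'(\tau_L^*)$ in the variables $(\lambda, \mu, \delta_L, \delta_R)$. The constraints $\phi = 0$ and $\hat{\psi} = 0$ at the intersection should then let one reduce $G' H' < 1$ to a polynomial inequality in $(\lambda, \mu, \delta_L, \delta_R)$ whose truth follows from the sign constraints $\lambda > 1$, $\mu < -1$, and $0 < \delta_L, \delta_R < 1$. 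This algebraic reduction is the most delicate step of the argument.
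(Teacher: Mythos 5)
Your skeleton is sound and in fact coincides with the paper's: existence follows from the boundary behaviour of $G$ and $H$ established in Propositions~\ref{pr:phiZeroCurve} and~\ref{pr:psiHatZeroCurve} (your intermediate value argument for $N(\tau_R)=H(G(\tau_R))-\tau_R$ makes this explicit and is correct), and your uniqueness criterion $(H\circ G)'(\tau_R^*)<1$, i.e.\ $G'(\tau_R^*)\,H'(\tau_L^*)<1$, is exactly the paper's condition that at every intersection the slope $\left(\frac{dG}{d\tau_R}\right)^{-1}$ of $\phi=0$ exceeds the slope $\frac{dH}{d\tau_L}$ of $\hat{\psi}=0$.

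The genuine gap is that you have not proved this slope inequality, and it is not a routine algebraic reduction of the kind you describe. In particular, your expectation that after substituting $\phi=0$ and $\hat{\psi}=0$ the inequality ``follows from the sign constraints $\lambda_L^u>1$, $\lambda_R^u<-1$, $0<\delta_L,\delta_R<1$'' understates what is needed. The paper first weakens the exact slopes to the bounds \eqref{eq:phiZeroSlopeApprox} and \eqref{eq:psiHatZeroSlopeApprox}, then argues by contradiction: assuming the slopes are ordered the wrong way forces $\lambda_L^u(\tau_R+1)<-2$, which via the quadratic formula applied to \eqref{eq:phi2} (using $\phi=0$) yields $\tau_R<-2$ at the intersection. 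The contradiction then comes not from sign constraints but from a \emph{quantitative global} fact about the curve $\hat{\psi}=0$: since $H$ is increasing, every point on it has $\tau_R>H(\delta_L+1)=\beta+\frac{\delta_R}{\beta}$, and Lemma~\ref{le:betaBounds} shows $\beta+\frac{\delta_R}{\beta}>-2$. Without an ingredient of this type --- a lower bound on $\tau_R$ along $\hat{\psi}=0$, itself requiring a separate analysis of the cubic \eqref{eq:psiHatLeftIntersection} --- your proposed reduction has no reason to close, so the heart of the proof is still missing.
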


\begin{proof}
By Propositions \ref{pr:phiZeroCurve} and \ref{pr:psiHatZeroCurve}
the curves $\phi(\xi) = 0$ and $\hat{\psi}(\xi) = 0$ must intersect.
To show this intersection is unique it suffices to show that at any point of intersection
the slope $\frac{d \tau_R}{d \tau_L}$ of $\phi(\xi) = 0$ is greater than that of $\hat{\psi}(\xi) = 0$.

From the calculations performed in the proof of Proposition \ref{pr:phiZeroCurve},
the slope of $\phi(\xi) = 0$ is
\begin{equation}
\left( \frac{d G}{d \tau_R} \right)^{-1} =
\frac{-(1 + \tau_R) \lambda_L^u + \frac{\delta_R}{\lambda_L^u}}
{\left( \lambda_L^u - 1 \right) \left( \lambda_L^u - \lambda_L^s \right)}.
\nonumber
\end{equation}
Consequently
\begin{equation}
\left( \frac{d G}{d \tau_R} \right)^{-1} > -\frac{\lambda_R^u + 1}{\lambda_L^u - 1},
\label{eq:phiZeroSlopeApprox}
\end{equation}
because $\tau_R < \lambda_R^u$, $\delta_R > 0$, and $\lambda_L^s > 0$.
From the calculations performed in the proof of Proposition \ref{pr:psiHatZeroCurve},
the slope of $\hat{\psi}(\xi) = 0$ is
\begin{equation}
\frac{d H}{d \tau_L} =
\frac{\left( \lambda_R^{u^2} - 1 \right) \left( \lambda_R^u - \lambda_R^s \right)}
{\tau_L \left( 1 + \lambda_R^{u^2} \right) - \frac{2 \delta_L}{\lambda_R^u}}.
\nonumber
\end{equation}
Consequently
\begin{equation}
\frac{d H}{d \tau_L} < -\frac{\lambda_R^u \left( \lambda_R^{u^2} - 1 \right)}{\lambda_L^u \left( \lambda_R^{u^2} + 1 \right)},
\label{eq:psiHatZeroSlopeApprox}
\end{equation}
because $\tau_L > \lambda_L^u$, $\delta_L > 0$, and $\lambda_R^s < 0$.

Now suppose for a contradiction that $\left( \frac{d G}{d \tau_R} \right)^{-1} \le \frac{d H}{d \tau_L}$
at a point where both $\phi(\xi) = 0$ and $\hat{\psi}(\xi) = 0$.
By \eqref{eq:phiZeroSlopeApprox} and \eqref{eq:psiHatZeroSlopeApprox} this implies
\begin{equation}
-\frac{\lambda_R^u + 1}{\lambda_L^u - 1} < -\frac{\lambda_R^u \left( \lambda_R^{u^2} - 1 \right)}{\lambda_L^u \left( \lambda_R^{u^2} + 1 \right)},
\nonumber
\end{equation}
which can be rearranged as
\begin{equation}
-\frac{\left( \lambda_R^u + 1 \right)
\left[ \lambda_L^u \left( \lambda_R^u + 1 \right) + \lambda_R^u \left( \lambda_R^u - 1 \right) \right]}
{\lambda_L^u \left( \lambda_L^u - 1 \right) \left( \lambda_R^{u^2} + 1 \right)} < 0.
\nonumber
\end{equation}
For this to be true the term in square brackets must be negative, and this implies
\begin{equation}
\lambda_L^u (\tau_R + 1) < -2,
\label{eq:phipsiIntersectionProof50}
\end{equation}
because $\tau_R < \lambda_R^u$ and $\lambda_R^u \left( \lambda_R^u - 1 \right) > 2$.
However, $\phi(\xi) = 0$, so by applying the quadratic formula to \eqref{eq:phi2} we obtain
\begin{equation}
\tau_R + \delta_L + \delta_R - \sqrt{(\tau_R + \delta_L + \delta_R)^2 - 4 (1 + \tau_R) \delta_R} = 2 \lambda_L^u (\tau_R + 1).
\nonumber
\end{equation}
Thus \eqref{eq:phipsiIntersectionProof50} implies
\begin{equation}
\tau_R + \delta_L + \delta_R - \sqrt{(\tau_R + \delta_L + \delta_R)^2 - 4 (1 + \tau_R) \delta_R} < -4,
\nonumber
\end{equation}
which can be rearranged as
\begin{equation}
\tau_R < \frac{-2 \delta_L - 3 \delta_R - 4}{2 + \delta_R}.
\nonumber
\end{equation}
Since $\delta_L, \delta_R > 0$ this implies $\tau_R < -2$.
But the curve $\hat{\psi}(\xi) = 0$ increases with $\tau_L$,
thus on $\hat{\psi}(\xi) = 0$ the value of $\tau_R$ is greater than its
value at the boundary $\tau_L = \delta_L + 1$
where it equals $\beta + \frac{\delta_R}{\beta}$.
So the bound $\beta + \frac{\delta_R}{\beta} > -2$ of Lemma \ref{le:betaBounds} provides a contradiction.
Therefore $\left( \frac{d G}{d \tau_R} \right)^{-1} > \frac{d H}{d \tau_L}$
at any point where $\phi(\xi) = 0$ and $\hat{\psi}(\xi) = 0$ intersect,
hence the intersection point is unique.
\end{proof}

\section{Dynamics of the renormalisation operator}
\label{sec:renormalisation}
\setcounter{equation}{0}

In this section we study the dynamics of $g$ on $\Phi$.
We first show that any $\xi \in \Phi$ maps under $g$ to another point in $\Phi$.

\begin{proposition}
If $\xi \in \Phi$ then $g(\xi) \in \Phi$.
\label{pr:PhiForwardInvariant}
\end{proposition}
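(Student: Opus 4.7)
The plan is to verify each of the four defining inequalities of $\Phi$ (see \eqref{eq:saddleSaddleRegion}) for $g(\xi)$ directly, using the explicit formula \eqref{eq:g} and the hypothesis $\xi \in \Phi$. Writing $g(\xi) = (\tilde{\tau}_L, \tilde{\delta}_L, \tilde{\tau}_R, \tilde{\delta}_R) = (\tau_R^2 - 2\delta_R,\, \delta_R^2,\, \tau_L \tau_R - \delta_L - \delta_R,\, \delta_L \delta_R)$, the positivity conditions $\tilde{\delta}_L > 0$ and $\tilde{\delta}_R > 0$ are immediate from $\delta_L, \delta_R > 0$.

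Next I would verify $\tilde{\tau}_L > \tilde{\delta}_L + 1$, which reduces to $\tau_R^2 > (\delta_R + 1)^2$. Since $\tau_R < -(\delta_R + 1) < 0$, we have $|\tau_R| > \delta_R + 1 > 0$, and squaring gives the desired inequality.

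The remaining condition $\tilde{\tau}_R < -(\tilde{\delta}_R + 1)$ is the only one that needs a small rearrangement and is the closest thing to an obstacle here. It is equivalent to
\begin{equation}
-\tau_L \tau_R > (1 - \delta_L)(1 - \delta_R),
\nonumber
\end{equation}
after expanding $\delta_L + \delta_R - \delta_L \delta_R - 1 = -(1-\delta_L)(1-\delta_R)$. Since $\tau_L > \delta_L + 1$ and $-\tau_R > \delta_R + 1$, both factors on the left are positive and I get the lower bound $-\tau_L \tau_R > (\delta_L+1)(\delta_R+1)$. Finally, the elementary identity $(\delta_L + 1)(\delta_R + 1) - (1-\delta_L)(1-\delta_R) = 2(\delta_L + \delta_R) > 0$ closes the argument, so $-\tau_L \tau_R > (\delta_L+1)(\delta_R+1) > (1-\delta_L)(1-\delta_R)$ as required, and $g(\xi) \in \Phi$.
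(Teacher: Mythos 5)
Your proof is correct and follows essentially the same route as the paper: the two determinant conditions and the $\tilde{\tau}_L$ condition are handled identically, and for $\tilde{\tau}_R$ you use the same key bound $\tau_L \tau_R < -(\delta_L+1)(\delta_R+1)$ together with the same algebraic identity producing $2(\delta_L+\delta_R)$, merely packaged as a comparison of $-(1-\delta_L)(1-\delta_R)$ with $(\delta_L+1)(\delta_R+1)$ rather than as a direct estimate of $\tilde{\tau}_R + \tilde{\delta}_R + 1$.
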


\begin{proof}
Write $g(\xi) = \left( \tilde{\tau}_L, \tilde{\delta}_L, \tilde{\tau}_R, \tilde{\delta}_R \right)$.
By \eqref{eq:g} and the assumption $\xi \in \Phi$ we obtain
\begin{align*}
\tilde{\tau}_L - \left( \tilde{\delta}_L + 1 \right)
&= \tau_R^2 - 2 \delta_R - \left( \delta_R^2 + 1 \right)
= \tau_R^2 - \left( \delta_R + 1 \right)^2 > 0, \\
\tilde{\delta}_L &= \delta_R^2 > 0, \\
\tilde{\tau}_R + \tilde{\delta}_R + 1
&= \tau_L \tau_R - \delta_L - \delta_R + \delta_L \delta_R + 1 \\
&< -(\delta_L + 1)(\delta_R + 1) - \delta_L - \delta_R + \delta_L \delta_R + 1 \\
&= -2 (\delta_L + \delta_R) < 0, \\
\tilde{\delta}_R &= \delta_L \delta_R > 0,
\end{align*} 
which implies $g(\xi) \in \Phi$.
\end{proof}

Next in \S\ref{sub:renormalisation} we consider the subset of $\Phi$ for which $\hat{\psi}(\xi) < 0$.
We show that any point in this subset
maps under $g$ to another point in this subset.
This result is central to showing that the regions $\cR_n$ are mutually disjoint
and proving Theorem \ref{th:Rn} in \S\ref{sub:RnProof}.
Recall, the sign of $\hat{\psi}(\xi)$ is the same as that of $\zeta_1(\xi)$ by \eqref{eq:psiHatDefn}.

\subsection{The subset of $\Phi$ for which $\hat{\psi}(\xi) < 0$}
\label{sub:renormalisation}

We first show that the point at which the curve $\hat{\psi}(\xi) = 0$
meets $\tau_L = \delta_L + 1$
maps under $g$ to a point below the dashed curve of Fig.~\ref{fig:igRN_phipsi}
in the corresponding slice $\Phi_{\rm slice}(\tilde{\delta}_L,\tilde{\delta}_R)$.

\begin{lemma}
Let $\delta_L > 0$ and $0 < \delta_R < 1$.
Let $\xi_0 = (\delta_L+1,\delta_L,\beta + \frac{\delta_R}{\beta},\delta_R)$ 
where $\beta$ is as given in Proposition \ref{pr:psiHatZeroCurve}.
Write $g(\xi_0) = \left( \tilde{\tau}_L, \tilde{\delta}_L, \tilde{\tau}_R, \tilde{\delta}_R \right)$.
Then
\begin{equation}
\tilde{\tau}_R < -\frac{1}{\tilde{\tau}_L} - \tilde{\delta}_R - 1.
\label{eq:tildeTauRBound}
\end{equation}
\label{le:tildeTauRBound}
\end{lemma}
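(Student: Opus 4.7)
The plan is to verify inequality (5.1) by direct substitution, reducing it to a polynomial inequality in $\beta$, $\delta_L$, $\delta_R$ that can be handled using $p(\beta)=0$ together with the bounds of Lemma 4.6.

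First, substituting $\tau_L = \delta_L + 1$ and $\tau_R = \beta + \delta_R/\beta$ into (2.7) yields
\[
\tilde\tau_L = \beta^2 + \frac{\delta_R^2}{\beta^2}, \quad \tilde\delta_L = \delta_R^2, \quad
\tilde\tau_R = (\delta_L+1)\!\left(\beta + \frac{\delta_R}{\beta}\right) - \delta_L - \delta_R, \quad
\tilde\delta_R = \delta_L\delta_R.
\]
From $\beta < -1$ one finds $\tau_R + \delta_R + 1 = (\beta+1)(\beta+\delta_R)/\beta < 0$, so $\tau_R^2 > (\delta_R+1)^2$ and $\tilde\tau_L = \tau_R^2 - 2\delta_R > \delta_R^2 + 1 > 0$. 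Using the identity $1 - \delta_L - \delta_R + \delta_L\delta_R = (1-\delta_L)(1-\delta_R)$ we also get $\tilde\tau_R + \tilde\delta_R + 1 = (\delta_L+1)(\beta + \delta_R/\beta) + (1-\delta_L)(1-\delta_R)$. Hence (5.1) is equivalent to
\[
\left[(\delta_L+1)\!\left(\beta + \frac{\delta_R}{\beta}\right) + (1-\delta_L)(1-\delta_R)\right]\!\left(\beta^2 + \frac{\delta_R^2}{\beta^2}\right) + 1 < 0. \qquad (\star)
\]

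The second step is to exploit $p(\beta)=0$. Multiplying $(\star)$ through by $\beta^3 < 0$ (flipping the inequality) clears all fractional powers of $\beta$ and turns it into a polynomial inequality in $\beta$, $\delta_L$, $\delta_R$. To reduce the number of independent parameters, I would apply the identity
\[
(\beta^2 - 1)\bigl[(1+\delta_L)\beta - \delta_L\bigr] = -(1-\delta_R)\beta^2,
\]
obtained by rearranging the second factorisation of $p$ used in the proof of Lemma 4.6 (setting the bracketed expression to zero). This eliminates $(1-\delta_R)$ in favour of $\beta$ and $\delta_L$.

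The main obstacle is the algebraic bookkeeping: $(\star)$ carries the factor $\beta^4 + \delta_R^2$ coming from $\tilde\tau_L$, so $\delta_R^2$ as well as $\delta_R$ appears in the polynomial and the identity must be applied more than once (or combined with itself by squaring) before the reduction is complete. Once the expression is in $\beta$ and $\delta_L$ alone, I expect a factorisation in which the sharp bound $\beta + \delta_R/\beta > -2$ from Lemma 4.6 is the critical ingredient for the final sign check, since numerical experiments show that the margin in $(\star)$ is narrowest in the limit $\delta_L, \delta_R \to 0^+$ (where $\beta \to -(1+\sqrt{5})/2$). As a sanity check I would also verify $(\star)$ at the degenerate limits $\delta_R \to 1^-$ and $\delta_L \to \infty$, where both $p(\beta)=0$ and the inequality simplify and the conclusion becomes transparent.
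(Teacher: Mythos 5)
Your opening reduction is correct and coincides with the paper's first step: inequality \eqref{eq:tildeTauRBound} is equivalent to $\tilde{\tau}_L\left(\tilde{\tau}_R+\tilde{\delta}_R+1\right)+1<0$, and after substituting $\tau_L=\delta_L+1$ and $\tau_R=\beta+\frac{\delta_R}{\beta}$ this becomes your $(\star)$, which is exactly the quantity $\omega$ of \eqref{eq:omega} evaluated at $\tau_R=\beta+\frac{\delta_R}{\beta}$. Your extracted identity $(\beta^2-1)\bigl[(1+\delta_L)\beta-\delta_L\bigr]=-(1-\delta_R)\beta^2$ is also a correct consequence of \eqref{eq:betaBoundsProof30}.

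However, from that point on you have a plan rather than a proof, and the plan is where all the difficulty of the lemma lives. You propose to eliminate $\delta_R$ via the identity, concede that "the identity must be applied more than once (or combined with itself by squaring) before the reduction is complete", and then state that you "expect a factorisation" in which $\beta+\frac{\delta_R}{\beta}>-2$ is the decisive ingredient. None of this is carried out, and it is not self-evident that it can be: $\delta_R$ enters $(\star)$ through $\frac{\delta_R}{\beta}$, $\frac{\delta_R^2}{\beta^2}$ and $(1-\delta_L)(1-\delta_R)$, not only through the combination $(1-\delta_R)$ that your identity removes, so the promised reduction to a two-parameter polynomial and its subsequent sign analysis are both missing. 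The paper closes this gap differently and explicitly: it rewrites $\omega$ as $p(\beta)+q(\beta)+\delta_L\delta_R\beta(\beta+2)+(1-\delta_L)(\beta+1)+\delta_R^2(1+\delta_L)\left(\beta+\frac{\delta_R}{\beta}\right)\frac{1}{\beta^2}$ (see \eqref{eq:tildeTauRBoundProof10}--\eqref{eq:tildeTauRBoundProof11}), kills the $p(\beta)$ term using $p(\beta)=0$, and checks each remaining term is negative using only $\beta<-1$ and $\beta+2>0$ from Lemma \ref{le:betaBounds}. Until you exhibit an analogous explicit decomposition or complete the elimination and final sign check, the lemma is not proved.
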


\begin{proof}
The inequality \eqref{eq:tildeTauRBound} is equivalent to
\begin{equation}
\tilde{\tau_L} \left( \tilde{\tau}_R + \tilde{\delta}_R + 1 \right) + 1 < 0.
\label{eq:tildeTauRBoundProof1}
\end{equation}
By \eqref{eq:g} we have
$\tilde{\tau}_L = \tau_R^2 - 2 \delta_R$,
$\tilde{\tau}_R = \tau_L \tau_R - \delta_L - \delta_R$, and
$\tilde{\delta}_R = \delta_L \delta_R$;
also $\tau_L = \delta_L + 1$.
Upon substituting these into \eqref{eq:tildeTauRBoundProof1},
after simplification the left-hand side of \eqref{eq:tildeTauRBoundProof1} becomes
\begin{equation}
\omega = (1+\delta_L) \tau_R^3 + (1-\delta_L)(1-\delta_R) \tau_R^2 - 2 \delta_R (1+\delta_L) \tau_R
- 2 \delta_R (1-\delta_L)(1-\delta_R) + 1.
\label{eq:omega}
\end{equation}
Thus it remains for us to show that $\omega < 0$.

Into \eqref{eq:omega} we substitute $\tau_R = \beta + \frac{\delta_R}{\beta}$ to obtain,
after much rearranging,
\begin{equation}
\omega = p(\beta) + q(\beta) + \delta_L \delta_R \beta (\beta + 2)
+ (1-\delta_L) (\beta+1)
+ \delta_R^2 (1+\delta_L) \left( \beta + \frac{\delta_R}{\beta} \right) \frac{1}{\beta^2},
\label{eq:tildeTauRBoundProof10}
\end{equation}
where $p$ is given by \eqref{eq:psiHatLeftIntersection} and
\begin{equation}
q(\beta) = \big( \delta_L (2-\delta_R) + \delta_R \big) \beta
+ \delta_R^2 (1-\delta_L)(1-\delta_R) \frac{1}{\beta^2}.
\label{eq:tildeTauRBoundProof11}
\end{equation}
Since $\beta < -1$ we have
\begin{align}
q(\beta) &< -\big( \delta_L (2-\delta_R) + \delta_R \big) + \delta_R^2 (1-\delta_L)(1-\delta_R) \nonumber \\
&< -\big( \delta_L (2-\delta_R) + \delta_R \big) + \delta_R^2 (1-\delta_R) \nonumber \\
&= -\delta_L (2 - \delta_R) - \delta_R \left( \delta_R^2 - \delta_R + 1 \right) \nonumber \\
&< 0. \nonumber
\end{align}
Also $p(\beta) = 0$
and by inspection the last three terms of \eqref{eq:tildeTauRBoundProof10} are negative
(because $\beta+1 < 0$ and $\beta + 2 > 0$ by Lemma \ref{le:betaBounds}).
Therefore $\omega < 0$.
\end{proof}

We now use Lemma \ref{le:tildeTauRBound} to show that
the subset of $\Phi$ for which $\hat{\psi}(\xi) < 0$ is forward invariant under $g$.

\begin{proposition}
Let $\xi \in \Phi$.
If $\hat{\psi}(\xi) \le 0$ then $\hat{\psi}(g(\xi)) < 0$.
\label{pr:psiHat}
\end{proposition}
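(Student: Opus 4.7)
The plan is to invoke the bound from Lemma~\ref{le:HBound} after reducing the statement to a polynomial inequality, which I would then verify using a monotonicity observation together with the boundary case covered in Lemma~\ref{le:tildeTauRBound}. Write $g(\xi) = (\tilde\tau_L, \tilde\delta_L, \tilde\tau_R, \tilde\delta_R)$, noting $g(\xi) \in \Phi$ by Proposition~\ref{pr:PhiForwardInvariant}. If $\tilde\delta_R = \delta_L\delta_R \ge 1$, then Lemma~\ref{le:deltaRge1} applied to $g(\xi)$ yields $\hat\psi(g(\xi)) < 0$ directly. Otherwise $\tilde\delta_R < 1$, and Lemma~\ref{le:HBound} applied to $g(\xi)$ says that $\tilde\tau_R \le -\tfrac{1}{\tilde\tau_L} - \tilde\delta_R - 1$ already suffices for $\hat\psi(g(\xi)) < 0$. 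Since $\tilde\tau_L > 0$, this inequality is equivalent to
\begin{equation*}
\omega(\xi) := \tilde\tau_L\bigl(\tilde\tau_R + \tilde\delta_R + 1\bigr) + 1 \le 0,
\end{equation*}
so the task reduces to proving $\omega(\xi) \le 0$ whenever $\xi \in \Phi$ and $\hat\psi(\xi) \le 0$.

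Next I would establish $\partial\omega/\partial\tau_R > 0$ throughout $\Phi$ by direct differentiation: the derivative equals $\tau_L(\tau_R^2 - 2\delta_R) + 2\tau_R(\tilde\tau_R + \tilde\delta_R + 1)$, and both summands are positive because $\tau_L > 0$, $\tau_R^2 > (\delta_R+1)^2 > 2\delta_R$, $\tau_R < 0$, and $\tilde\tau_R + \tilde\delta_R + 1 < 0$. This monotonicity shows that on $\{\xi \in \Phi : \hat\psi(\xi) \le 0\}$, at fixed $\tau_L, \delta_L, \delta_R$, the supremum of $\omega$ is attained on the boundary curve $\hat\psi(\xi) = 0$ where $\tau_R = H(\tau_L)$. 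Hence it suffices to verify $\omega < 0$ along this curve.

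For the final step I would parameterise $\hat\psi(\xi) = 0$ by $\mu = \lambda_R^u \in [\beta,-1)$, giving $\tau_R = \mu + \delta_R/\mu$ and $\tau_L = \tau_L(\mu)$ obtained by solving \eqref{eq:psiHat} for $\tau_L$. At $\mu = \beta$ the desired bound is exactly Lemma~\ref{le:tildeTauRBound}, and as $\mu \to -1^-$ we have $\tau_L(\mu) \to \infty$, $\tilde\tau_L \to \delta_R^2 + 1$, and $\tilde\tau_R + \tilde\delta_R + 1 \to -\infty$, so $\omega \to -\infty$. The main obstacle is ruling out that $\omega$ rises back to zero between these two endpoints. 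I would attack this by substituting the parameterisation into $\omega$, clearing denominators, and rewriting the resulting polynomial in $\mu$ as a sum of explicitly nonpositive terms using $\hat\psi(\xi) = 0$ (the analogue of the identity $p(\beta) = 0$ exploited in Lemma~\ref{le:tildeTauRBound}) together with the bounds $\mu > -(1+\sqrt{5})/2$ and $\mu + \delta_R/\mu > -2$, which hold for all $\mu \in [\beta,-1)$ by Lemma~\ref{le:betaBounds} and the monotonicity of $\mu \mapsto \mu + \delta_R/\mu$ on $(-\infty,-1)$. Orchestrating this factorisation so that each summand has a definite sign is the hard part; the computation should generalise \eqref{eq:tildeTauRBoundProof10}--\eqref{eq:tildeTauRBoundProof11} with $\mu$ playing the role of $\beta$.
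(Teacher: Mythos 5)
Your reduction to $\omega(\xi)\le 0$ is sound as far as it goes: the identity $\partial\omega/\partial\tau_R = \tau_L(\tau_R^2-2\delta_R)+2\tau_R(\tilde\tau_R+\tilde\delta_R+1)>0$ is correct, and combined with the fact that $\{\hat\psi\le 0\}$ is exactly $\{\tau_R\le H(\tau_L)\}$ at fixed $\tau_L,\delta_L,\delta_R$, it validly reduces the problem to the boundary curve --- but only when $0<\delta_R<1$. When $\delta_R\ge 1$ the curve $H$ does not exist in $\Phi_{\rm slice}(\delta_L,\delta_R)$ (Proposition \ref{pr:psiHatZeroCurve} requires $\delta_R<1$, and by Lemma \ref{le:deltaRge1} the whole slice has $\hat\psi<0$), so your supremum is attained on the boundary $\tau_R=-\delta_R-1$ of $\Phi$ rather than on $\{\hat\psi=0\}$; this case needs a separate argument (the paper disposes of it by showing $\tilde\tau_R<-2$ and invoking Lemma \ref{le:betaBounds}).

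The more serious gap is step 4, which is exactly the computation that carries the weight of the proposition and which you leave unexecuted. At the endpoint $\mu=\beta$ the identity $p(\beta)=0$ is what makes the sign decomposition \eqref{eq:tildeTauRBoundProof10} work; along the interior of the curve you instead have to eliminate $\tau_L$ via $\tau_L=\tau_L(\mu)$ from $\hat\psi=0$, turning $\omega$ into a rational expression in $\mu,\delta_L,\delta_R$ for which you give no evidence that a sum-of-nonpositive-terms decomposition exists. Note also that you are attempting something strictly stronger than the proposition: that $g$ maps \emph{all} of $\{\hat\psi\le 0\}$ below the dashed curve $\tau_R=-\frac{1}{\tau_L}-\delta_R-1$, whereas the paper needs this only for the single corner point $\xi_0$ (Lemma \ref{le:tildeTauRBound}). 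The paper then avoids the curve-wide computation entirely by a geometric argument your proposal never touches: since $\hat\psi(\xi)$ and $\phi(g(\xi))$ share the same sign (Lemma \ref{le:psizeta1Relationship}), the image of $\{\hat\psi\le 0\}$ lies in $\{\phi\le 0\}$ in the image slice; $g(\xi_0)$ lies on $\phi=0$ strictly below the unique intersection $\tilde\xi_{\rm int}$ of $\phi=0$ and $\hat\psi=0$ (Proposition \ref{pr:phipsiIntersection}); and uniqueness of that intersection then keeps the whole image region below $\hat\psi=0$. Unless you can actually produce the factorisation in step 4 (and patch the $\delta_R\ge 1$ case), the argument is incomplete, and the sign relation plus Proposition \ref{pr:phipsiIntersection} is the missing idea that closes it.
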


\begin{proof}[Proof of Proposition \ref{pr:psiHat}]
Write $g(\xi) = \left( \tilde{\tau}_L, \tilde{\delta}_L, \tilde{\tau}_R, \tilde{\delta}_R \right)$.
Since $\xi \in \Phi$ we have $\delta_L, \delta_R > 0$.

First suppose $0 < \delta_R < 1$.
If $\tilde{\delta}_R \ge 1$ then certainly $\hat{\psi}(g(\xi)) < 0$ by Lemma \ref{le:deltaRge1},
so let us suppose $\tilde{\delta}_R < 1$.
Since $\tilde{\delta}_L = \delta_R^2 < 1$,
by Proposition \ref{pr:phipsiIntersection} the curves
$\phi = 0$ and $\hat{\psi} = 0$ intersect at a unique point in $\Phi_{\rm slice}(\tilde{\delta}_L,\tilde{\delta}_R)$,
call it $\tilde{\xi}_{\rm int}$, see Fig.~\ref{fig:igRN_phipsiImage}.
With $\xi = \xi_0$ as in Lemma \ref{le:tildeTauRBound},
the inequality \eqref{eq:tildeTauRBound} implies $\hat{\psi}(g(\xi_0)) < 0$ by Lemma \ref{le:HBound}.
Also $\phi(g(\xi_0)) = 0$, because $\hat{\psi}(\xi_0) = 0$, thus
$g(\xi_0)$ lies on $\phi = 0$ and below $\tilde{\xi}_{\rm int}$, as in Fig.~\ref{fig:igRN_phipsiImage}.

Now if $\hat{\psi}(\xi) \le 0$ and $\xi \ne \xi_0$,
then $g(\xi)$ lies in the shaded region of Fig.~\ref{fig:igRN_phipsiImage}.
The curve $\hat{\psi} = 0$ does not enter this region because the intersection point $\tilde{\xi}_{\rm int}$ is unique.
Thus $g(\xi)$ lies below the curve $\hat{\psi} = 0$, that is $\hat{\psi}(g(\xi)) < 0$.

Second suppose $\delta_R \ge 1$.
Then
\begin{equation}
\tilde{\tau}_R
= \tau_L \tau_R - \delta_L - \delta_R
< -(\delta_L+1)(\delta_R+1) - \delta_L - \delta_R
< -3,
\nonumber
\end{equation}
where we have used $\delta_L > 0$ and $\delta_R \ge 1$ to produce the last inequality.
Thus $\tilde{\tau}_R < -2$ and so $g(\xi)$ lies below $\hat{\psi} = 0$ by Lemma \ref{le:betaBounds}.
That is, $\hat{\psi}(g(\xi)) < 0$.
\end{proof}

\begin{figure}[t!]
\begin{center}
\includegraphics[height=10cm]{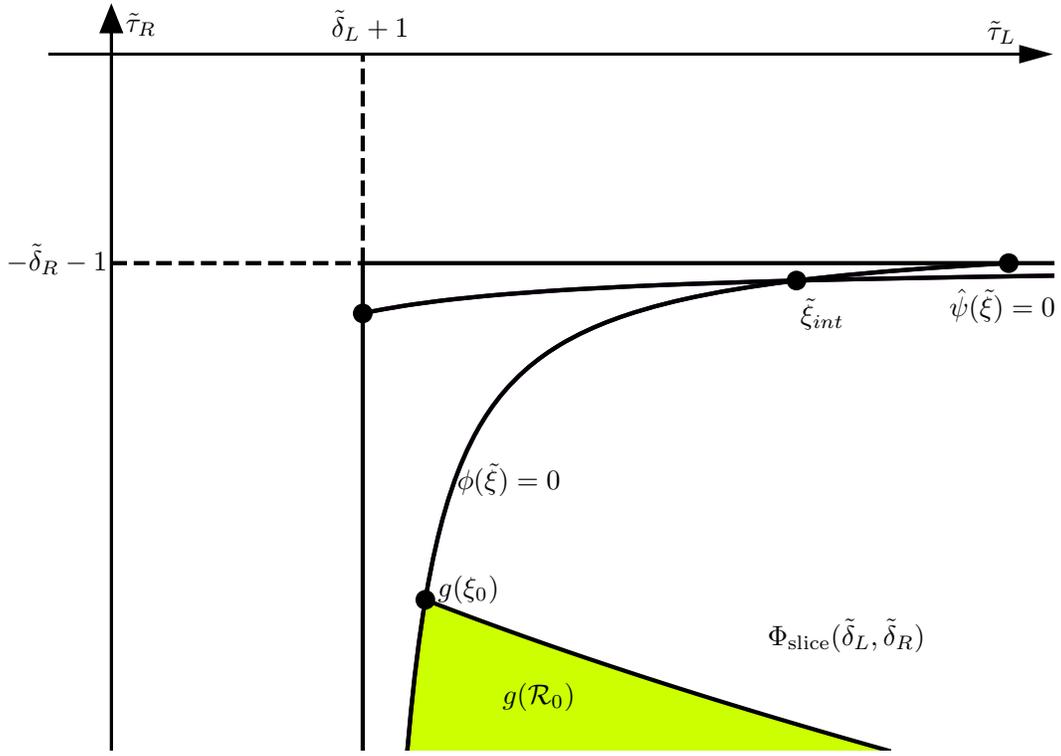}
\caption{
A sketch of $\phi(\tilde{\xi}) = 0$ and $\hat{\psi}(\tilde{\xi}) = 0$
where $\tilde{\xi} = g(\xi)$ with $0 < \tilde{\delta}_L < 1$ and $0 < \tilde{\delta}_R < 1$.
The point $\tilde{\xi}_{\rm int}$ is the unique intersection of $\phi(\tilde{\xi}) = 0$ and $\hat{\psi}(\tilde{\xi}) = 0$.
The point $\xi_0$ is as in Lemma \ref{le:tildeTauRBound}.
\label{fig:igRN_phipsiImage}
} 
\end{center}
\end{figure}

\subsection{Arguments leading to a proof of Theorem \ref{th:Rn}}
\label{sub:RnProof}

Here we prove Theorem \ref{th:Rn} after a sequence of lemmas.

\begin{lemma}
Let $\xi \in \cR_n$ for some $n \ge 1$.
Then $g^i(\xi) \in \cR_{n-i}$ for all $i = 1,2,\ldots,n$.
\label{le:gForwards}
\end{lemma}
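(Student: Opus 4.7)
The plan is to unwind the definitions and check the three defining conditions of $\cR_{n-i}$ one by one. Fix $\xi \in \cR_n$ and $i \in \{1, 2, \ldots, n\}$. By the definition \eqref{eq:Rn}, $g^i(\xi) \in \cR_{n-i}$ means three things: (a) $g^i(\xi) \in \Phi$; (b) $\zeta_{n-i}(g^i(\xi)) > 0$; and (c) $\zeta_{n-i+1}(g^i(\xi)) \le 0$.

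For (a), I would argue by induction on $i$ using Proposition \ref{pr:PhiForwardInvariant}. Since $\xi \in \cR_n \subset \Phi$, applying that proposition $i$ times yields $g^i(\xi) \in \Phi$. For (b) and (c), the key observation is the semigroup identity $\zeta_m(g^j(\xi)) = \phi(g^m(g^j(\xi))) = \phi(g^{m+j}(\xi)) = \zeta_{m+j}(\xi)$, which follows immediately from \eqref{eq:zetan}. Applying this with $j = i$ and $m = n - i$ gives $\zeta_{n-i}(g^i(\xi)) = \zeta_n(\xi) > 0$, and with $m = n - i + 1$ gives $\zeta_{n-i+1}(g^i(\xi)) = \zeta_{n+1}(\xi) \le 0$, both inequalities coming directly from $\xi \in \cR_n$.

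Since the proof is essentially bookkeeping, there is no real obstacle: the substantive work was already carried out in Proposition \ref{pr:PhiForwardInvariant} (forward invariance of $\Phi$ under $g$), and the rest is a trivial composition identity for $\zeta_n$. I would present it as a direct three-item verification rather than a standalone induction, since the only induction needed is the trivial one for (a).
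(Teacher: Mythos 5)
Your proposal is correct and is essentially identical to the paper's own proof: both verify membership in $\cR_{n-i}$ by applying Proposition \ref{pr:PhiForwardInvariant} repeatedly for the $\Phi$ condition and using the composition identity $\zeta_{n-i}(g^i(\xi)) = \zeta_n(\xi)$, $\zeta_{n-i+1}(g^i(\xi)) = \zeta_{n+1}(\xi)$ from \eqref{eq:zetan} for the two inequalities. No issues.
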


\begin{proof}
We have $\zeta_n(\xi) > 0$ and $\zeta_{n+1}(\xi) \le 0$ by \eqref{eq:Rn}.
Thus $\zeta_{n-i} \left( g^i(\xi) \right) > 0$ and $\zeta_{n-i+1} \left( g^i(\xi) \right) \le 0$ by \eqref{eq:zetan}.
Also $g^i(\xi) \in \Phi$ by Proposition \ref{pr:PhiForwardInvariant}.
Thus $g^i(\xi) \in \cR_{n-i}$ by \eqref{eq:Rn}.
\end{proof}

\begin{lemma}
Let $\xi \in \Phi$ with $g(\xi) \in \cR_{n-1}$ for some $n \ge 1$.
Then $\xi \in \cR_n$.
\label{le:gBackwards}
\end{lemma}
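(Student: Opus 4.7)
The claim is essentially a bookkeeping exercise that unwinds the definitions, exploiting the semigroup structure of $g$. The plan is as follows.

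First I would note that $\xi \in \Phi$ is given as a hypothesis, so the only thing remaining to verify in the definition \eqref{eq:Rn} of $\cR_n$ is the pair of inequalities $\zeta_n(\xi) > 0$ and $\zeta_{n+1}(\xi) \le 0$. By Proposition \ref{pr:PhiForwardInvariant}, all iterates $g^k(\xi)$ lie in $\Phi$, so the quantities $\zeta_k(\xi) = \phi(g^k(\xi))$ are well defined for every $k \ge 0$.

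Next I would record the key identity
\begin{equation}
\zeta_k(g(\xi)) = \phi\bigl(g^k(g(\xi))\bigr) = \phi\bigl(g^{k+1}(\xi)\bigr) = \zeta_{k+1}(\xi),
\nonumber
\end{equation}
which follows immediately from the definition \eqref{eq:zetan}. The assumption $g(\xi) \in \cR_{n-1}$ gives, via \eqref{eq:Rn}, the two inequalities $\zeta_{n-1}(g(\xi)) > 0$ and $\zeta_n(g(\xi)) \le 0$. Applying the identity above with $k = n-1$ and $k = n$ translates these directly into $\zeta_n(\xi) > 0$ and $\zeta_{n+1}(\xi) \le 0$, respectively.

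Combining these with $\xi \in \Phi$ yields $\xi \in \cR_n$ by \eqref{eq:Rn}, completing the proof. There is no substantive obstacle here; the lemma simply expresses that $g$ acts as a shift on the index $n$ labelling the regions $\cR_n$, and together with Lemma \ref{le:gForwards} it sets up the bijective correspondence between forward and backward iteration needed in \S\ref{sub:RnProof} to establish Theorem \ref{th:Rn}.
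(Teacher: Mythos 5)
Your proof is correct and follows essentially the same route as the paper: both unwind the definition of $\cR_{n-1}$ applied to $g(\xi)$ and use the identity $\zeta_k(g(\xi)) = \zeta_{k+1}(\xi)$ from \eqref{eq:zetan} to obtain $\zeta_n(\xi) > 0$ and $\zeta_{n+1}(\xi) \le 0$, then conclude via \eqref{eq:Rn}. The extra remark invoking Proposition \ref{pr:PhiForwardInvariant} is harmless but not needed, since $\zeta_k$ is defined for any $\xi \in \mathbb{R}^4$.
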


\begin{proof}
We have $\zeta_{n-1}(g(\xi)) > 0$ and $\zeta_n(g(\xi)) \le 0$ by \eqref{eq:Rn}.
Thus $\zeta_n(\xi) > 0$ and $\zeta_{n+1}(\xi) \le 0$ by \eqref{eq:zetan}.
So $\xi \in \cR_n$ because also $\xi \in \Phi$.
\end{proof}

\begin{lemma}
Let $\xi \in \cR_n$ for some $n \ge 1$.
Then $\zeta_0(g(\xi)) > 0$.
\label{le:zeta0gxi}
\end{lemma}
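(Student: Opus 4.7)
My plan is to reduce the claim to the assertion that every $\xi \in \cR_n$ with $n \ge 1$ satisfies $\zeta_1(\xi) > 0$. This reduction is immediate from the definition $\zeta_k(\xi) = \phi(g^k(\xi))$ in \eqref{eq:zetan}, which gives $\zeta_0(g(\xi)) = \phi(g(\xi)) = \zeta_1(\xi)$.

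To prove $\zeta_1(\xi) > 0$, I would argue by contradiction, assuming $\zeta_1(\xi) \le 0$. The natural tool is Proposition \ref{pr:psiHat}, which asserts that $\hat\psi(\eta) \le 0$ implies $\hat\psi(g(\eta)) < 0$. Before invoking it, I first translate between $\zeta_1$ and $\hat\psi$. Lemma \ref{le:psizeta1Relationship} gives $\phi(g(\xi)) = \tau_R \lambda_R^{u^2} \psi(\xi)$, and combining with $\hat\psi(\xi) = \lambda_R^u \psi(\xi)$ this rewrites as $\zeta_1(\xi) = (\tau_R \lambda_R^u) \hat\psi(\xi)$. Because $\xi \in \Phi$ forces $\tau_R < -(\delta_R+1) < 0$ and $\lambda_R^u < -1 < 0$, the prefactor $\tau_R \lambda_R^u$ is strictly positive, so $\zeta_1(\eta)$ and $\hat\psi(\eta)$ share the same sign for every $\eta \in \Phi$.

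Hence the assumption $\zeta_1(\xi) \le 0$ is equivalent to $\hat\psi(\xi) \le 0$. Iterating Proposition \ref{pr:psiHat} along the orbit $g^k(\xi)$, which stays in $\Phi$ by Proposition \ref{pr:PhiForwardInvariant}, yields $\hat\psi(g^k(\xi)) < 0$ for every $k \ge 1$. Applying the sign identification at $\eta = g^k(\xi)$ converts this into $\zeta_1(g^k(\xi)) < 0$, i.e., $\phi(g^{k+1}(\xi)) < 0$, i.e., $\zeta_{k+1}(\xi) < 0$. Combined with the hypothesised $\zeta_1(\xi) \le 0$, this gives $\zeta_m(\xi) \le 0$ for every $m \ge 1$. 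In particular $\zeta_n(\xi) \le 0$, contradicting the defining inequality $\zeta_n(\xi) > 0$ of $\cR_n$ from \eqref{eq:Rn}.

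The only real obstacle is careful bookkeeping: tracking the index shift $\zeta_k \circ g = \zeta_{k+1}$ and verifying that the factor $\tau_R \lambda_R^u$ is positive so the sign of $\zeta_1$ is genuinely controlled by $\hat\psi$. Once these are in place, the lemma follows from a single forward-invariance statement already established in Proposition \ref{pr:psiHat}, so no further geometric input is required.
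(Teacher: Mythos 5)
Your proposal is correct and follows essentially the same route as the paper: both arguments rest on the sign equivalence between $\zeta_1$ and $\hat\psi$ together with the forward-invariance statement of Proposition \ref{pr:psiHat}. The paper applies the contrapositive directly (from $\zeta_1(g^{n-1}(\xi)) = \zeta_n(\xi) > 0$ back down to $\zeta_1(\xi) > 0$), whereas you run the same implication forwards inside a proof by contradiction; this is only a cosmetic difference.
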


\begin{proof}
We have $\zeta_n(\xi) > 0$ by \eqref{eq:Rn},
thus $\zeta_1 \left( g^{n-1}(\xi) \right) > 0$ by \eqref{eq:zetan}.
Thus $\zeta_1(\xi) > 0$ by Proposition \ref{pr:psiHat}
(recall the sign of $\zeta_1$ is the same as that of $\hat{\psi}$).
That is, $\zeta_0(g(\xi)) > 0$.
\end{proof}

\begin{lemma}
Let $\xi \in \Phi$ and write
$g^i(\xi) = \left( \tau_{L,i}, \delta_{L,i}, \tau_{R,i}, \delta_{R,i} \right)$ for each $i$.
Then $\tau_{L,2} > \tau_L^2 \tau_R^2$ and $\tau_{R,2} < \tau_L \tau_R$.
\label{eq:g2Bound}
\end{lemma}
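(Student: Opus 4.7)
The plan is to prove both inequalities by direct computation: apply the renormalisation $g$ twice via \eqref{eq:g}, then express the quantities $\tau_{L,2} - \tau_L^2\tau_R^2$ and $\tau_{R,2} - \tau_L\tau_R$ in forms whose signs are manifest from the defining inequalities of $\Phi$ in \eqref{eq:saddleSaddleRegion}, namely $\tau_L > \delta_L + 1 > 1$, $\tau_R < -(\delta_R + 1) < -1$, and $\delta_L,\delta_R > 0$.

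First I would record, from a single application of \eqref{eq:g}, that $\tau_{L,1} = \tau_R^2 - 2\delta_R$, $\delta_{L,1} = \delta_R^2$, $\tau_{R,1} = \tau_L\tau_R - \delta_L - \delta_R$, and $\delta_{R,1} = \delta_L\delta_R$. Applying \eqref{eq:g} once more gives
\begin{align*}
\tau_{L,2} &= (\tau_L\tau_R - \delta_L - \delta_R)^2 - 2\delta_L\delta_R,\\
\tau_{R,2} &= (\tau_R^2 - 2\delta_R)(\tau_L\tau_R - \delta_L - \delta_R) - \delta_R^2 - \delta_L\delta_R.
\end{align*}

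For the first inequality, I would expand the square and simplify to obtain
\begin{equation*}
\tau_{L,2} - \tau_L^2\tau_R^2 = -2\tau_L\tau_R(\delta_L + \delta_R) + \delta_L^2 + \delta_R^2.
\end{equation*}
Since $\tau_L > 0$ and $\tau_R < 0$, we have $\tau_L\tau_R < 0$, so the first term is positive; the remaining $\delta_L^2 + \delta_R^2$ is plainly positive. This yields $\tau_{L,2} > \tau_L^2\tau_R^2$ immediately.

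For the second inequality, expanding the product and collecting terms gives
\begin{equation*}
\tau_{R,2} - \tau_L\tau_R = \tau_L\tau_R\bigl(\tau_R^2 - 2\delta_R - 1\bigr) - (\delta_L + \delta_R)\bigl(\tau_R^2 - \delta_R\bigr).
\end{equation*}
The bound $\tau_R < -(\delta_R + 1)$ gives $\tau_R^2 > (\delta_R + 1)^2 = \delta_R^2 + 2\delta_R + 1$, so $\tau_R^2 - 2\delta_R - 1 > \delta_R^2 > 0$ and also $\tau_R^2 - \delta_R > \delta_R^2 + \delta_R + 1 > 0$. Combined with $\tau_L\tau_R < 0$ and $\delta_L + \delta_R > 0$, both terms on the right are negative, so $\tau_{R,2} < \tau_L\tau_R$. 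There is no real obstacle here beyond careful algebra; the only minor subtlety is choosing to factor the second expression as a combination of the two quantities $\tau_R^2 - 2\delta_R - 1$ and $\tau_R^2 - \delta_R$, each of which is controlled by the single bound $\tau_R^2 > (\delta_R + 1)^2$.
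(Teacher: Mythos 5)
Your proposal is correct and follows essentially the same route as the paper: compute $g^2(\xi)$ explicitly from \eqref{eq:g} and read off the signs from the defining inequalities of $\Phi$. The only cosmetic difference is in the second inequality, where the paper avoids the full expansion by chaining the one-step estimates $\tau_{R,2} < \tau_{L,1}\tau_{R,1} \le \tau_{R,1} < \tau_L\tau_R$ (using $\tau_{L,1} > 1$ and $\tau_{R,1} < 0$), but your explicit factorisation establishes the same bound.
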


\begin{proof}
By \eqref{eq:g},
\begin{equation}
\tau_{L,2} = \tau_{R,1}^2 - 2 \delta_{R,1} = \left( \tau_L \tau_R - \delta_L - \delta_R \right)^2 - 2 \delta_L \delta_R \,, \nonumber
\end{equation}
which can be rearranged as
\begin{equation}
\tau_{L,2} = \left( \tau_L \tau_R - \delta_L \right)^2
+ \left( \tau_L \tau_R - \delta_R \right)^2
- \tau_L^2 \tau_R^2 \,.
\nonumber
\end{equation}
Then from the bounds in \eqref{eq:saddleSaddleRegion} we obtain $\tau_{L,2} > \tau_L^2 \tau_R^2$.
Also
\begin{equation}
\tau_{R,2} = \tau_{L,1} \tau_{R,1} - \delta_{L,1} - \delta_{R,1} < \tau_{L,1} \tau_{R,1} \,.
\nonumber
\end{equation}
By substituting $\tau_{L,1} > 1$ and $\tau_{R,1} > \tau_L \tau_R$ we obtain
$\tau_{R,2} < \tau_L \tau_R$.
\end{proof}

\begin{proof}[Proof of Theorem \ref{th:Rn}]
Suppose for a contradiction that the $\cR_n$ are {\em not} mutually disjoint.
So there exists $\xi \in \cR_m \cap \cR_n$ for some $0 \le m < n$.
This implies $g^{n-1}(\xi) \in \cR_1$ by Lemma \ref{le:gForwards},
and so $\hat{\psi}(g^{n-1}(\xi)) > 0$ (the sign of $\zeta_1$ is the same as that of $\hat{\psi}$).
Also $g^m(\xi) \in \cR_0$, so $\hat{\psi}(g^m(\xi)) \le 0$.
By Proposition \ref{pr:psiHat}, $\hat{\psi}(g^{m+i}(\xi)) \le 0$ for all $i \ge 0$.
In particular $\hat{\psi}(g^{n-1}(\xi)) \le 0$, and this is a contradiction.
Therefore the $\cR_n$ are mutually disjoint.

Now choose any $\xi \in \Phi_{\rm BYG}$.
To verify \eqref{eq:RnUnion} we show there exists $n \ge 0$ such that $\xi \in \cR_n$.
Certainly this is true if $\hat{\psi}(\xi) \le 0$, because in this case $\xi \in \cR_0$,
so let us assume $\hat{\psi}(\xi) > 0$.
In view of Lemma \ref{eq:g2Bound}, we consider the map $\tilde{g} : \mathbb{R}^2 \to \mathbb{R}^2$ defined by
\begin{equation}
\tilde{g}(\tau_L,\tau_R) = \left( \left( \tau_L \tau_R \right)^2, \tau_L \tau_R \right).
\nonumber
\end{equation}
For any $j \ge 1$ the $j^{\rm th}$-iterate of $\tilde{g}$ is given explicitly by
\begin{equation}
\tilde{g}^j(\tau_L,\tau_R) = \left( \left( \tau_L \tau_R \right)^{2 k_j}, \left( \tau_L \tau_R \right)^{k_j} \right),
\nonumber
\end{equation}
where $k_j = 3^{j-1}$.
Then Lemma \ref{eq:g2Bound} implies
$\tau_{R,2 j} < \left( \tau_L \tau_R \right)^{k_j}$ (using the notation of Lemma \ref{eq:g2Bound})
and so $\tau_{R,2 j} \to -\infty$ as $j \to \infty$.
Thus there exists $m \ge 0$ such that $\tau_{R,m} \le -2$.
Then $\hat{\psi}(g^m(\xi)) < 0$ by Lemma \ref{le:betaBounds}.
Now let $n \in \{ 1,2,\ldots, m \}$ be the smallest integer for which $\hat{\psi}(g^n(\xi)) \le 0$.
Then $\hat{\psi}(g^{n-1}(\xi)) > 0$, so $\phi(g^n(\xi)) > 0$.
That is, $g^n(\xi) \in \cR_0$.
Hence $\xi \in \cR_n$, by $n$ applications of Lemma \ref{le:gBackwards}.
This completes our verification of \eqref{eq:RnUnion}.

To show that $\cR_j$ is non-empty for all $j \ge 0$, first observe $\hat{\psi}(\xi^*) > 0$.
Also $\cR_0$ is certainly non-empty.
So for any $j \ge 1$ we can choose $\xi \in \Phi_{\rm BYG}$ sufficiently close to $\xi^*$
that $\hat{\psi}(g^i(\xi)) > 0$ for all $i = 0,1,\ldots,j-1$.
Again let $n \ge 1$ be the smallest integer for which $\hat{\psi}(g^n(\xi)) \le 0$.
Then $n \ge j$ and $g^n(\xi) \in \cR_0$.
Thus $g^{n-j}(\xi) \in \cR_j$ (by again using Lemma \ref{le:gBackwards}), i.e.~$\cR_j$ is non-empty.

Finally, choose any $\ee > 0$ and
let $B_\ee(\xi^*)$ be the open ball in $\mathbb{R}^4$ centred at $\xi^*$ and with radius $\ee$ using the Euclidean norm.
We now show there exists $m \ge 1$ such that $\cR_n \subset B_\ee(\xi^*)$ for all $n > m$.
This will prove that $\cR_n \to \{ \xi^* \}$ as $n \to \infty$.
Choose any $\xi \in \Phi$ with $\xi \notin B_\ee(\xi^*)$.
It is simple exercise to show that $|\tau_L \tau_R| \ge 1 + \frac{\ee}{\sqrt{2}}$.
Thus, as above, there exists $m \ge 0$ such that $\tau_{R,m} \le -2$
and $\xi \in \cR_n$ for some $n \le m$.
Hence for any $n > m$ the region $\cR_n$ contains no points outside of $B_\ee(\xi^*)$.
That is $\cR_n \subset B_\ee(\xi^*)$ for all $n > m$ and therefore $\cR_n \to \{ \xi^* \}$ as $n \to \infty$.
\end{proof}

\section{Positive Lyapunov exponents}
\label{sec:lyap}
\setcounter{equation}{0}


For smooth maps Lyapunov exponents are usually defined in terms of the derivative of the map.
The border-collision normal form $f_\xi$ is not differentiable on $x=0$,
so instead we work with one-sided directional derivatives, \S\ref{sub:osdd}.
We then define Lyapunov exponents in terms of these derivatives, \S\ref{sub:lyap}.
This definition coincides with the familiar interpretation of Lyapunov exponents as the
asymptotic rate of separation of nearby forward orbits \cite{Si20e}.
Then in \S\ref{sub:R0Proof} we prove Theorem \ref{th:R0}.

\subsection{One-sided directional derivatives}
\label{sub:osdd}

\begin{definition}
The {\em one-sided directional derivative} of a function $F : \mathbb{R}^2 \to \mathbb{R}^2$
at $z \in \mathbb{R}^2$ in a direction $v \in \mathbb{R}^2$ is
\begin{equation}
\rD_v^+ F(z) = \lim_{\delta \to 0^+} \frac{F(z + \delta v) - F(z)}{\delta},
\label{eq:Dplus}
\end{equation}
if this limit exists.
\end{definition}

The following result tells us that one-sided directional derivatives
of the $n^{\rm th}$ iterate of \eqref{eq:f} exist everywhere and for all $n \ge 1$.
This follows from the piecewise-linearity and continuity of \eqref{eq:f}.
For a proof see \cite{Si20e}.

\begin{lemma}
For any $\xi \in \mathbb{R}^4$, $z \in \mathbb{R}^2$, $v \in \mathbb{R}^2$, and $n \ge 1$,
$\rD_v^+ f_\xi^n(z)$ exists.
\label{le:DplusExists}
\end{lemma}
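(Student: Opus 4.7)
The plan is to prove existence directly by leveraging the piecewise-linear polyhedral structure of $f_\xi^n$. First I would establish by induction on $n$ that each $f_\xi^n$ is continuous and piecewise-linear with a finite affine-linearity partition: there exist finitely many closed convex polyhedra $P_1,\ldots,P_m \subset \mathbb{R}^2$ (depending on $n$ and $\xi$) whose union is $\mathbb{R}^2$ and such that $f_\xi^n\big|_{P_k}$ is the restriction of a single affine map $x \mapsto A_k x + b_k$ for each $k$. The base case $n=1$ is immediate from \eqref{eq:f} with $P_1 = \{x \le 0\}$ and $P_2 = \{x \ge 0\}$; the inductive step amounts to intersecting the pull-back under $f_\xi$ of the $n$-th partition with these two closed half-planes, and continuity of $f_\xi^{n+1}$ is inherited from that of $f_\xi$.

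Fix $z, v \in \mathbb{R}^2$ and $n \ge 1$. The key intermediate step is to produce $\delta_0 > 0$ and an index $k^*$ such that the whole segment $\{z + \delta v \mid \delta \in [0, \delta_0]\}$ lies in $P_{k^*}$. Since the $P_k$ cover $\mathbb{R}^2$ and there are only finitely many of them, a pigeonhole on any sequence of positive scalars tending to $0$ yields a subsequence $\delta_j \to 0^+$ and a single $k^*$ with $z + \delta_j v \in P_{k^*}$ for all $j$. Closedness of $P_{k^*}$ then forces $z \in P_{k^*}$, and convexity of $P_{k^*}$ gives $[z, z + \delta_j v] \subset P_{k^*}$ for every $j$; taking $\delta_0 = \sup_j \delta_j > 0$ yields the desired segment.

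Once this segment is secured, the conclusion is immediate: on $[z, z + \delta_0 v]$ the map $f_\xi^n$ is affine with linear part $A_{k^*}$, so
\begin{equation}
\frac{f_\xi^n(z + \delta v) - f_\xi^n(z)}{\delta} = A_{k^*} v
\nonumber
\end{equation}
is constant in $\delta \in (0, \delta_0]$, and therefore $\rD_v^+ f_\xi^n(z) = A_{k^*} v$ exists. The main obstacle is the inductive confirmation of the polyhedral structure of $f_\xi^n$; the subsequent pigeonhole-plus-convexity step is routine once the pieces are identified as closed convex sets. An alternative inductive route using a one-sided chain rule together with the identity $A_L(\xi) v = A_R(\xi) v$ whenever $v_1 = 0$ would also work, but requires a case split on the signs of the first components of $f_\xi^n(z)$ and $\rD_v^+ f_\xi^n(z)$, so I prefer the global polyhedral argument as it delivers the derivative in closed form without any case analysis.
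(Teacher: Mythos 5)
Your proof is correct. Note that the paper does not actually prove this lemma: it remarks only that the result ``follows from the piecewise-linearity and continuity of \eqref{eq:f}'' and defers the proof to the reference [Si20e]. Your argument is a sound, self-contained instantiation of exactly that idea: the inductive polyhedral decomposition of $f_\xi^n$ into finitely many closed convex pieces on which the map is affine is valid (including when $\delta_L$ or $\delta_R$ vanishes, since preimages of half-spaces under possibly singular affine maps are still convex polyhedra), and the pigeonhole-plus-convexity step correctly produces a segment $[z, z+\delta_0 v]$ inside a single piece, on which the difference quotient is constant and equal to $A_{k^*} v$.
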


\subsection{Lyapunov exponents}
\label{sub:lyap}

In view of Lemma \ref{le:DplusExists} we can use the following definition.

\begin{definition}
The {\em Lyapunov exponent} of $f_\xi$ at $z \in \mathbb{R}^2$ in a direction $v \in \mathbb{R}^2$ is
\begin{equation}
\lambda(z,v) = \limsup_{n \to \infty} \frac{1}{n} \ln \left( \left\| \rD_v^+ f_\xi^n(z) \right\| \right).
\label{eq:limsup}
\end{equation}
\end{definition}

If the forward orbit of $z$ does not intersect $x=0$,
then $\rD f_\xi^n(z)$ (the Jacobian matrix of $f_\xi^n$ at $z$) is well-defined for all $n \ge 1$.
Moreover, $\rD_v^+ f_\xi^n(z) = \rD f_\xi^n(z) v$, so in this case \eqref{eq:limsup} reduces to the usual
expression given for smooth maps.

The following result is Theorem 2.1 of \cite{GlSi21},
except in \cite{GlSi21} only forward orbits that do not intersect $x=0$ were considered.
The generalisation to one-sided directional derivatives is elementary so we do not provide a proof.
The proof in \cite{GlSi21} is achieved by constructing an invariant expanding cone
for multiplying vectors $v$ under the matrices $A_L$ and $A_R$.
The derivative in \eqref{eq:limsup} can be written as $v$ left-multiplied by $n$ matrices
each of which is either $A_L$ or $A_R$.
The cone implies the vector increases in norm each time it is multiplied by $A_L$ or $A_R$,
so certainly the norm increases on average, i.e.~$\lambda(z,v) > 0$.

\begin{proposition}
For any $\xi \in \Phi_{\rm BYG}$, $z \in \mathbb{R}^2$, and $v = (1,0)$,
\begin{equation}
\liminf_{n \to \infty} \frac{1}{n} \ln \left( \left\| \rD_v^+ f_\xi^n(z) \right\| \right) > 0.
\label{eq:liminf}
\end{equation}
\label{pr:lyap}
\end{proposition}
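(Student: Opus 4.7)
The plan is to adapt the invariant-cone argument of \cite{GlSi21} and check that it remains valid for one-sided directional derivatives. Because $f_\xi$ is piecewise-linear, for any $z \in \mathbb{R}^2$ and any $v \in \mathbb{R}^2$ one can write
\begin{equation}
\rD_v^+ f_\xi^n(z) = A_{\sigma_n}(\xi) \, A_{\sigma_{n-1}}(\xi) \cdots A_{\sigma_1}(\xi) \, v,
\nonumber
\end{equation}
where the symbol sequence $\sigma_1,\sigma_2,\ldots \in \{L,R\}$ is determined by the forward orbit of $z$ at iterates where the orbit lies strictly in one half-plane, and, at iterates $k$ where $f_\xi^k(z)$ lies on the switching line $x=0$, is determined by the sign of the first component of the intermediate directional derivative $\rD_v^+ f_\xi^k(z)$. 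This unfolding is what replaces the chain-rule identity $\rD f_\xi^n(z) = \prod \rD f_\xi(f_\xi^j(z))$ in the smooth case, and is the only place where one-sidedness enters.

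Next, I would import verbatim from \cite{GlSi21} the invariant expanding cone $\mathcal{C} \subset \mathbb{R}^2$: a closed cone with the properties that (i) $A_L(\xi) \mathcal{C} \subset \mathcal{C}$, (ii) $A_R(\xi) \mathcal{C} \subset \mathcal{C}$, and (iii) there exists $\mu = \mu(\xi) > 1$ such that $\|A_L(\xi) w\| \ge \mu \|w\|$ and $\|A_R(\xi) w\| \ge \mu \|w\|$ for every $w \in \mathcal{C}$. Such a cone exists for all $\xi \in \Phi_{\rm BYG}$; its construction in \cite{GlSi21} uses the conditions \eqref{eq:saddleSaddleRegion} together with $\phi(\xi) > 0$, and is purely a statement about the two matrices $A_L(\xi)$ and $A_R(\xi)$ — it does not refer to any particular orbit of $f_\xi$, so it applies regardless of whether the itinerary $\sigma_1, \sigma_2, \ldots$ comes from a smooth orbit or from the one-sided convention above.

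Then I would check that the specific vector $v = (1,0)$ lies in $\mathcal{C}$ (or else enters $\mathcal{C}$ after finitely many matrix multiplications, after which the argument proceeds from that iterate onward with a harmless additive constant in the $\log$). Since $\mathcal{C}$ in \cite{GlSi21} is the cone bounded by the unstable eigendirections of $A_L$ and $A_R$ — both of which have strictly negative slope because $\delta_L, \delta_R > 0$ — the horizontal direction $(1,0)$ lies in $\mathcal{C}$ directly. Combined with (iii), this yields $\|\rD_v^+ f_\xi^n(z)\| \ge \mu^n \|v\|$ for all $n \ge 1$, whence
\begin{equation}
\liminf_{n \to \infty} \frac{1}{n} \ln \left( \left\| \rD_v^+ f_\xi^n(z) \right\| \right) \;\ge\; \ln \mu \;>\; 0,
\nonumber
\end{equation}
as required.

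The only subtle point — and the step I expect to require the most care — is handling the orbits that visit the switching line. At such an iterate $k$ the choice of $\sigma_{k+1}$ depends on $\rD_v^+ f_\xi^k(z)$ rather than on $f_\xi^k(z)$ itself, so one must verify that \eqref{eq:Dplus} gives precisely the product of matrices chosen by this rule. This reduces to a short continuity computation: the left and right linear pieces agree on $x=0$, so $\rD_v^+ f_\xi(p) = A_L(\xi) v$ or $A_R(\xi) v$ according to the sign of the first component of $v$ (or either one if that component is zero, since the two give the same value on the kernel of the switching functional). Iterating this observation gives the matrix product above and legitimises applying the cone argument unchanged.
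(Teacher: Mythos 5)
Your proposal follows essentially the same route as the paper, which simply cites Theorem 2.1 of \cite{GlSi21} (the invariant expanding cone for products of $A_L$ and $A_R$) together with the observation that the one-sided directional derivative unfolds into exactly such a matrix product applied to $v$; your careful treatment of iterates on the switching line is precisely the ``elementary'' generalisation the paper declines to write out. One small correction: the unstable eigenvector of $A_R$ is $\left(1,-\delta_R/\lambda_R^u\right)$, which has \emph{positive} slope since $\lambda_R^u<0$, so the cone is bounded by one negative-slope and one positive-slope direction --- which is in fact why the horizontal direction $(1,0)$ lies inside it.
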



\subsection{Arguments leading to a proof of Theorem \ref{th:R0}}
\label{sub:R0Proof}

We are now ready to prove Theorem \ref{th:R0}.
Once we have constructed the set $\Delta$,
the equality \eqref{eq:LambdaAsInfiniteIntersection} follows from
the arguments given in the proof of Lemma 6.2 of \cite{GlSi21}.
We reproduce these arguments here for convenience.

\begin{proof}[Proof of Theorem \ref{th:R0}]
The set $\Lambda(\xi)$ is bounded because $X \in \Omega$ and $\Omega$ is bounded and forward invariant (Proposition \ref{pr:Omega}).
Also $\Lambda(\xi)$ is connected and invariant by the definition of an unstable manifold.
With $v = (1,0)$ and any $z \in \Lambda(\xi)$,
the Lyapunov exponent $\lambda(z,v)$ is well-defined by Lemma \ref{le:DplusExists}.
Moreover $\lambda(z,v) > 0$ by Proposition \ref{pr:lyap}
and because the supremum limit is greater than or equal to the infimum limit.

It remains for us to prove part (iii).
Here we assume $\delta_R < 1$; also $\delta_L < 1$ by Lemma \ref{le:deltaLge1}.
Since $\xi \in \cR_0$
we have $\zeta_1(\xi) \le 0$ and so $\psi(\xi) \ge 0$ by \eqref{eq:psi2}.
Thus $f_\xi^2(T)$ lies on or to the left of $E^s(X)$ by Proposition \ref{pr:psi}.
Let $Z$ denote the intersection of $E^s(X)$ with $\roverline{T f_\xi^2(T)}$
(the line segment connecting $T$ and $f_\xi^2(T)$).
Notice $\roverline{X T}$ and $\roverline{T Z}$ are subsets of $W^u(X)$
while $\roverline{Z X}$ is a subset of $W^s(X)$.

Let $\Delta_0$ be the filled triangle with vertices $X$, $T$, and $Z$, see Fig.~\ref{fig:igRN_X}-a.
Also let $\Delta = \bigcup_{n=0}^\infty f_\xi^n(\Delta_0)$.
The set $\Delta$ is forward invariant, by definition,
and has non-empty interior because it contains $\Delta_0$.
As in \cite{GlSi21}, let $\tilde{\Delta} = \bigcap_{n=0}^\infty f_\xi^n(\Delta)$.

We now show $\Lambda(\xi) \subset \tilde{\Delta}$.
Choose any $z \in \Lambda(\xi)$.
Let $\{ z_k \}$ be a sequence of points in $W^u(X)$ with $z_k \to z$ as $k \to \infty$.
For each $k$, $f_\xi^{-n}(z_k) \to X$ as $n \to \infty$,
thus there exists $n_k \ge 1$ such that $f_\xi^{-{n_k}}(z_k) \in \roverline{X T}$.
Thus $f_\xi^{-{n_k}}(z_k) \in \Delta_0$, so $z_k \in \Delta$.
This is true for all $k$, thus $z \in \Delta$.
But $z \in \Lambda(\xi)$ is arbitrary, thus $\Lambda(\xi) \subset \Delta$.
Also $\Lambda(\xi)$ is forward invariant, thus $\Lambda(\xi) \subset \tilde{\Delta}$.

Finally we show $\tilde{\Delta} \subset \Lambda(\xi)$.
The determinants $\delta_L$ and $\delta_R$ of the pieces of $f_\xi$ are both less than $1$,
thus the area (Lebesgue measure) of $f_\xi^n(\Delta)$ converges to $0$ as $n \to \infty$.
Now choose any $z \in \tilde{\Delta}$.
Then $z \in f_\xi^n(\Delta)$ for all $n \ge 0$ and so the distance of $z$ to the boundary of $f_\xi^n(\Delta)$ converges to $0$ as $n \to \infty$.
The boundary of $\Delta_0$ consists of $\roverline{X Z}$, which lies in the part of $W^s(X)$ that converges linearly to $X$,
and two line segments in $W^u(X)$.
Consequently the boundary of $f_\xi^n(\Delta_0)$ is contained in $\roverline{X f_\xi^n(Z)} \cup W^u(X)$ for all $n \ge 0$.
Thus the boundary of $\Delta$ is contained in $\roverline{Z f_\xi(Z)} \cup W^u(X)$,
so the boundary of $f_\xi^n(\Delta)$ is contained in $\roverline{f_\xi^n(Z) f_\xi^{n+1}(Z)} \cup W^u(X)$ for all $n \ge 0$.
But $\roverline{f_\xi^n(Z) f_\xi^{n+1}(Z)}$ converges to $X$, hence the distance of $z$ to $W^u(X)$ must be $0$.
Thus $z \in \Lambda(\xi)$.
But $z \in \tilde{\Delta}$ is arbitrary, thus $\tilde{\Delta} \subset \Lambda(\xi)$.
This completes our demonstration of \eqref{eq:LambdaAsInfiniteIntersection}.
\end{proof}

\section{Implementing the renormalisation recursively}
\label{sec:mainProof}
\setcounter{equation}{0}

In this section we work towards a proof of Theorem \ref{th:affinelyConjugate}.
First in \S\ref{sub:OmegaPrime} we use the unstable manifold of $X$ to construct a triangle $\Omega'(\xi)$
that maps to $\Omega(g(\xi))$ under the affine transformation $h_\xi$
for converting $f_\xi^2$ to $f_{g(\xi)}$.
In particular we show that $\Omega'(\xi)$ is a subset of both $\Omega(\xi)$ and $\Pi_\xi$
and this allows us to implement the renormalisation recursively in \S\ref{sub:proofByInduction}.

\subsection{Properties of the set mapping to $\Omega(g(\xi))$}
\label{sub:OmegaPrime}

Suppose $\xi \in \Phi$ with $\zeta_1(\xi) > 0$ (equivalently $\psi(\xi) < 0$).
Then $f_\xi^2(T)$ lies to the right of $E^s(X)$ by Proposition \ref{pr:psi}.
Thus $f_\xi^3(T)$ lies to the left of $E^s(X)$ (because $\lambda_R^u < 0$).
Now let $Q$ denote the intersection of $E^u(X)$ with the line through $f_\xi^3(T)$ and parallel to $E^s(X)$, see Fig.~\ref{fig:igRN_X}-b.
Then let $\Omega'(\xi)$ be the filled triangle with vertices $f_\xi(T)$, $f_\xi^3(T)$, and $Q$.

\begin{lemma}
Let $\xi \in \Phi$ with $\zeta_1(\xi) > 0$.
Then
\begin{romanlist}
\item
\label{it:InPi}
$\Omega'(\xi) \subset \Pi_\xi$,
\item
\label{it:ImageDisjoint}
$\Omega'(\xi) \cap f_\xi \left( \Omega'(\xi) \right) = \varnothing$,
\item
\label{it:ImageInRight}
$f_\xi \left( \Omega'(\xi) \right) \subset \left\{ (x,y) \in \mathbb{R}^2 \,\big|\, x > 0 \right\}$,
\item
\label{it:MapsToOmega}
$h_\xi \left( \Omega'(\xi) \right) = \Omega(g(\xi))$,
\item
\label{it:InOmega}
and if $\zeta_0(\xi) > 0$ then $\Omega'(\xi) \subset \Omega(\xi)$.
\end{romanlist}
\label{le:OmegaPrime}
\end{lemma}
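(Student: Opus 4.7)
The plan is to exploit the conjugacy $f_\xi^2 = h_\xi^{-1} \circ f_{g(\xi)} \circ h_\xi$ on $\Pi_\xi$ from Proposition \ref{pr:conjugacy} together with the fact that $\Omega'(\xi)$ has been defined precisely so that $h_\xi$ sends its vertices to those of $\Omega(g(\xi))$. I would prove \ref{it:MapsToOmega} first, since the other parts follow from this correspondence. Since $X$ is the fixed point of $f_{R,\xi}^2$, which is $h_\xi$-conjugate to $f_{L,g(\xi)}$ (whose fixed point is $Y$ computed with parameters $g(\xi)$), one gets $h_\xi(X) = Y$ and $h_\xi(E^u(X)) = E^u(Y)$. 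A direct calculation from \eqref{eq:h} shows $h_\xi^{-1}(\{\tilde y = 0\}) = f_{R,\xi}(\{y=0\})$, so intersecting with $E^u(Y) = h_\xi(E^u(X))$ gives $h_\xi^{-1}(D) = f_{R,\xi}(T) = f_\xi(T)$. The conjugacy then yields $h_\xi(f_\xi^3(T)) = f_{g(\xi)}(h_\xi(f_\xi(T))) = f_{g(\xi)}(D)$, and because $h_\xi$ is an affine bijection it preserves incidence with lines through a given point parallel to a given direction, so it maps $Q$ to $B$. Thus $h_\xi$ sends the three vertices of $\Omega'(\xi)$ to those of $\Omega(g(\xi))$, which, together with affinity, gives \ref{it:MapsToOmega}.

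For \ref{it:InPi}, I would use that $\Pi_\xi$ is convex (its boundary in Fig.~\ref{fig:igRN_Pi} is a V-shaped graph with slopes $-\tau_L < 0$ for $x\le 0$ and $-\tau_R > 0$ for $x\ge 0$), so it suffices to check the three vertices. For $f_\xi(T)$, membership is equivalent to $f_\xi^2(T)_x \ge 0$; from \eqref{eq:f2T} the $y$-coordinate of $f_\xi^2(T)$ equals $-\delta_L(\lambda_R^u + 1)/(1-\lambda_R^s) > 0$, so $f_\xi^2(T)$ lies in the upper half-plane, and the hypothesis $\zeta_1(\xi) > 0$ (equivalently $\psi(\xi) < 0$ by Lemma \ref{le:psizeta1Relationship}) places it strictly right of $E^s(X)$ via Proposition \ref{pr:psi}; since $E^s(X)$ has positive slope $\lambda_R^s - \tau_R > 0$ and passes through $X$ with $X_1 > 0$, $X_2 < 0$, at heights $y > 0$ it lies at $x > X_1 > 0$. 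The vertices $f_\xi^3(T)$ and $Q$ are handled by analogous direct computations (for $f_\xi^3(T) = f_{R,\xi}(f_\xi^2(T))$ one tracks a further application of $f_R$; for $Q$ one uses its explicit coordinates along $E^u(X)$), and because the inequalities are strict, $\Omega'(\xi) \subset \mathrm{int}(\Pi_\xi)$.

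Part \ref{it:ImageInRight} is then immediate: $\mathrm{int}(\Pi_\xi)$ maps under $f_\xi$ strictly into $\{x > 0\}$ by the defining strict inequalities. Part \ref{it:ImageDisjoint} is the main obstacle. By \ref{it:ImageInRight}, any overlap with $\Omega'(\xi)$ would sit in $\Omega'(\xi) \cap \{x > 0\}$. I would use that the $y$-components of $f_L$ and $f_R$ are $-\delta_L x$ and $-\delta_R x$, so $f_\xi(\Omega'(\xi) \cap \{x \le 0\}) \subset \{y \ge 0\}$ and $f_\xi(\Omega'(\xi) \cap \{x \ge 0\}) \subset \{y \le 0\}$. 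After verifying that the right-half portion of $\Omega'(\xi)$ lies in $\{y \le 0\}$ (both $f_\xi^3(T)_y = -\delta_R f_\xi^2(T)_x < 0$ and, by a direct check, $Q_y \le 0$), one sees that $\Omega'(\xi) \cap \{x > 0\}$ is separated from the $f_L$-image by the $x$-axis; separating it from the $f_R$-image requires a transversality argument using the expansion along $E^u(X)$ (the $f_R$-iterate of the right portion is displaced further out along $E^u(X)$), and this is the technical crux of the lemma.

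For \ref{it:InOmega}, the triangle $\Omega(\xi)$ is convex, so it suffices to verify its three vertices. When $\zeta_0(\xi) > 0$, Lemma \ref{le:LambdaInOmega} gives $\Lambda(\xi) \subset \Omega(\xi)$; since $T \in W^u(X) \subset \Lambda(\xi)$, the iterates $f_\xi(T)$ and $f_\xi^3(T)$ lie in $\Omega(\xi)$. For $Q$, decomposing $f_\xi^3(T) - X = \alpha_u v^u + \alpha_s v^s$ in the eigenbasis of $A_R$ at $X$ gives $Q = X + \alpha_u v^u$; establishing $0 \le \alpha_u \le \beta$ (where $T = X + \beta v^u$) places $Q$ on the segment $\roverline{X T} \subset W^u(X) \subset \Omega(\xi)$, and this bound reduces via \eqref{eq:f2T} to a parameter condition that one can verify holds throughout the region $\zeta_0(\xi) > 0$.
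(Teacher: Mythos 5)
Your part (iv) and the first half of part (v) are sound and close to the paper's own reasoning (deducing $f_\xi(T),f_\xi^3(T)\in\Omega(\xi)$ from $T\in W^u(X)\subset\Lambda(\xi)$ and forward invariance is in fact a slightly cleaner route than the paper's). However, there are genuine gaps in parts (i), (ii), and the $Q$-vertex of part (v). In part (i) you verify only the vertex $f_\xi(T)$; the claim that $f_\xi^3(T)$ and $Q$ are ``handled by analogous direct computations'' does not hold up, because membership of $f_\xi^3(T)$ in $\Pi_\xi$ amounts to locating $f_\xi^4(T)$, which is not controlled by the hypothesis $\zeta_1(\xi)>0$ in the same direct way that $f_\xi^2(T)$ is. In part (ii) you explicitly stop at ``the technical crux'' without supplying an argument, so the disjointness is simply not proved. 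The idea you are missing --- and which the paper uses to dispatch (i), (ii) and (iii) in one stroke --- is the auxiliary triangle $\Psi$ with vertices $X$, $f_\xi(T)$ and $V$ (the intersection of $E^s(X)$ with $x=0$). Each vertex of $\Psi$ visibly maps into the right half-plane, so $\Psi\subset\Pi_\xi$ by convexity; two short ordering computations (the line through $T$ and $f_\xi^2(T)$ has negative slope, and $f_\xi(T)$ lies above $V$) place $f_\xi^3(T)$, and hence all of $\Omega'(\xi)$, inside $\Psi$; and $f_\xi(\Psi)$ is the quadrilateral with vertices $X$, $f_\xi(V)$, $f_\xi^2(T)$, $T$, which meets $\Psi$ only at the point $X\notin\Omega'(\xi)$. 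This gives (ii) immediately, and (iii) follows because $X$ is the left-most point of $f_\xi(\Psi)$.

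For part (v), the containment of $Q$ is likewise left as an unverified ``parameter condition''. The paper avoids any such computation by convexity: once $f_\xi(T)$, $f_\xi^3(T)$ and $X$ are known to lie in the convex set $\Omega(\xi)$, the triangle they span lies in $\Omega(\xi)$, and this triangle contains $\Omega'(\xi)$ (in particular $Q$, which sits on $E^u(X)$ between $X$ and $f_\xi(T)$). You should either adopt that convexity argument or actually carry out the eigenbasis bound you allude to; as written, part (v) is also incomplete.
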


\begin{proof}
Let $\Xi_R = \left\{ (x,y) \in \mathbb{R}^2 \,\big|\, x > 0 \right\}$ denote the open right half-plane
and let $\Psi$ be the triangle with vertices $X$, $f_\xi(T)$, and $V$.
We now prove parts \ref{it:InPi}--\ref{it:InOmega} in order.
\begin{romanlist}
\item
Observe $f_\xi(X) = X \in \Xi_R$, thus $X \in \Pi_\xi$ by \eqref{eq:Pi}.
Similarly $f_\xi(V) \in \Xi_R$, thus $V \in \Pi_\xi$.
Also $f_\xi^2(T) \in \Xi_R$, thus $f_\xi(T) \in \Pi_\xi$.
That is, all vertices of $\Psi$ belong to $\Pi_\xi$,
thus $\Psi \subset \Pi_\xi$ because these sets are convex.

From \eqref{eq:T} and \eqref{eq:f2T} we find that
the slope of the line through $T$ and $f_\xi^2(T)$ is
$\frac{-\delta_L}{\tau_L - \lambda_R^s}$, which is negative,
thus $f_\xi^2(T)$ lies to the left of $T$.
Consequently $f_\xi^3(T)$ lies above $f_\xi(T)$.
Also $f_\xi(T)$ lies above $V$ because
\begin{equation}
f_\xi(T)_2 - V_2 = \frac{1 - \delta_R}{\left( 1 - \lambda_R^s \right) \left( 1 - \frac{1}{\lambda_R^u} \right)} > 0.
\nonumber
\end{equation}
Therefore $f_\xi^3(T) \in \Psi$.
Thus $\Omega'(\xi) \subset \Psi \subset \Pi_\xi$.
\item
Observe $f_\xi(\Psi)$ is the quadrilateral with vertices
$X$, $f_\xi(V)$, $f_\xi^2(T)$, and $T$.
Thus $\Psi$ and $f_\xi(\Psi)$ intersect only at $X$.
But $\Omega'(\xi) \subset \Psi$ does not contain $X$,
thus $\Omega'(\xi) \cap f_\xi \left( \Omega'(\xi) \right) = \varnothing$.
\item
The left-most point of $f_\xi(\Psi)$ is $X \in \Xi_R$,
thus $f_\xi \left( \Omega'(\xi) \right) \subset f_\xi(\Psi) \subset \Xi_R$.
\item
For the map $f_\xi^2$, the fixed point $X$ is a saddle with positive eigenvalues.
Thus its unstable manifold has two dynamically independent branches.
The branch that emanates to the left has its first and second kinks at $f_\xi(T)$ and $f_\xi^3(T)$.
Let $\mathcal{B}$ denote this branch up to the second kink,
that is $\mathcal{B}$ is the union of the line segments $\roverline{X f_\xi(T)}$ and $\roverline{f_\xi(T) f_\xi^3(T)}$.

By the conjugacy relation \eqref{eq:conjugacy},
$h_\xi(\mathcal{B})$ is part of one branch of the unstable manifold of the analogous fixed point of $f_{g(\xi)}$.
Since $h_\xi$ flips points across the switching line \eqref{eq:oppositeSigns},
$h_\xi(\mathcal{B})$ is part of the unstable manifold
of $Y$ (for the map $f_{g(\xi)}$).
This branch has its first and second kinks at $D$ and $f_{g(\xi)}(D)$,
thus $h_\xi(\mathcal{B})$ is the union of the line segments $\roverline{Y D}$ and $\roverline{D f_{g(\xi)}(D)}$.
By similar reasoning $Q$ maps under $h_\xi$ to the point $B$ of $f_{g(\xi)}$.
This verifies part \ref{it:MapsToOmega}.
\item
The first components of $T$ and $D$
are $T_1 = \frac{1}{1 - \lambda_R^s}$ and $D_1 = \frac{1}{1 - \lambda_L^s}$.
Observe $0 < T_1 < D_1$, thus $T$ lies between $(0,0)$ and $D$.
By iterating these under $f_{R,\xi}$ we have that
$f_\xi(T)$ lies on the line segment connecting $(1,0)$ and $f_\xi(D)$.

Now suppose $\zeta_0(\xi) > 0$.
Then $f_\xi(T) \in \Omega(\xi)$ because $(1,0) \in \Omega(\xi)$, $f_\xi(D) \in \Omega(\xi)$, and $\Omega(\xi)$ is convex.
Moreover, $f_\xi^3(T) \in \Omega(\xi)$ because $\Omega(\xi)$ is forward invariant (Proposition \ref{pr:Omega}).
Also $X \in \Omega(\xi)$ by Lemma \ref{le:LambdaInOmega}.
Thus the triangle with vertices $f_\xi(T)$, $f_\xi^3(T)$, and $X$ is contained in $\Omega(\xi)$
(again by the convexity of $\Omega(\xi)$).
This triangle contains $\Omega'(\xi)$, thus $\Omega'(\xi) \subset \Omega(\xi)$ as required.
\end{romanlist}
\end{proof}


\subsection{Arguments leading to a proof of Theorem \ref{th:affinelyConjugate}}
\label{sub:proofByInduction}



\begin{proof}[Proof of Theorem \ref{th:affinelyConjugate}]
Let $I_n = \{ 0,1,\ldots, 2^n-1 \}$.
We use induction on $n$ to prove Theorem \ref{th:affinelyConjugate}
and show that
\begin{equation}
\text{if $\zeta_0(\xi) > 0$ then $S_i \subset \Omega(\xi)$ for all $i \in I_n$}.
\label{eq:affinelyConjugateProof0}
\end{equation}
With $n=0$ the statements in Theorem \ref{th:affinelyConjugate} are true trivially with $S_0 = \Lambda(\xi)$.
Also \eqref{eq:affinelyConjugateProof0} is true because $\zeta_0(\xi) > 0$
(since $\xi \in \cR_0$) and $S_0 \subset \Omega(\xi)$ by Lemma \ref{le:LambdaInOmega}.

Now suppose the result is true for some $n \ge 0$;
it remains for us to verify the result for $n+1$.
Choose any $\xi \in \cR_{n+1}$.
Then $g(\xi) \in \cR_n$ by Lemma \ref{le:gForwards}.
By the induction hypothesis applied to the point $g(\xi)$,
we have $g^{n+1}(\xi) \in \cR_0$ and there exist mutually disjoint sets
$\tilde{S}_0, \tilde{S}_1, \ldots, \tilde{S}_{2^n-1} \subset \mathbb{R}^2$
with $f_{g(\xi)} \left( \tilde{S}_i \right) = \tilde{S}_{(i+1) \,{\rm mod}\, 2^n}$ and 
\begin{equation}
f_{g(\xi)}^{2^n} \big|_{\tilde{S}_i} ~\text{is affinely conjugate to}~ f_{g^{n+1}(\xi)} \big|_{\Lambda(g^{n+1}(\xi))}
\label{eq:affinelyConjugateProof10}
\end{equation}
for all $i \in I_n$.
Also $\zeta_0(g(\xi)) > 0$ by Lemma \ref{le:zeta0gxi},
thus by \eqref{eq:affinelyConjugateProof0} the induction hypothesis also gives 
$\tilde{S}_i \subset \Omega(g(\xi))$ for all $i \in I_n$.

Let $S_{2i} = h_\xi^{-1} \left( \tilde{S}_i \right)$ for each $i \in I_n$
(these sets are mutually disjoint because $h_\xi$ is a homeomorphism).
Let $S_{2i+1} = f_\xi(S_{2i})$ for each $i \in I_n$
(these sets are mutually disjoint because $f_\xi$ is a homeomorphism).
For any $i,j \in I_n$ we have $S_{2i} \subset \Omega'(\xi)$ by Lemma \ref{le:OmegaPrime}\ref{it:MapsToOmega}
and $S_{2j+1} \cap \Omega'(\xi) = \varnothing$ by Lemma \ref{le:OmegaPrime}\ref{it:ImageDisjoint},
so $S_{2i} \cap S_{2j+1} = \varnothing$.
Therefore the sets $S_0, S_1, \ldots, S_{2^{n+1}-1}$ are mutually disjoint.

For each $i \in I_n$, $S_{2i} \subset \Pi_\xi$ by Lemma \ref{le:OmegaPrime}\ref{it:InPi}, so
\begin{equation}
f_\xi^2 \big|_{S_{2i}} ~\text{is affinely conjugate to}~ f_{g(\xi)} \big|_{\tilde{S}_i}
\label{eq:affinelyConjugateProof20}
\end{equation}
by Proposition \ref{pr:conjugacy}.
Also $f_\xi^2(S_{2i}) = S_{2i+2 \,{\rm mod}\, 2^{n+1}}$,
so $f_\xi(S_{2i+1}) = f_{R,\xi}(S_{2i+1}) = S_{2i+2 \,{\rm mod}\, 2^{n+1}}$
using also Lemma \ref{le:OmegaPrime}\ref{it:ImageInRight}.
Thus
\begin{equation}
f_\xi^2 \big|_{S_{2i+1}} ~\text{is affinely conjugate to}~ f_\xi^2 \big|_{S_{2i}}
\nonumber
\end{equation}
using $f_{R,\xi}$ as the affine transformation.
By further use of \eqref{eq:conjugacy} we have that 
$f_\xi^{2^{n+1}} \big|_{S_{2i}}$ and $f_\xi^{2^{n+1}} \big|_{S_{2i+1}}$
are affinely conjugate to $f_{g(\xi)}^{2^n} \big|_{\tilde{S}_i}$,
thus also to $f_{g^{n+1}(\xi)} \big|_{\Lambda(g^{n+1}(\xi))}$ by \eqref{eq:affinelyConjugateProof10}
(this verifies \eqref{eq:affinelyConjugate} for $n+1$).

The induction hypothesis also implies
\begin{equation}
\bigcup_{i=0}^{2^n-1} \tilde{S}_i = {\rm cl} \left( W^u(\gamma_n) \right),
\label{eq:affinelyConjugateProof40}
\end{equation}
where $\gamma_n$ is a periodic solution of $f_{g(\xi)}$ with symbolic itinerary $\cF^n(R)$.
By \eqref{eq:conjugacy},
$h_\xi^{-1}(\gamma_n)$ is a periodic solution of $f_\xi^2$.
Since $h_\xi$ flips the left and right half-planes, see \eqref{eq:oppositeSigns},
the symbolic itinerary of $h_\xi^{-1}(\gamma_n)$ is obtained by swapping $L$ and $R$'s in $\cF^n(R)$.
Then $\gamma_{n+1} = h_\xi^{-1}(\gamma_n) \cup f_\xi \left( h_\xi^{-1}(\gamma_n) \right)$
is a periodic solution of $f_\xi$ and
since $f_\xi \left( h_\xi^{-1}(\gamma_n) \right)$ is contained in the right half-plane
(Lemma \ref{le:OmegaPrime}\ref{it:ImageInRight})
its symbolic itinerary is obtained
by further replacing each $L$ with $LR$ and each $R$ with $RR$,
hence $\gamma_{n+1}$ has symbolic itinerary $\cF^{n+1}(R)$.
Also by \eqref{eq:affinelyConjugateProof20} and \eqref{eq:affinelyConjugateProof40},
\begin{equation}
\bigcup_{i=0}^{2^{n+1}-1} S_i = {\rm cl} \left( W^u(\gamma_{n+1}) \right),
\nonumber
\end{equation}
which verifies \eqref{eq:Siunion} for $n+1$.
Finally, if $\zeta_0(\xi) > 0$ then for all $i \in I_n$ we have
$S_{2i} \subset \Omega(\xi)$ by Lemma \ref{le:OmegaPrime}\ref{it:InOmega}
and $S_{2i+1} \subset \Omega(\xi)$ because $\Omega(\xi)$ is forward invariant
verifying \eqref{eq:affinelyConjugateProof0} for $n+1$.
\end{proof}

\section{Discussion}
\label{sec:conc}
\setcounter{equation}{0}


In this paper we have shown how part of the parameter space of \eqref{eq:f}
naturally divides into regions $\cR_0, \cR_1, \ldots$.
As demonstrated by Theorem \ref{th:affinelyConjugate},
renormalisation enables us to describe the dynamics in each $\cR_n$ with $n \ge 1$ based on knowledge of the dynamics in $\cR_0$.
Theorem \ref{th:R0} describes the dynamics in $\cR_0$, but is incomplete.
It remains to show the attractor $\Lambda$ is unique and satisfies stronger notions of chaos throughout $\cR_0$.
Also we would like to extend the results to high-dimensional maps.

Finally we comment on the analogy of Feigenbaum's constant for our renormalisation
by looking at the rate at which the regions $\cR_n$ converge to the fixed point $\xi^*$.
The $4 \times 4$ Jacobian matrix $\rD g(\xi^*)$ has exactly one unstable eigenvalue: $2$.
It follows that the diameter of $\cR_n$ divided by the diameter of $\cR_{n+1}$ tends, as $n \to \infty$, to the constant $2$.

\section*{Acknowledgements}

The authors were supported by Marsden Fund contract MAU1809,
managed by Royal Society Te Ap\={a}rangi.

\appendix

\end{document}